\newtheorem{remark}{Remark}
\newtheorem{theorem}{Theorem}
\newtheorem*{proposition*}{Proposition}
\newtheorem{lemma}{Lemma}
\newtheorem{corollary}{Corollary}
\newenvironment{sketch}{{\bf \emph{Sketch of proof.} }}{\hfill$\square$}
\newcommand{\triangleq}{\stackrel{\triangle}{=}}
\title{Dynamic online matching with budget refills}
\author{
  Maria Cherifa \\
  Université de Paris, MAP5, Paris, France\\
  CREST, ENSAE, Palaiseau, France\\
  Faiplay joint team\\
  \texttt{maria.cherifa@ensae.fr} \\
   \And
   Clément Calauzènes \\
   Criteo AI Lab, Paris, France  \\
    Fairplay joint team  \\
   \texttt{c.calauzenes@criteo.com} \\
   \AND
   Vianney Perchet \\
  CREST, ENSAE, Palaiseau, France \\
  Criteo AI Lab, Paris, France\\
  Fairplay joint team\\
\texttt{vianney.perchet@normalesup.org} \\
}
\begin{document}

\maketitle

\begin{abstract}
 Inspired by sequential budgeted allocation problems, we explore the online matching problem with budget refills. Specifically, we consider an online bipartite graph $G=(U,V,E)$, where the nodes in $V$ are discovered sequentially and nodes in $U$ are known beforehand. Each $u\in U$  is endowed with a budget $b_{u,t}\in \lN$ that dynamically evolves over time. Unlike the canonical setting, where budgets are fixed, many practical applications involve periodic budget refills. This added dynamic introduces a richer and more complex problem, which we investigate here. Intuitively, adding extra budgets in $U$ seems to ease the matching task, and our results support this intuition. In fact, for the stochastic framework considered where we analyze the matching size built by $\greedy$  algorithm on an Erdős–Rényi random graph, we show that the matching size generated by $\greedy$ converges with high probability to a solution of an explicit system of ordinary differential equations (ODE). Moreover, under specific conditions, the competitive ratio (performance measure of the algorithm) can even tend to $1$. For the adversarial part,  where the graph considered is deterministic and the algorithm used is $\balance$, the $b$-matching bound holds when the refills are scarce. However, when refills are regular, our results suggest a potential improvement in the algorithm performance. In both cases, $\balance$ algorithm manages to reach the performance of the upper bound on the adversarial graphs considered.
\end{abstract}

\section{Introduction}

Finding matchings in bipartite graphs is a fundamental problem that lies at the intersection of graph theory \citep{Godsil1981MatchingsAW,Zdeborová_2006}, network theory, and combinatorial optimization \citep{lovasz2009matching,schrijver2003combinatorial}, with far-reaching implications across a wide range of practical applications, particularly in operations research, where it is often referred to as the assignment problem (see also \cite{Grove}). Specifically, a bipartite graph, denoted as \( G=(U,V,E) \), consists of two distinct sets of nodes, \( U \) and \( V \), and a set of edges \( E \subseteq U \times V \). These graphs provide a natural representation for systems in which entities from one set are paired with entities from the other, modeling relationships, dependencies, or allocations. Solving the matching problem involves identifying optimal pairings between nodes in the two sets while satisfying specific constraints.
\paragraph{The standard online matching problem.} Recent real-world applications, particularly in online advertising, have generated significant interest in the online variant of the matching problem (see \cite{mehta_survey}). In this version, the graph \( G = (U, V, E) \) consists of two sets of nodes $U
=\iset{n}:=\{1,\ldots,n\}$ and $V=\iset{T}:=\{1,\ldots,T\}$ for $n, T\in\mathbb{N}^*$ and set of edges \( E \subseteq U \times V \). The nodes in \( U \) are known beforehand, while the nodes in \( V \) arrive sequentially along with their associated edges. Each node \( u \in U \) has a budget \( b_{u,t} \geq 0 \), with an initial budget \( b_{u,0} = 1 \) for all \( u \in U \).

When a node \( t \in V \) is observed, an online algorithm must decide whether to match it with a node \( u \in U \) such that \( (u, t) \in E \) and \( u \) have not yet been matched (i.e. \( b_{u,t} = 1 \)). Once a matching decision is made, it cannot be changed, making the process irreversible. The budget of each node \( u \in U \) at time \( t \in V \) evolves according to the following dynamics: 

\begin{align}
    b_{u,t} = \begin{cases}
        b_{u,t-1} - 1 & \text{if \( u \) is matched to \( t \)}, \\
        b_{u,t-1} & \text{otherwise}.
    \end{cases}
\end{align}
Building on the foundational online matching problem, several generalizations have been proposed, such as the \( b \)-matching setting where each node in \( U \) has a budget \( b_u > 1 \), allowing multiple matches per node~\citep{Kalyan93,Khuller}. A node in \( V \) can only be matched if its neighbor in \( U \) has remaining budget~\citep{KALYANASUNDARAM2000319,b_matching_albers,Albers2022TightBF}. Algorithms like \( \greedy \) and \( \balance \) address these extensions, with \( \balance \) matching to the neighbor with the highest remaining budget. Their performance is typically evaluated via competitive ratios, comparing against the optimal offline matching~\citep{mehta_survey,feldman}.

\paragraph{The online matching with budget refills setting.} Motivated by the dynamic nature of online advertising, we model the scenario where $U$ represents the pool of campaigns or ads available to advertisers, and the nodes in $V$ correspond to the advertising slots that arrive sequentially. Each slot has varying eligibility for a subset of campaigns, determined by factors such as geographic location, browsing history, and other relevant features. The main objective of the advertiser is to maximize the number of ads displayed. In practice, campaigns are not displayed only once but come with a predetermined budget for impressions; for example, a particular ad may be shown no more than 10,000 times per day. This budget can evolve over time with the possibility of allocating additional resources to certain campaigns at periodic intervals, thereby motivating the model explored in this work.

Formally, we consider the same graph $G = (U, V, E)$ as in the standard matching problem. However, unlike the $b$-matching setting, an additional layer of complexity arises due to the dynamic nature of the budget for each node $u \in U$ at time $t \in V$. The budget, denoted by $b_{u,t} \in \mathbb{N}$, decreases whenever $u$ is matched but can also be replenished according to a replenishment process governed by $\eta_{u,t} \in \mathbb{N}$. In this work, we focus on a simple replenishment model where budgets are refilled at a constant average rate over time. This dynamic introduces significant challenges in analyzing budget evolution and its impact on matching performance.

\begin{align}
    b_{u,t} = \begin{cases}
        b_{u,t-1} - 1 + \eta_{u,t} & \text{if \( u \) is matched to \( t \)}, \\
        b_{u,t-1} + \eta_{u,t} & \text{otherwise}.
    \end{cases}
\end{align}

\paragraph{The budget refills.} 
As mentioned earlier, budget refills in online matching are inspired by online advertising, where advertisers replenish budgets to sustain exposure and revenue. We consider two refill settings. In the stochastic case, refills follow a Bernoulli distribution with parameter $\frac{\beta}{n}$, where $\beta > 0$ and $n$ is the number of nodes in $U$, reflecting real-world uncertainty and focusing on sparse graphs. This simple model highlights key dynamics without the complexity of more advanced random processes, which could obscure important insights. In the adversarial case, the graph is chosen to hinder matching, while budgets are refilled deterministically with one unit every $m$ steps, where $m$ depends on $n$. Both settings show similar behavior: stochastic refills occur at expected rate $\beta/n$, while adversarial refills happen at fixed rate $1/m$. Together, these models offer insight into how refill dynamics impact online matching, especially in advertising contexts.

\paragraph{The contributions.} Our main contributions, addressing both the stochastic and adversarial frameworks, are summarized as follows:
\begin{itemize}
    \item In the stochastic framework, we analyze the asymptotic performance of the $\greedy$ algorithm on the Erdős–Rényi model. Our approach demonstrates that, with high probability, the discrete and random matching process closely follows the continuous and deterministic solution of a system of ordinary differential equations. Further, we analyze the stability of the stationary solution of this system to derive a closed-form expression for the performance of $\greedy$. Finally, in terms of competitive ratio, we establish a lower bound on the competitive ratio which depends on different parameters of the problem. Notably, this lower bound converges to $1$ as these parameters approach infinity, indicating near-optimal performance in such limiting cases.
    
    \item In the adversarial case with relatively many refills (i.e., $m$ is negligible compared to $\sqrt{T}$). We show that the initial budgets have no significant impact on the asymptotic competitive ratio. We also derive an upper bound for the competitive ratio of 
$\balance$, demonstrating that it is optimal for the specific graph used in our proof. This bound, which applies to any algorithm, is given by $ 1 - \frac{1 - \alpha}{e^{1 - \alpha}}$, where $\alpha \simeq 0.603$.
    \item In the adversarial case, with relatively few refills, i.e., when $m$ is of order (larger than) $\sqrt{T}$, we demonstrate that refills have a negligible effect on the competitive ratio of the $\balance$ algorithm, which matches that of the $b_0$-matching problem (i.e., the problem without refills). Interestingly, the refill frequency $1/m$ does not appear in the competitive ratio. Stated otherwise, the dominating effect is the initialization of the budgets. 
\end{itemize}


\section{The adversarial framework}
\label{sec: adv_section}
Early research on online matching in the adversarial setting evaluated algorithms on the worst-case instances. The simple $\greedy$ algorithm guarantees a $1/2$ competitive ratio, improving to $1 - 1/e$ under random arrival \cite{mehta_goel}. The seminal $\ranking$ algorithm \citep{karp_vazirani} achieves the optimal worst-case ratio of $1 - 1/e$ and performs even better with random arrivals \citep{Devanur_ranking, ranking_made_simple, mahdian_yan}. Extensions to $b$-matching, where each node in $U$ has a budget $b$, were initiated in \cite{KALYANASUNDARAM2000319}, which introduced the $\balance$ algorithm with a competitive ratio approaching $1 - 1/e$ as $b$ increases. Later work extended this to non-uniform budgets $b_u$ and showed similar guarantees based on the minimum budget $b_{\min}$ see \citep{b_matching_albers}.

\subsection{Model}
To study the online matching problem in the adversarial setting with budget refills, it is essential to impose some restrictions on the adversary's power. Therefore, we adopt the following assumptions regarding the model being considered:
\begin{enumerate}
    \item The sequence of refills $(\eta_{u,t})_{u\in U, t\in V}$ is a parameter of the problem, set and known in advance to a refill of one unit every $m$ time steps.
    \item Every node $t\in V$ has at least one neighbor in $U$. 
\end{enumerate}
  If the sequence of refills $(\eta_{u,t})_{u\in U, t\in V}$ was to be chosen in an adversarial fashion, then the adversary would simply set it to 0, reducing the problem to the classical $b$-matching problem. Moreover to prevent thus reduction, we assume that each node $t\in V$ has at least one neighbor in $U$. Furthermore, the choice of a refill of one unit every $m$ time steps comes from the motivating application of advertising, where advertisers usually renew their budget monthly or quarterly.  Additionally, considering a constant value for refills gives a clear and simple setting to disentangle the asymptotic effect of the refills versus the initialization of budgets. 

Formally, the subset of graphs from which the oblivious adversary can choose is the following,
\begin{align}
    \cG_{T,m} = \left\{ (U, V, E, (\eta_{u,t})_{u \in U, t \in V}) : \forall t \in V,~ \exists u \in U \text{such that}~\eta_{u,t} = \indicator{t \bmod m = 0} \text{ and } (u,t) \in E \right\}
\end{align}

The budget evolution for each $u\in U$ is dependent on whether the edge $(u,t)\in E$ is included in the online matching, which is indicated by the binary variable $x_{u,t} \in \{0,1\}$, as well as whether $t$ is a multiple of $m$. Formally, this dynamic can be expressed as follows:
\begin{align}
   \forall u\in U, ~~ b_{u,t} = b_{u,t-1} - x_{u,t} +\indicator{t~{\rm mod}~m=0} ~~~~~~\text{and}~~~~~ b_{u,0} = b_0 ~~\text{for some}~ b_0 \geq 1.
\end{align}

As a consequence, the online \emph{matching} on $G \in \cG_{T,m}$ generated by an algorithm $\alg$ is the subset of edges that can be represented by a binary matrix $\mathbf{x} \in \{0,1\}^{n\times T}$ that must satisfy the following constraints:
\begin{enumerate}
    \item \label{item:first} $\forall (u,t)\in U\times V, ~(u,t)\not\in E \Rightarrow x_{u,t} = 0$ ~~(only edges in $E$ can be matched).
    \item \label{item:second} $\forall t\in V, ~ \sum_{u\in U} x_{u,t} \leq 1$ ~~($V$-nodes can only be matched once).
    \item \label{item:third}$\forall (u,t)\in U\times V, ~b_{u,t-1} < 1 \Rightarrow x_{u,t} = 0$ ($U$-nodes need  positive budget to be matched).
\end{enumerate}

In online bipartite matching problems, the performance of an algorithm $\alg$ is evaluated by its competitive ratio, which is the ratio between the size of the matching $\alg$ has created and the largest possible matching in hindsight, also referred to as $\opt$ with matrix $\bx^*$. The rationale is that the optimal matching of some deterministic graph $G$ can be arbitrarily small. Hence, the constructed matching size alone does not provide any good insight on the `quality'' of an algorithm in the adversarial case. Formally, in the adversarial framework, the objective of the algorithm is to obtain the highest worst-case competitive ratio $\CR^{\rm adv}(\alg, \cG_{T,m})$, defined as follows:

\begin{center}
\begin{minipage}{.3\textwidth}
    \begin{center}
    \end{center}
    \begin{align*}
        \CR^{\rm adv}(\alg, \cG_{T,m}) = \min_{G \in \cG_{T,m}} \frac{\alg(G)}{\opt(G)}
    \end{align*}
\end{minipage}
\end{center}
~\\
where $\alg(G)=\sum_{u\in U}\sum_{t=1}^T,x_{u,t}$, and $\opt(G)=\sum_{u\in U}\sum_{t=1}^T,x^*_{u,t}$, are the sizes of the matching generated by $\alg 
$ and $\opt$ respectively.

 As previously highlighted, our analysis focuses on evaluating the $\balance$ algorithm within the mentioned model. We aim to dissect the impact of the initial budget $b_0$ and the refill process by parameterizing our results with $T$, which is both the finite horizon and the size of $V$. This choice slightly limits the adversary's power, as it cannot impact the length of the horizon $T$ by simply providing no edge for an arbitrary number of time steps.
    

\subsection{Main results}
\subsubsection{ Regime $m=\omega(\sqrt{T})$}\label{ssec:big_m}
Intuitively, in the regime $m = \omega(\sqrt{T})$, the performance within a single period of length $m$ can have a dominant influence on the overall $\CR$. While this is clearly the case when $m=T$, it also holds true even for $m=\omega(T)$. 
\begin{restatable}{theorem}{ubdeterministicbmatching}\label{thm:ub_deterministic_12}
Assuming the initial budgets are $b_{1,0} = b_{2,0} = \dots = b_{n,0} = b_0 \geq 1$.   If $ m = \omega(\sqrt{T})$ and $b_0(b_0+1)^{b_0}\leq m$,  then, 
   \begin{center}
       \begin{equation}
       \sup_{{\tt ALG}:\text{deterministic}} \CR^{\rm adv}(\alg, \cG_{T,m}) \leq 1 - \frac{1}{\left(1+\frac{1}{b_0}\right)^{b_0}} + o_{T}\left(1\right)\,
   \end{equation}
   \end{center}
The bound is reached for the graph defined in the proof.
\end{restatable}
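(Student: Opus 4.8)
The plan is to fix an arbitrary deterministic algorithm $\alg$ and exhibit an instance $G^\star=G^\star(\alg)\in\cG_{T,m}$ on which $\alg$ matches at most a $\bigl(1-(1+1/b_0)^{-b_0}+o_T(1)\bigr)$‑fraction of $\opt(G^\star)$; this suffices, since $\sup_{\alg}\CR^{\rm adv}(\alg,\cG_{T,m})=\sup_{\alg}\min_{G\in\cG_{T,m}}\alg(G)/\opt(G)$. The core of $G^\star$ is the classical deterministic lower‑bound instance for ordinary $b_0$‑matching underlying the optimality of $\balance$ in \cite{KALYANASUNDARAM2000319,b_matching_albers}: a bipartite graph $H_{b_0}$ with $n_0=(b_0+1)^{b_0}$ offline nodes, each of initial budget $b_0$ and \emph{no} refill, and exactly $N_0=b_0(b_0+1)^{b_0}$ online arrivals chosen against $\alg$, such that $\opt(H_{b_0})=N_0$ while $\alg$ builds a matching of size at most $\bigl(1-(1+1/b_0)^{-b_0}\bigr)N_0$ on it. (For a deterministic $\alg$ an adaptive adversary is just a fixed graph, and disjoint copies of $H_{b_0}$ on fresh offline nodes can be concatenated, the bound holding separately in each copy.)

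First I would fill the initial window $\{1,\dots,m\}$ with $k:=\lfloor m/N_0\rfloor$ disjoint copies of $H_{b_0}$ on $k\,n_0$ dedicated offline nodes. Since the budget seen by arrival $t$ is the pre‑update value $b_{u,t-1}$, the refill at $t=m$ is invisible to arrivals $1,\dots,m$, so these arrivals experience exactly the plain $b_0$‑matching dynamics and the cited bound applies in each copy; the hypothesis $b_0(b_0+1)^{b_0}\le m$ is precisely what makes $k\geq 1$ (and $k\to\infty$ as $m\to\infty$). Thus on this window $\opt$ collects $G:=kN_0=\Theta(m)$ edges while $\alg$ collects at most $\bigl(1-(1+1/b_0)^{-b_0}\bigr)G$. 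The remaining $T-G=\Theta(T)$ arrivals still need a neighbour but must be kept from letting $\alg$ recover: the idea is to attach each of them to a small pool $W$ of ``sink'' offline nodes held essentially empty, so that these filler arrivals are matchable neither by $\alg$ nor by $\opt$. Every offline node receives one unit at each multiple of $m$, so I would spend, in each period after the first, one degree‑one ``drain'' arrival per sink right after its refill (and $b_0+1$ drains per sink at the start of period $2$); then every sink sits at budget $0$ for all but $O(b_0+1)$ steps of each period and can host $\Theta(m)$ filler arrivals per period.

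It is in balancing draining cost against absorption capacity that $m=\omega(\sqrt T)$ enters. Taking $|W|$ of order $m/\sqrt T$ (which is $\omega(1)$ and $\leq m$ for $T$ large), the total number of drain arrivals over the $\Theta(T/m)$ periods is $|W|\cdot\Theta(T/m)=\Theta(\sqrt T)=o(m)$, whereas the total idle capacity of $W$ is $|W|\cdot\Theta(m)\cdot\Theta(T/m)=\Theta(m\sqrt T)\geq T$, so all $\Theta(T)$ filler arrivals can be parked on the sinks. Moreover each sink hands $\opt$ (or $\alg$) at most its total budget $b_0+O(T/m)=o(m)$ extra matches on filler arrivals, using $T/m=o(m)$, again from $m=\omega(\sqrt T)$. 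The count then closes: $\opt(G^\star)=G+o(m)=\Theta(m)$ and $\alg(G^\star)\leq\bigl(1-(1+1/b_0)^{-b_0}\bigr)G+(\text{drains})+o(m)=\bigl(1-(1+1/b_0)^{-b_0}\bigr)G+o(m)$, so dividing gives $\CR^{\rm adv}(\alg,\cG_{T,m})\leq 1-(1+1/b_0)^{-b_0}+o_T(1)$. Also $G^\star\in\cG_{T,m}$: it has exactly $T$ arrivals each with a neighbour, uses $k\,n_0+|W|=O(m)\leq T$ offline nodes, and carries the prescribed refill sequence.

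I expect the real obstacle to be not this accounting but the precise construction and timing of the sink/filler part: one must interleave the forced refills, the drain arrivals and the filler arrivals so that every sink is provably at budget $0$ exactly when a filler points at it, uniformly over all sinks and all $\Theta(T/m)$ periods, and one must rule out that $\opt$ gains more than $o(m)$ by deviating (e.g.\ declining to drain a sink so as to bank its refills for later filler arrivals). The secondary point is to pin down the classical $b_0$‑matching lower‑bound instance $H_{b_0}$ in the exact form used above — $N_0=b_0(b_0+1)^{b_0}$ arrivals, $\opt(H_{b_0})=N_0$ — so that one whole copy genuinely fits inside a single refill‑free window.
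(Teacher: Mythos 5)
Your proposal is correct and follows essentially the same route as the paper: pack $\lfloor m/(b_0(1+b_0)^{b_0})\rfloor$ disjoint copies of the Kalyanasundaram--Pruhs hard instance into the refill-free window of length $m$ (using the hypothesis $b_0(b_0+1)^{b_0}\le m$), then attach the remaining $\Theta(T)$ arrivals to offline nodes whose total available budget is $o(m)$, so that the ratio from the first window survives up to $o_T(1)$. The only difference is in the tail: the paper simply wires all remaining arrivals to a single already-depleted node (giving at most $\lfloor T/m\rfloor=o(\sqrt{T})$ extra matches to both $\alg$ and $\opt$), so your pool of sinks with drain arrivals is an unnecessary complication — and the budget-conservation bound you already invoke makes the interleaving worries in your last paragraph moot.
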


\begin{sketch}
    The complete proof is provided in \cref{app:proof_ub_deterministic_12}. It relies on using $d=\left\lfloor\frac{m}{|V_K|}\right\rfloor$ duplicates of the graph $G_K = (U_K, V_K, E_K)$ presented in \cite{KALYANASUNDARAM2000319}, where the size depends on $b_0$. More precisely, it uses $d$ copies of $G_K$ at the beginning of the process  and during the remaining time $T-m$ only one node $\tilde{u}$ from $U$ is connected with all the remaining edges in $V$ (see \cref{fig:graph_ub_adv2} for illustration). Then, the number of edges matched by $\alg$ and $\opt$ during these $T-m$ last steps is the same, denoted $\gamma_T$ which is at most $\left\lfloor\frac{T}{m}\right\rfloor$ as it relies on the refills of $\tilde{u}$ only (see \cref{app:proof_ub_deterministic_12} for more details).
Thus,

  $$  \CR^{\rm adv}(\alg, \cG_{T,m})  \leq \frac{d\alg(G_{K}) + \gamma_T}{d\opt(G_{K}) + \gamma_T}
        $$

Since $\gamma_T=o(\sqrt{T})$, we can conclude that, 
$$
\CR^{\rm adv}(\alg, \cG_{T,m})  \leq 1 - \frac{1}{\left(1+\frac{1}{b_0}\right)^{b_0}} + o_{T}\left(1\right)\,
$$

\end{sketch}
\subsubsection{Regime $m=o(\sqrt{T})$}\label{ssec:small_m}
In the regime $m = o(\sqrt{T})$, where the refills dominate the initialization, the upper bound on the $\CR$ is weaker.  Unlike the previous scenario where it was bounded by $1-\frac{1}{e} \approx 0.63$,  it is now bounded only by $0.73$. The following theorems establish this upper bound for $\balance$, and demonstrate that no algorithm can achieve significantly better performance.

\begin{restatable}{theorem}{ubbalance}\label{theo:ub_balance}
Assuming the initial budgets are $b_{1,0} = b_{2,0} = \dots = b_{n,0} = b_0 \geq 1$. For $m = o(\sqrt{T})$ and $mb_0 = o(T)$, then,
    \begin{align}
          \CR^{\rm adv}(\balance, \cG_{T,m}) \leq \underbrace{1- \frac{(1-\alpha)}{e^{(1-\alpha)}}}_{\simeq 0.73325...}+o_{m,T}(1)
    \end{align}
where $\alpha$ is defined by $\frac{1}{2}= \int_{0}^{\alpha} \frac{x e^x}{1-x}\mathrm{d}x$. The bound is reached for the graph defined in the proof.
\end{restatable}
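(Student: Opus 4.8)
The plan is to reuse the template of \cref{thm:ub_deterministic_12} — exhibit one explicit worst-case instance $G^{\star}\in\cG_{T,m}$, estimate $\balance(G^{\star})$ and $\opt(G^{\star})$, and take limits — but the construction must now span the whole horizon instead of fitting inside a single period. Two features of the regime do most of the simplification: since $mb_{0}=o(T)$, each $U$-node's initial budget $b_{0}$ is $o(T/m)$, hence negligible next to the $\lfloor T/m\rfloor$ units it accumulates from refills, so $b_{0}$ drops out of the asymptotics; and since each $U$-node is refilled $\lfloor T/m\rfloor=\omega(\sqrt T)$ times, the refills alone give it an effectively unbounded budget, so the relevant instance is a ``$b\to\infty$'' Kalyanasundaram--Pruhs instance \citep{KALYANASUNDARAM2000319} unfolded in time, its only structural peculiarity being that all $U$-nodes recharge \emph{simultaneously} (the schedule $\indicator{t\bmod m=0}$ is independent of $u$) — which is exactly what an adaptive rule such as $\balance$ can be punished for.

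Concretely, the horizon is split at a time $\alpha T$ and $G^{\star}$ has two phases. On $[1,\alpha T]$ we run a shrinking-neighbourhood gadget: the node arriving at normalized time $s=t/T$ is connected to a nested, decreasing family $S(s)\subseteq U$ with $S(0)=U$, chosen as in \citep{KALYANASUNDARAM2000319} so that $\balance$'s max-budget rule forces it to water-fill, i.e.\ to keep the budgets of the current active set $S(s)$ balanced; as $S(s)$ contracts, the entire demand is concentrated on ever fewer ``inner'' nodes, so that by time $\alpha T$ a well-chosen fraction of $U$ has been drained to budget $0$, while the demand rate is tuned so that both $\balance$ and $\opt$ still match every arrival in this phase. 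On $[\alpha T+1,T]$ all arrivals are confined to the drained part of $U$: there $\balance$ recovers only a $1-e^{-(1-\alpha)}$ fraction of the demand (a horizon-restricted version of the classical $1-1/e$ bound for $b\to\infty$), whereas $\opt$, having spent the outer nodes early and left the inner ones fresh, matches all of it. With the arrivals spread so that the two phases carry an $\alpha$- and a $(1-\alpha)$-fraction of $|V|$, this gives $\balance(G^{\star})=\big(1-(1-\alpha)e^{-(1-\alpha)}\big)|V|+o(|V|)$ and $\opt(G^{\star})=|V|-o(|V|)$, the latter via an explicit near-perfect offline assignment that keeps each $U$-node's consumption synchronized with its refills.

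It remains to pin down $\alpha$. Tracking the budget profile across $U$ under $\balance$ during the first phase yields a deterministic recursion indexed by its $\Theta(\alpha T/m)$ refill periods; because $m=o(\sqrt T)$ the number of periods diverges fast enough that the recursion converges, up to $o(1)$, to a limiting ODE, and the requirement that the first phase drain exactly the fraction of $U$ the second-phase punchline needs becomes, after optimizing the free parameters, the constraint $\int_{0}^{\alpha}\tfrac{xe^{x}}{1-x}\,\mathrm{d}x=\tfrac12$. Since $\alpha\mapsto 1-(1-\alpha)e^{-(1-\alpha)}$ is increasing on $[0,1]$, the adversary wants the smallest admissible $\alpha$, which is the root of $\tfrac12=\int_{0}^{\alpha}\tfrac{xe^{x}}{1-x}\,\mathrm{d}x\approx 0.603$, giving $1-\tfrac{1-\alpha}{e^{1-\alpha}}\simeq 0.73325$. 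Dividing the two estimates and absorbing all slack — the ODE-discretization error, the floors in $\lfloor T/m\rfloor$, the $o(|V|)$ coming from $b_{0}$ via $mb_{0}=o(T)$, and the per-period boundary effects, which are tolerable because there are $\omega(\sqrt T)$ of them — yields $\CR^{\rm adv}(\balance,\cG_{T,m})\le 1-\tfrac{1-\alpha}{e^{1-\alpha}}+o_{m,T}(1)$.

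The crux is to make the $\balance$-analysis rigorous in two respects. First, $\balance$ must be genuinely \emph{forced} by the gadget: since it always selects the neighbour of maximum current budget and all refills are synchronous, the construction has to maintain a controlled ordering of the $U$-budgets through every refill period of the first phase, so that no adversarial tie-break or off-path choice lets it slip off the intended water-filling trajectory. Second, the passage from the period-by-period recursion to the ODE must be quantitative, with error bounds uniform over the $\omega(\sqrt T)$ periods, so that the implicitly defined threshold $\alpha$ — and hence the exact constant $0.73325\ldots$ rather than merely some number in $(1-1/e,1)$ — is recovered.
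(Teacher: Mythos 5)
Your overall template -- exhibit one explicit hard instance, bound $\balance(G)/\opt(G)$ on it, and absorb the floors and the initial budgets via $m=o(\sqrt T)$ and $mb_0=o(T)$ -- is indeed the paper's template, but the two steps that actually produce the constant $1-\frac{1-\alpha}{e^{1-\alpha}}$ are exactly the ones you leave unproved, and as literally described they do not work. In your phase 2 you confine all arrivals to the set $D\subseteq U$ of nodes drained by $\balance$ and claim $\balance$ recovers a $1-e^{-(1-\alpha)}$ fraction of them as ``a horizon-restricted $1-1/e$ bound for $b\to\infty$''. With a static set $D$ there is no exponential mechanism at all: $\balance$'s phase-2 matches are limited only by refills, about $|D|(1-\alpha)T/m$, while your requirement that $\opt$ match \emph{all} of phase 2 using those same nodes forces $|D|\gtrsim(1-\alpha)m$ (its total resource there is $\simeq |D|T/m$ since $b_0=o(T/m)$); hence $\balance$ already matches a fraction $\gtrsim 1-\alpha\approx 0.397$ of phase 2, strictly more than $1-e^{-(1-\alpha)}\approx 0.328$, and the bound you would obtain is weaker than the theorem's. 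An $e$-type loss only appears if nodes keep being removed \emph{during} the loss phase, so that refills keep accruing on nodes about to disappear -- this is the paper's schedule $t_i\approx t_0\left(1+\frac{1}{m-1}\right)^i$ -- and once such removals are introduced, the clean two-block accounting (``$\alpha T$ fully matched, then a block of length $(1-\alpha)T$ matched at rate $1-e^{-(1-\alpha)}$'') no longer describes the dynamics: in the paper $\balance$ matches every arrival up to $t_{i^*}\approx Te^{\alpha-1}$ (not $\alpha T$) and then loses gradually as the surviving refill rate decays, the loss integrating to $(1-\alpha)e^{\alpha-1}T$.

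Second, the equation $\int_0^\alpha\frac{xe^x}{1-x}\,\mathrm{d}x=\tfrac12$ is asserted (``after optimizing the free parameters'') rather than derived, and your construction contains no mechanism that would generate either the factor $\frac{1}{1-x}$ or the constant $\tfrac12$. In the paper both have concrete origins: the first phase is \emph{fully connected} to $n\simeq 2m-1$ nodes for $t_0\simeq T/e$ steps, so $\balance$ (by \cref{lemma:balance_keeps_equal_budgets}) spreads the accumulated surplus $\simeq t_0$ over about twice as many nodes as survive and keeps only half of it when the adversary discards the top budgets -- that is the $\tfrac12$; then at each time $t_i$ the adversary deletes the \emph{maximum}-budget surviving node, so $\balance$'s pooled budget loses a $\frac{1}{m-i}$ fraction on top of the per-arrival deficit $i/m$, and integrating this drift produces the integrand $\frac{xe^x}{1-x}$, with $\alpha$ defined as the exhaustion point of the pool. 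Your phase 1 -- a nested family in which both algorithms match every arrival and some nodes are merely ``drained to zero'' by time $\alpha T$ -- tracks a different quantity, and nothing in it ties the phase length, the drained fraction, and the integral constraint together; identifying the phase boundary with the same $\alpha$ is bookkeeping fitted to the final answer, not a consequence of the construction. So the proposal reproduces the statement's numerology, but the actual content of the proof (the removal schedule $t_i$, the budget-equalization property of $\balance$, and the quantitative recursion for the pooled budget that yields $\alpha$) is either missing or, in the static form you give, false.
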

\begin{sketch}
    The full proof is provided in \cref{app:proof_ub_balance}. It relies on building an adversarial graph $\GT=(U,V,E)$ with the following structure (see \cref{fig:graph_ub_adv1} for illustration): Initially, for a period of size $t_0 \simeq \frac{T}{e}$, the size of $U$ exceeds $m$ ($|U| \simeq 2m-1$), allowing the algorithm to accumulate a significant amount of budget. During this period, $\alg$ and $\opt$ consistently match nodes and accumulate an equal amount of budget on $U$, but it is not distributed in the same way. At time $t_0$, the adversary removes all but $m-1$ nodes from $U$ (starting with those with the highest budgets). Specifically, When the adversary eliminates nodes, it has no impact on $\opt$ because $\opt$ has perfect hindsight knowledge of the eliminated nodes. Therefore, it can allocate the budget exclusively to nodes that remain available, ensuring that no budget is lost on eliminated nodes at the time of their removal. However, $\alg$ remains unaware of which nodes will be eliminated. Consequently, at the time of elimination,  the nodes still have some budget.  Subsequently, the remaining nodes are removed one by one, at a rhythm that depends on  $m$, carefully designed for $\opt$ to widen the gap as much as
possible with $\alg$. 
\begin{figure}[h!]
    \centering
     \begin{subfigure}[t]{\textwidth}
     \label{fig:graph_ub_adv2}
        \centering
        \resizebox{0.7\textwidth}{!}{\input{figure_th_1}}
        \caption{The graph used for the proof of \cref{thm:ub_deterministic_12}}
    \end{subfigure}
     \vspace{1em}
    \begin{subfigure}[t]{\textwidth}
        \centering
        \resizebox{0.7\textwidth}{!}{\input{figure_th_2}}
        \caption{The graph $\GT$ used for the proof of \cref{theo:ub_balance}}
        \label{fig:graph_ub_adv1}
    \end{subfigure}
    \label{fig:combined_graphs}
\end{figure}

\paragraph{ALG vs OPT over time:}  As previously mentioned, up to time $t_{m-2}$, both $\alg$ and $\opt$ have the same performance. It is only between $t_{m-2}$ and $T$ that distinctions arise. Hence, the crucial step lies in determining the remaining budget of $\alg$ at time $t_{m-2}$ denoted $P_{t_{m-2}}$. To accomplish this, it's necessary to compute the values of $t_i$ and then analyze how the remaining budget of $\alg$ evolves over time. 

\paragraph{Intuition behind the choice of `$t_i$'' : } 

Since the main difference between $\alg$ and $\opt$ lies in the fact that $\opt$ knows the eliminated nodes beforehand, one important quantity to track is the 
$t_i$ which represents the time taken to deplete the budget of node $u_i$ by consistently avoiding matches with $u_i$ before $t_{i-1}$ and then matching it at every time step between $t_{i-1}$ and $t_i$ (a strategy employed by $\opt$). $t_i$ is determined by the following recurrence relationship: 
$$t_{i+1} \approx b_0 - 1 + t_i + \frac{b_0 + t_i}{m-1} $$
by solving it we get, 
$$t_i \approx  \left(1 + \frac{1}{m-1}\right)^i \left(t_0 + m b_0 - m + 1\right) - m b_0 + m - 1 $$

\paragraph{Intuition about the value of remaining budget:} 
The remaining budget at time $t_i$ follows the following recurrence,  
\begin{align*}
    P_{t_i} \approx \left(\underbrace{P_{t_{i-1}}}_{\text{the remaining budget at time}~t_{i-1}} + \underbrace{(n-i)(t_i - t_{i-1})/m}_{\text{the refills received between time $t_{i}$ and $t_{i-1}$ }} - \underbrace{(t_i - t_{i-1})}_{\text{number of nodes matched}}\right) \frac{n-i-1}{n-i}
\end{align*}
The expression $\frac{n-i-1}{n-i}$ represents the ratio of the number of nodes at time $t_{i}$ to the number of nodes at time $t_{i-1}$.

Therefore, the crux of the proof lies in examining the dynamics and rate of evolution of $P_{t_{i}}$ and handling the technicalities related to the approximations of the floor and ceil functions involved in the construction of the different quantities of the problem.

\end{sketch}
\subsubsection{No algorithm can beat  
 $\balance$}

\begin{restatable}{theorem}{uballadv}\label{theo:balance_is_best}
Assuming the initial budgets are $b_{1,0} = b_{2,0} = \dots = b_{n,0} = b_0 \geq 1$. For  $ m = o(\sqrt{T})$,
    \begin{align}
         \sup_{\alg}\lE\left[\CR^{\rm adv}(\alg, \cG_{T,m})\right] \leq   \CR^{\rm adv}(\balance, \GT)+o_{T}(1)
    \end{align}
    where the expectation is taken over the randomness from $\alg$.
\end{restatable}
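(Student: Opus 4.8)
The plan is to show that on the specific adversarial graph $\GT$ constructed in the proof of Theorem \ref{theo:ub_balance}, \emph{no} randomized algorithm can outperform $\balance$, so that the worst-case competitive ratio over $\cG_{T,m}$ of any algorithm is bounded by $\CR^{\rm adv}(\balance,\GT)$ up to lower-order terms. Since $\sup_{\alg}\lE[\CR^{\rm adv}(\alg,\cG_{T,m})] \le \sup_{\alg}\lE[\alg(\GT)/\opt(\GT)]$ (because $\GT \in \cG_{T,m}$ and a min over the class is at most the value on one instance), it suffices to prove that $\lE[\alg(\GT)] \le \balance(\GT) + o(T)\cdot\opt(\GT)$ for every randomized $\alg$, i.e.\ that on $\GT$ the greedy-balance strategy is optimal among all online algorithms (deterministic, hence also randomized by averaging). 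The key structural feature to exploit is that $\GT$ is built in two phases: a first phase of length $t_0\simeq T/e$ where $|U|\simeq 2m-1$ and the graph is fully connected, and a second phase where nodes are deleted one at a time at the prescribed times $t_1,\dots,t_{m-2}$.

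First I would argue that in the fully-connected first phase every reasonable online algorithm behaves identically in terms of \emph{size}: since every arriving $t\in V$ is adjacent to all currently-available $U$-nodes and (by the budget dynamics and the sizing $|U|\simeq 2m-1 > $ refill load) there is always positive budget somewhere, any algorithm that is not wasteful matches every vertex, so $\alg$ and $\opt$ both match all $t_0$ vertices and accumulate the same \emph{total} budget $\sum_u b_{u,t}$ on $U$; the only freedom is in how that budget is \emph{distributed}. The second step is the heart of the matter: I would show that among all ways of distributing a fixed total budget across the $2m-1$ nodes, the distribution that is most robust to a worst-case sequence of one-at-a-time deletions (adversary removes highest-budget nodes first, then peels the rest at rate governed by $m$) is exactly the balanced (as-equal-as-possible) one — and that this is precisely what $\balance$ produces. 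Concretely, I would set up an exchange/coupling argument: given any algorithm $\alg$ producing budget vector $\mathbf{b}^{\alg}$ at time $t_0$, compare it to $\balance$'s vector $\mathbf{b}^{\bal}$, which majorizes $\mathbf{b}^{\alg}$ from below in the sense that it is the least "spread out"; then track the Schur-concave functional "number of future matches extractable against the deletion schedule" and show it is maximized at $\mathbf{b}^{\bal}$. The recurrences for $t_i$ and for the remaining budget $P_{t_i}$ from the sketch of Theorem \ref{theo:ub_balance} then give the exact value $\balance(\GT)$, and any other $\alg$ can only do (weakly) worse up to the $o(T)$ slack coming from floor/ceiling approximations and from the $o(1)$ fraction of vertices where an algorithm could idle without changing the ratio.

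The third step handles randomization: since for a \emph{fixed} instance $G$ the quantity $\CR^{\rm adv}(\alg,\cG_{T,m}) = \min_{G'} \alg(G')/\opt(G')$ is bounded above by $\alg(\GT)/\opt(\GT)$ pointwise in $\alg$'s internal randomness, taking expectations and using linearity gives $\lE[\CR^{\rm adv}(\alg,\cG_{T,m})] \le \lE[\alg(\GT)]/\opt(\GT)$; then a randomized algorithm is a distribution over deterministic ones, and by the previous step each deterministic realization satisfies $\alg(\GT) \le \balance(\GT) + o(T)$, while $\opt(\GT) = \Theta(T)$, so $\lE[\alg(\GT)]/\opt(\GT) \le \balance(\GT)/\opt(\GT) + o(1) = \CR^{\rm adv}(\balance,\GT) + o_T(1)$, using that $\balance$ on $\GT$ attains its own worst case (this is part of the content of Theorem \ref{theo:ub_balance}, "the bound is reached for the graph defined in the proof"). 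The main obstacle I anticipate is the second step — proving the optimality of the balanced budget distribution against the adversarial deletion schedule. One must be careful that "most balanced" is genuinely optimal for \emph{this particular} peeling schedule (highest-first, then at rate $\sim(1+1/(m-1))$), not just intuitively so; the clean way is to verify that the map from a budget vector to the total number of future matches is Schur-concave, by writing it as a sum/composition of concave, symmetric operations induced by the recurrence for $P_{t_i}$, and then invoke that $\balance$'s output is the minimal element in the majorization order among all feasible states — combined with carefully bookkeeping that the cumulative $o(T)$ errors from floors/ceilings and from the slack in "match everything in phase one" do not accumulate to order $T$.
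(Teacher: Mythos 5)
There is a genuine gap in your reduction, and it is fatal as stated. You fix the single instance $\GT$ built against $\balance$ and claim it suffices to show $\lE[\alg(\GT)]\le \balance(\GT)+o(T)\cdot\opt(\GT)$ for \emph{every} $\alg$. That per-instance domination is false: on the fixed graph the deletion schedule (the nested sets $U_1\supset U_2\supset\dots$ and the times $t_i$) is hard-wired into the instance, so an online algorithm tailored to this one instance can simply imitate $\opt$ --- during the first phase it matches only nodes of $U\setminus U_1$, preserving the budgets of the survivors, and afterwards drains $u_i$ between $t_{i-1}$ and $t_i$ --- achieving roughly $t_{m-1}=T-o(T)$ matches, strictly more than $\balance(\GT)\simeq 0.733\,T$. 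Such an algorithm has a poor competitive ratio on other instances, but your chain of inequalities only ever evaluates it on $\GT$, so $\sup_{\alg}\lE[\alg(\GT)/\opt(\GT)]$ is close to $1$, not to $\CR^{\rm adv}(\balance,\GT)$. Your second step implicitly senses the right fix (you speak of deletions removing the \emph{algorithm's} highest-budget nodes), but that makes the instance depend on $\alg$, contradicting the fixed-instance framing of your first step; the quantifier order is exactly the crux, and the Schur-concavity claim cannot repair it as long as the peeling order is the one keyed to $\balance$.

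The paper's proof resolves this by constructing, for each (possibly randomized) $\alg$, a graph with the same skeleton (same $t_i$, same sizes) but with $U_1,\dots,U_{m-1}$ chosen according to $\alg$'s own budgets (the adversary always removes the nodes carrying the most budget under $\alg$), and then comparing $\alg$ on \emph{its} adversarial graph with $\balance$ on \emph{its} adversarial graph. No majorization machinery is needed: the only inequality used is that for any $y\in\lR_+^n$ the sum of its $m-1$ smallest coordinates is at most $\frac{m-1}{n}\|y\|_1$, so after each tailored removal any algorithm retains at most the ``balanced'' fraction of the total budget, while $\balance$ retains it up to an additive $O(m)$ thanks to \cref{lemma:balance_keeps_equal_budgets}; an induction gives $B^{(i)}_{t_i}\le B^{(i),{\rm bal}}_{t_i}+i(m-i)$, hence $\alg$ matches at most $\balance(\GT)+O(m^2)=\balance(\GT)+o(T)$ since $m=o(\sqrt{T})$, and dividing by $\opt=\Theta(T)$ and averaging over $\alg$'s randomness concludes. (A further small point: your phase-one claim that ``any algorithm that is not wasteful matches every vertex'' does not cover the supremum over all algorithms; the paper's budget-accounting handles idling algorithms without a separate case.) If you wish to keep the majorization viewpoint, it must be applied with the deletions adapted to the algorithm's current budget vector, and in that formulation it collapses to the one-line inequality above.
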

The proof is provided in \cref{app:proof_balance_is_best}.

\begin{sketch}
    The intuition is that keeping budgets equalized between the currently available $U$-nodes is the best choice an algorithm can make in the adversarial graph used for the proof of \cref{theo:ub_balance}. This is because the adversary removes $U$-nodes one after the other, beginning with those with the highest budget and never providing again a $U$-node already removed.
    \end{sketch}

\medskip

\section{The Stochastic framework}
\label{sec: sto_section}
A significant line of research in online matching focuses on stochastic models, particularly the known i.i.d. model, where vertices arrive independently from a known distribution. This setting enables algorithms with improved competitive ratios—up to ~0.711—compared to $\ranking$ \citep{Manshadi_sto,jaillet_liu,Brubach2016Online,huang_shu}. However, such models are often idealized and fail to exploit richer structural information about the graph. As noted in \cite{brrodin}, simple greedy approaches can match or outperform these algorithms in average-case or practical scenarios, prompting interest in more realistic stochastic input models. Another research direction considers standard online algorithms applied to specific random graph classes. For instance, \cite{mastin_jaillet} study online matching in Erdős–Rényi graphs, especially in the sparse regime where edges exist independently with probability $c/n$. Even simple algorithms like $\greedy$ pose analytical challenges in this setting \citep{mastin_jaillet,second_member_approx,dyer_frieze}. More general models, such as the configuration model, which allows control over vertex degree distributions, have also been explored \citep{noiry2021online,aamand2022optimal}
\subsection{Model}
The online matching problem with refills of the budgets in the stochastic setting is studied in the following framework:
\begin{enumerate}
    \item The random graph is a standard  Erdős–Rényi model  $G(n,T,p)$, i.e., a bipartite graph with $n$ vertices on one side, $T$ on the other side and each potential edge $(u, t) \in U \times V$ occurs independently with probability $p$.
   \item The regime considered is the sparse one, in the sense that $p=\frac{a}{n}$ with $a>0$. This setup is motivated by online advertising, where the number of users greatly exceeds the number of ad campaigns, and only a small subset of users are eligible to participate. 
\item The sequence of refills $(\eta_{u,t})_{u\in U, t\in V}$ is a realization of independent Bernoulli random variable of parameter $\beta/n$, for some $\beta>0$.
\end{enumerate}
As emphasized previously, each node $u\in U$ is associated with a budget $b_{u,t}\in \lN$. We add the additional assumption that the maximum budget per node is capped at some $K\in \lN^{*}$ so that the budget dynamics are now expressed as follows, 
\begin{align}
    b_{u,t}=\min(K,b_{u,t-1}-x_{u,t}+ \eta_{u,t}) ~~~~~\text{with}~b_{u,0}=b_0\geq 1\, 
\end{align}
The reasons behind capping the maximal budget to $K$  are three-fold. First, in many applications in mind, the budget is capped (either by one, which corresponds to an idle/active state, or by a large number as in the online advertisement motivating example). Second, with the algorithm and the random graph considered, the budget will follow a negatively biased (and non-homogeneous) random walk, so that the maximal budget is sub-linear with arbitrarily high probability (hence this restriction is actually without loss of generality in the random model considered). Third, this capping induces a finite number of quantities to track through time (namely the current proportion of vertices with this or that budget), which greatly simplifies the analysis.

As defined previously,  an online \emph{matching} on $G$ generated by an algorithm $\alg$ is a subset of edges, represented by a binary matrix $\mathbf{x} \in \{0,1\}^{n\times T}$,  satisfying \Cref{item:first,item:second,item:third}.

 The performance of an algorithm in the stochastic setting can be either measured by the size of the expected matching size it creates or by the ratio between expected matching sizes of $\alg$ and $\opt$. Formally, the different quantities we shall consider are 
$$
        \CR^{\rm sto}(\alg,\cD) = \frac{\lE_{G\sim \cD}[\alg(G)]}{\lE_{G\sim \cD}[\opt(G)]}~~~~~~ \text{or}~~\text{matching size}=\lE_{G\sim \cD}[\alg(G)]
$$
where we denote by $\cD$ the distribution of graph considered and the refills,  $\alg(G)=\sum_{u\in U}\sum_{t=1}^T,x_{u,t}$, and $\opt(G)=\sum_{u\in U}\sum_{t=1}^T,x^*_{u,t}$ are the sizes of the matching generated by $\alg$ and $\opt$ respectively. Although the dependency on $T$ is implicit in $\alg(G)$ and $\opt(G)$, we will explicitly indicate it by using $\alg(G, T)$ and $\opt(G, T)$ instead.




\subsection{Main results}
Our first main theorem, stated below, identifies the asymptotic size of the matching generated by $\greedy$ on the bipartite Erdős-Rényi model with budget refills. The result shows that with high probability, the size of the matching generated by $\greedy$ is close to the solution of a system of ordinary differential equations.

\begin{restatable}{theorem}{lbstochasticgeneral}\label{theo:lb_stochastic_general}
With probability $1-\cO\left(n^{1/4} \exp(-a^3 n^{1/4})\right)$, the matching size created by $\greedy$ denoted by $\greedy(G,T)$ satisfies, 
$$\greedy(G,T)=nh(T/n)+\cO(n^{3/4})$$
and, 
$$\frac{\lE[\greedy(G,T)]}{n}\underset{n\to +\infty}{\to}h(T/n)$$
 where $h(\tau)$ is solution of the following equation,  
$$\dot{h}(\tau)=1-e^{-a(1-z_{0}(\tau))},~~~~~ \frac{1}{n}\leq \tau\leq \frac{T}{n}$$
and $z_0(\tau)$ satisfies the following system,
\begin{align}
    \begin{cases}
        \dot{z}_{0}(\tau)=-z_{0}(\tau)\beta + \frac{z_{1}(\tau)}{1-z_{0}(\tau)}(1-e^{-a+az_{0}(\tau)}) \hspace{1cm } & \text{for} ~k=0\\
        \dot{z}_{k}(\tau)= (z_{k-1}(\tau)-z_{k}(\tau))\beta +(z_{k+1}(\tau)-z_{k}(\tau))\frac{1-e^{-a+az_{0}(\tau)}}{1-z_{0}(\tau)} & \text{for} ~ 1\leq k \leq K-1\\
        \dot{z}_{k}(\tau)=\beta\,z_{k-1}(\tau)-z_{k}(\tau)\frac{1-e^{-a(1-z_0(\tau))}}{1-z_0(\tau)} &\text{for}~k= K\\
        \sum_{k=0}^K z_{k}(\tau)= 1
    \end{cases} \label{system_edo_z_k}
\end{align}

\end{restatable}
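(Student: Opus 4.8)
The plan is to use the differential equation method (Wormald's method) to show that the random trajectory of $\greedy$ concentrates around the solution of the ODE system~\eqref{system_edo_z_k}. First I would set up the relevant state variables at each discrete time step $t$: let $Z_k(t)$ be the number (or proportion) of $U$-nodes with current budget exactly $k$, for $0 \le k \le K$, together with the running matching size $M(t)$. Since $\greedy$ matches an incoming $V$-node $t$ to an arbitrary available neighbor, and the graph is Erdős–Rényi with edge probability $a/n$, conditioning on the current budget profile $(Z_k(t))_k$ one can compute the one-step expected increments. The key probabilistic fact is that the arriving node $t$ has at least one neighbor among the $U$-nodes with positive budget with probability $1 - (1-a/n)^{n - Z_0(t)} \approx 1 - e^{-a(1 - z_0(\tau))}$ where $\tau = t/n$ and $z_0 = Z_0/n$; this is the term driving $\dot h$. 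When a match occurs, a uniformly-random positive-budget node moves from level $k$ to level $k-1$ (giving the $(z_{k+1}-z_k)\frac{1-e^{-a+az_0}}{1-z_0}$ transport terms), and independently each $U$-node receives a refill with probability $\beta/n$, moving it from level $k$ to level $k+1$ (giving the $\beta$-drift terms), with the cap at $K$ and the floor at $0$ handled by the boundary equations.

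Next I would verify the three hypotheses of Wormald's theorem: (i) \emph{boundedness} — the one-step changes in each $Z_k$ and in $M$ are $\cO(1)$ deterministically (a single match and at most $n$ refills, but the refill count is $\mathrm{Bin}(n, \beta/n)$, so with overwhelming probability it is $\cO(\log n)$, which suffices, or one truncates); (ii) \emph{trend/expected increments} — $\lE[Z_k(t+1) - Z_k(t) \mid \mathcal{F}_t]$ equals the corresponding right-hand side of~\eqref{system_edo_z_k} evaluated at the current scaled state, up to an $\cO(1/n)$ error coming from the $(1-a/n)^m$ versus $e^{-am/n}$ approximation and from the $\mathrm{Bin}$ versus Poisson/linear approximation of refills; (iii) \emph{Lipschitz continuity} — the right-hand sides are Lipschitz on the relevant domain $\{z_k \ge 0, \sum z_k = 1, 1 - z_0 \ge \epsilon\}$, which holds as long as $z_0$ stays bounded away from $1$ (so the denominators $1 - z_0$ are controlled). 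Then Wormald's theorem (or a direct Azuma–Hoeffding / Doob-martingale argument on $Z_k(t) - n z_k(t/n)$) yields concentration within $\cO(n^{3/4})$ with the stated failure probability $\cO(n^{1/4}\exp(-a^3 n^{1/4}))$ — the exponent shape $n^{1/4}$ and the step/error budget $n^{3/4}$ being the usual tradeoff when the increments are $\cO(1)$ and one runs for $\cO(n)$ steps. Integrating $\dot h(\tau) = 1 - e^{-a(1-z_0(\tau))}$ against the concentrated trajectory of $z_0$ gives $\greedy(G,T) = n h(T/n) + \cO(n^{3/4})$, and the expectation statement follows by bounded convergence since $\greedy(G,T)/n \in [0,1]$.

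The main obstacle, I expect, is twofold. First, one must show that $1 - z_0(\tau)$ stays bounded away from zero on $[0, T/n]$ — i.e., the pool of positive-budget nodes does not dry up — since otherwise the ODE right-hand sides blow up and the Lipschitz hypothesis fails; this should follow from the refill drift ($\beta > 0$) pushing mass back up the budget levels, giving an a priori lower bound on $1 - z_0$, but it needs a separate argument (e.g. a supermartingale/Lyapunov bound on $Z_0$, or an explicit sub-solution of the ODE). Second, handling the boundary levels $k = 0$ and $k = K$ carefully: the transition "match a positive-budget node, move it down one level" has to be reconciled with refills happening simultaneously in the same step, so the bookkeeping of the expected increment at levels $0$ and $K$ (where clipping occurs) must be done precisely to land exactly on the stated system rather than an approximation of it. Everything else — the degree concentration of the Erdős–Rényi graph, the $(1-a/n)^m \to e^{-am/n}$ estimates, and the martingale concentration — is routine given the earlier setup in the paper.
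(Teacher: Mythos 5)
Your proposal follows essentially the same route as the paper's proof: track $Y_k(t)$ (the number of $U$-nodes with budget $k$) together with the running matching size, compute the one-step conditional expected increments (with the key term $1-(1-a/n)^{n-Y_0(t)}\approx 1-e^{-a(1-z_0)}$), verify the boundedness, trend and Lipschitz hypotheses, and invoke Wormald's differential-equation theorem with $\lambda=a n^{-1/4}$ to get the $\cO(n^{3/4})$ deviation and the stated failure probability, concluding the expectation statement by boundedness/uniform integrability. The one place where your plan deviates is your first ``main obstacle'': no separate Lyapunov or sub-solution argument keeping $1-z_0$ bounded away from $0$ is needed, because the apparent singularity is removable --- the map $x\mapsto(1-e^{-ax})/x$ extends analytically and is bounded by $a$, so $\frac{1-e^{-a(1-z_0)}}{1-z_0}\leq a$ on the whole domain, which is exactly how the paper's Lipschitz lemma proceeds (via $1-e^{-ax}\leq ax$). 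This matters beyond convenience: for small refill rates $\beta$ the process can drive $z_0$ arbitrarily close to $1$, so a uniform lower bound on $1-z_0$ is not available in general, and insisting on the restricted domain $\{1-z_0\geq\epsilon\}$ would block the argument; dropping that restriction, the rest of your outline (boundary bookkeeping at $k=0,K$, binomial-vs-Poisson refill approximation, and the martingale concentration) matches the paper's treatment.
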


 \begin{sketch}
  For $0\leq k\leq K$,  $t\in \iset{T}$, let  $U_{k}(t)=\{u\in U: b_{u,t}=k\}$ be the set of nodes with budget equals to $k$ at time $t$ and $Y_{k}(t)=|U_{k}(t)|$  the total number of nodes with budget equals $k$ in $U$. The expectation of the one-step change of the variable $\greedy(G,t)$ can be expressed as, 
$$\mathbb{E}\left[\greedy(G,t+1)-\greedy(G,t)|\greedy(G,t)\right]= 1-\left(1-\frac{a}{n}\right)^{\sum_{k\geq 1}Y_{k}(t)}=1-\left(1-\frac{a}{n}\right)^{n-Y_{0}(t)} $$

As the evolution of $\greedy(G,t)$ depends on $Y_{0}$, an analysis of the process $\mathbf{Y}(t)=(Y_k(t))_{k\geq 0}$ is necessary. The dynamic of this process is described by the following system:
$$
\begin{cases}
        \mathbb{E}\left[\Delta_0(t)|\mathbf{Y}(t)\right]= -Y_{0}(t)\left[\frac{\beta}{n} (1-p\,\Sigma(t))\right]+Y_{1}(t)(1-\frac{\beta}{n})p\,\Sigma(t)\\
            \mathbb{E}\left[\Delta_1(t)|\mathbf{Y}(t)\right]= -Y_{1}(t)\left[\frac{\beta}{n} (1-p\,\Sigma(t))+ (1-\frac{\beta}{n})p\,\Sigma(t)\right]+Y_{0}(t)\frac{\beta}{n}+Y_{2}(t)(1-\frac{\beta}{n})p\,\Sigma(t)\\
            \mathbb{E}\left[\Delta_k(t)|\mathbf{Y}(t)\right]=\frac{\beta}{n} (1-p\,\Sigma(t))\left[Y_{k-1}(t)-Y_{k}(t)\right]
+\left[Y_{k+1}(t)-Y_{k}(t)\right](1-\frac{\beta}{n})p\,\Sigma(t)~~~~\forall k>1
        \end{cases}
$$

where $ \forall k\geq 0, ~\Delta_k(t)=Y_{k}(t+1)-Y_{k}(t)$, and $\Sigma(t)= \frac{1}{p(n-Y_{0}(t))} (1-(1-p)^{(n-Y_{0}(t))})$.  

 After establishing the evolution of these processes, the main idea behind the proof of \cref{theo:lb_stochastic_general} (postponed to \cref{app:proof_lb_stochastic_general}) is to show that $\greedy(G,T)$ is closely related to the solution of some ordinary differential equations (this is sometimes called the differential equation method     \citep{Wormald1995DifferentialEF, Warnke2019OnWD, Enriquez2019DepthFE} or stochastic approximations \citep{robbins_monro}. 
 \end{sketch}
\medskip
 After establishing that, with high probability $\greedy(G,T)$ is close to $nh(T/n)$, a function depending on $z_0(T/n)$, the solution of \cref{system_edo_z_k}, the objective is to solve this system to obtain an exact approximation of the matching size created by $\greedy$ on the Erdős–Rényi model. However, finding a closed-form solution of the system of differential equations \cref{system_edo_z_k} is quite challenging. To address this complexity, one approach is to explore the system's stationary solution and examine its stability; This means determining whether the solution to \cref{system_edo_z_k} converges to this stationary state, and then showing that $\greedy(G,T)$ converges towards a function depending on the stationary solution of \cref{system_edo_z_k}.

More precisely, \cref{corr:lb_stochastic_z_star_allk} shows  that for $K\geq 1$, $\greedy(G,T)$ converges with  high probability to $nh^{*}$ a function of $z_0^{*}$, the stationary solution of \cref{system_edo_z_k}. Additionally, as $n \to +\infty$, $\frac{\lE(\greedy(G, T))}{n}$ converges to $h^{*}(\psi)$. Furthermore, \cref{corr:lbstochastic_k=1} demonstrates at a specified rate the convergence of $\greedy(G,T)$ to $nh^{*}(T/n)$ with high probability and also, for $n\to +\infty$, $\frac{\lE(\greedy(G,T))}{n}$ converges to $h^{*}(T/n)$. The key distinction between these results lies in the nature of the convergence of $z_{0}(t)$ to $z_{0}^{*}$: in \cref{corr:lb_stochastic_z_star_allk}, $z_{0}(t)$ asymptotically converges to $z_{0}^{*}$, whereas in \cref{corr:lbstochastic_k=1}, the convergence is exponential.

\begin{restatable}{corollary}
{lbstochasticallk}\label{corr:lb_stochastic_z_star_allk} 
       For $K\geq 1$, with probability at least  $1-2\exp(-a^2 n^{\frac{3}{2}}/8T)$, 
$$\left|\greedy(G,T)-nh^*(T/n)\right|\leq o(T)$$
and, 
$$\frac{\lE[\greedy(G,T)]}{n}\underset{n\to +\infty}{\to}h^*(T/n)$$
with  $h^*(x)=\int_{1/n}^{x} (1-e^{-a(1-z_0^*)})\mathrm{d}\tau=\left(x-\frac{1}{n}\right)(1-e^{-a(1-z_0^*)})$, and  $z_0^*$ is the unique  solution of   $\sum_{k=0}^K z_0^*\left(\frac{\beta}{g(z_0^*)}\right)^k=1$ with $g(z_0^*)=\frac{1-e^{-a(1-z_0^*)}}{1-z_0^*}.$
\end{restatable}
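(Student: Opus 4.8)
\textbf{Proof plan for \cref{corr:lb_stochastic_z_star_allk}.} The plan is to combine \cref{theo:lb_stochastic_general} with an analysis of the long-time behaviour of the ODE system \cref{system_edo_z_k}. Writing $g(x)=\frac{1-e^{-a(1-x)}}{1-x}$ (so that the factor $\frac{1-e^{-a+ax}}{1-x}$ appearing in \cref{system_edo_z_k} equals $g(x)$), the system is exactly the density-dependent dynamics of a birth--death chain on $\{0,\dots,K\}$ with birth rate $\beta$ and death rate $g(z_0(\tau))$, the nonlinearity entering only through $z_0$. There are two deterministic ingredients — (i) identifying the stationary point $z^*$ and showing it is well defined, and (ii) proving $z(\tau)\to z^*$ as $\tau\to\infty$ — followed by a short transfer step that feeds these into \cref{theo:lb_stochastic_general} together with a concentration bound.

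\emph{(i) The stationary point.} Setting $\dot z_k=0$ in \cref{system_edo_z_k}, the $k=0$ equation reads $g(z_0^*)z_1^*=\beta z_0^*$, and induction on the bulk equations $1\le k\le K-1$ gives the geometric profile $z_k^*=\bigl(\tfrac{\beta}{g(z_0^*)}\bigr)^k z_0^*$; one then checks that the $k=K$ equation holds automatically. Imposing $\sum_{k=0}^K z_k^*=1$ yields the scalar equation $\Phi(z_0^*)=1$ with $\Phi(x):=\sum_{k=0}^K x\bigl(\tfrac{\beta}{g(x)}\bigr)^k$. I would show $\Phi$ is continuous on $(0,1)$ with $\Phi(0^+)=0$ (since $g(0^+)=1-e^{-a}>0$) and $\Phi(1^-)=\sum_{k=0}^K(\beta/a)^k\ge 1$ (since $g(1^-)=a$), so existence follows from the intermediate value theorem; for uniqueness I would rewrite the equation as the fixed point $z_0=1/\sum_{k=0}^K\rho(z_0)^k$ where $\rho(x)=\beta/g(x)$, note that $g$ is strictly increasing on $(0,1)$ (writing $g(x)=a\,\frac{1-e^{-y}}{y}$ with $y=a(1-x)$, the map $y\mapsto\frac{1-e^{-y}}{y}$ is strictly decreasing), hence $\rho$ is strictly decreasing and the right-hand side strictly increasing into $(0,1)$, and conclude by a contraction / sign-of-derivative-at-the-roots argument. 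Closing this uniqueness step cleanly for all $K$ and $\beta$ is the first point that needs care.

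\emph{(ii) Convergence to $z^*$.} With $b_0\ge1$ the trajectory starts at $z_{b_0}(0)=1$, remains in the compact simplex, and never reaches the face $z_0=1$ (there $\dot z_0\le g(z_0)(1-z_0)-\beta z_0\to-\beta<0$), so the dynamics is well defined for all $\tau$ and its $\omega$-limit set is a nonempty, compact, invariant subset of $\{z_0<1\}$. I would first linearise \cref{system_edo_z_k} at $z^*$: the Jacobian is the generator of the birth--death chain with rates $(\beta,g(z_0^*))$ plus a rank-one perturbation (the column $\partial_{z_0}$, carrying the term $g'(z_0^*)(z_{k+1}^*-z_k^*)$), and I would verify that all its eigenvalues, other than the zero eigenvalue forced by $\sum_k z_k=1$, have negative real part, giving local asymptotic stability. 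To upgrade to global convergence I would either exhibit a Lyapunov function — a relative-entropy-type functional comparing $z(\tau)$ to the instantaneous birth--death equilibrium $\hat\pi_k\propto\rho(z_0)^k$ — and invoke LaSalle, or exploit that for frozen $z_0$ the $(z_1,\dots,z_K)$-block relaxes exponentially to $\hat\pi$, reducing the slow motion to a scalar ODE $\dot z_0=\Psi(z_0)$ whose unique zero $z_0^*$ is attracting by the monotonicity of $\rho$. \textbf{Establishing this global asymptotic stability for general $K$ is the main obstacle}; when $K=1$ the death rate reduces to a single scalar coordinate, the computation is explicit, and one obtains an exponential rate — which is precisely what \cref{corr:lb_stochastic_k=1} records.

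\emph{Transfer step.} Once $z_0(\tau)\to z_0^*$, write, for $\tau=T/n$,
\[
h(\tau)=\int_{1/n}^{\tau}\!\bigl(1-e^{-a(1-z_0(s))}\bigr)\,\mathrm{d}s
      =h^*(\tau)+\int_{1/n}^{\tau}\!\bigl(e^{-a(1-z_0^*)}-e^{-a(1-z_0(s))}\bigr)\,\mathrm{d}s .
\]
The last integrand is bounded in absolute value by $a\,|z_0(s)-z_0^*|\to0$, so by a Cesàro argument the integral is $o(\tau)$, whence $n\,h(T/n)=n\,h^*(T/n)+o(T)$. Plugging this into \cref{theo:lb_stochastic_general} gives $\greedy(G,T)=n\,h(T/n)+\cO(n^{3/4})=n\,h^*(T/n)+o(T)$ on the same event (in the regime $T\gg n$ underlying the model, $\cO(n^{3/4})=o(T)$); the explicit failure probability $2e^{-a^2 n^{3/2}/8T}$ is obtained by applying Azuma--Hoeffding to the Doob martingale that reveals the $V$-vertices together with their incident edges and refills one at a time — whose increments are bounded by a constant — at the deviation scale $\tfrac a2 n^{3/4}$. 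Finally, $\lE[\greedy(G,T)]/n\to h^*(T/n)$ follows from the corresponding conclusion of \cref{theo:lb_stochastic_general} together with $h(T/n)=h^*(T/n)+o(T/n)$.
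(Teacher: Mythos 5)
Your overall route coincides with the paper's: derive the geometric stationary profile $z_k^*=z_0^*(\beta/g(z_0^*))^k$ and the scalar equation for $z_0^*$, establish stability of this stationary solution, and then transfer to $\greedy(G,T)$ by combining a sharpened, non-asymptotic version of \cref{theo:lb_stochastic_general} with the estimate $n|h(T/n)-h^*(T/n)|=o(T)$ obtained from $z_0(\tau)\to z_0^*$ (your Cesàro bound is exactly the paper's split-the-integral argument, and the choice $n=cT$ makes $n^{3/4}=o(T)$). However, there is a genuine gap at the step you yourself flag as "the main obstacle": your plan does not actually prove any convergence of the trajectory to $\bar S_{z_0^*}$. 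The decisive technical content of the paper's proof is precisely this stability statement (\cref{theo:stationary_solution_stable_all_k}): one writes the Jacobian at $\bar S_{z_0^*}$ as $M+uv^\top$ with $M$ tridiagonal (entries $\beta$, $g(z_0^*)$ and the corresponding diagonal), reduces the spectrum to the secular equation $1+v^\top(M-\lambda I)^{-1}u=0$, expands $u$ in the explicit eigenbasis of $M$, and shows via Chebyshev polynomials of the second kind that the expansion coefficients are nonnegative, so the secular function is increasing in $\lambda$ and all eigenvalues other than the structural zero have negative real part. You name the linearisation and the rank-one structure, but neither carry out this spectral computation nor supply the Lyapunov/LaSalle or slow–fast argument you mention; as written, the convergence $z_0(\tau)\to z_0^*$ on which your transfer step depends is unproven. (To be fair, the paper itself only establishes local asymptotic stability and then uses convergence along the actual trajectory, and it likewise asserts rather than proves the monotonicity giving uniqueness of $z_0^*$; so the points you flag are real, but your proposal leaves them open rather than closing them.)

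Two smaller remarks on the transfer step. The stated failure probability $1-2e^{-a^2n^{3/2}/8T}$ does not come from \cref{theo:lb_stochastic_general} (whose guarantee is $1-\cO(n^{1/4}e^{-a^3n^{1/4}})$) nor from a bare Azuma bound on the Doob martingale of $\greedy$: one needs the comparison with the ODE solution, i.e.\ a supermartingale built from the drift error and a Gr\"onwall factor $e^{L'T/n}$, which is what the non-asymptotic Wormald-type theorem (Warnke) invoked in the appendix packages, yielding $\max_{t\le T}|\greedy(G,t)-nh(t/n)|\leq 3e^{L'T/n}an^{3/4}$ on that event; your sketch should route through that statement (or reprove it) rather than plain Azuma. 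Finally, the convergence of $\lE[\greedy(G,T)]/n$ follows, as you say, from boundedness (uniform integrability) plus the high-probability bound, which matches the paper.
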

Section \ref{app:proof_llb_stochastic_z_star_allk} contains the detailed proof.

\begin{sketch}
    The first step is to compute $\bar{S}_{z_0^{*}}$, the unique stationary solution of \cref{system_edo_z_k}. Then, to demonstrate that $\bar{S}_{z_0^{*}}$ is an asymptotically stable stationary solution, we rely on matrix perturbation theory. Once stability is established, we further prove that the matching size converges to a function that depends on $\bar{S}_{z_0^{*}}$.
    
\end{sketch}

\begin{restatable}{corollary}
{lbstochastickone}\label{corr:lbstochastic_k=1}
  For $K=1$, with probability at least  $1-2\exp(-a^2 n^{\frac{3}{2}}/8T)$, 
$$\left|\lE[\greedy(G,T)]-T(1-e^{-a(1-z_0^*)})\right|\leq c \frac{T}{(\log(T))^{3/4}}=o(T)$$
where  $z_0^*=\frac{1}{\beta}-\frac{1}{a}W\left(\frac{a}{\beta}e^{-a\left(1-\frac{1}{\beta}\right)}\right)$, with $W(\cdot)$ the Lambert function, and $c$ is some universal constant.
\end{restatable}

The proof is postponed to \cref{app:proof_llb_stochastic_k=1}.

\begin{sketch}
    For $K=1$,  \cref{system_edo_z_k} is reduced to a system of two equations. Firstly, we compute $S_{z_{0}^{*}}^{1}$, the stationary solution of the reduced system. Then, we prove that $S_{z_{0}^{*}}^{1}$ is an exponentially stable stationary solution using the perturbation method. Once the exponential stability is established, we further get that the matching size converges to a function depending only on $S_{z_{0}^{*}}^{1}$. 
\end{sketch}

The final main result of this section is summarized as follows. First, we establish a lower bound on $\CR^{\rm sto}$, which depends on $(z_0^*,\hdots,z_K^*)$ the stationary solution of \cref{system_edo_z_k}. This lower bound is derived through an exact calculation of the matching size achieved by the $\greedy$ algorithm and an upper bound on the matching size generated by $\opt$. Subsequently, we  demonstrate that  the competitive ratio converges to 1 as $T$, $K$ and $n$ grows significantly.

\begin{restatable}{proposition}
{lowerboundCRgeneralcase}\label{prop: lower_bound_CR_general_case}
For $T, K,n, b_0,\beta\in \lN^*$,
\begin{align}
     \CR^{\rm sto}(\greedy,\cD)\geq \frac{Tg(z_0^*)(1-z_0^*)+nb_0 -n\left(\frac{\beta}{g(z_0^*)-\beta}-\frac{(K+1)\beta^{K+1}}{g(z_0^*)^{K+1}-\beta^{K+1}}\right)} {nb_0 +\beta T } + \cO_{K,\beta}(T^{-1/4})
\end{align}
    where $\sum_{k=0}^K z_0^{*} \left(\frac{\beta}{g(z_0^*)}\right)^k=1$ with $g(z_0^*)=\frac{1-e^{-a(1-z_0^*)}}{1-z_0^*}$  as defined in \cref{corr:lb_stochastic_z_star_allk}.  
\end{restatable}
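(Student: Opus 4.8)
The plan is to lower-bound $\CR^{\rm sto}$ by controlling the numerator and denominator of $\frac{\lE[\greedy(G,T)]}{\lE[\opt(G,T)]}$ separately. For the numerator, I would invoke \cref{corr:lb_stochastic_z_star_allk}, which already gives $\lE[\greedy(G,T)] = nh^*(T/n) + o(T) = T(1-e^{-a(1-z_0^*)}) + o(T) = T\,g(z_0^*)(1-z_0^*) + o(T)$, using the definition $g(z_0^*) = \frac{1-e^{-a(1-z_0^*)}}{1-z_0^*}$. The error term, converted to a relative error after dividing, contributes the $\cO_{K,\beta}(T^{-1/4})$ appearing in the statement; I would track the high-probability event of \cref{corr:lb_stochastic_z_star_allk} and bound the contribution of its tiny-probability complement (where $\greedy \le T$ trivially) to the expectation.

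For the denominator, the key observation is the global budget-accounting identity: the total number of matched edges can never exceed the total budget injected into $U$, which here is $nb_0$ (the initial budgets) plus the expected total number of refills, $\beta T$ (since $\lE[\sum_{u,t}\eta_{u,t}] = nT\cdot(\beta/n) = \beta T$). Hence $\lE[\opt(G,T)] \le nb_0 + \beta T$; this is a deterministic inequality on each realization before taking expectations, so it is clean. Combining the two bounds gives the leading term $\frac{T g(z_0^*)(1-z_0^*) + \text{(correction)}}{nb_0 + \beta T}$. The extra terms in the numerator — $nb_0 - n\left(\frac{\beta}{g(z_0^*)-\beta} - \frac{(K+1)\beta^{K+1}}{g(z_0^*)^{K+1}-\beta^{K+1}}\right)$ — come from computing $\lE[\greedy(G,T)]$ more precisely than just its leading order: one must account for the residual budget left unspent by $\greedy$ at time $T$. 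Writing (total budget injected) $= \greedy(G,T) + (\text{unspent budget at time }T)$ and using the stationary proportions $z_k^*$ with $z_k^* = z_0^*(\beta/g(z_0^*))^k$, the expected unspent budget is $n\sum_{k=0}^K k\,z_k^*$, a finite geometric-type sum that evaluates (via $\sum k x^k$ formulas) to exactly $n\left(\frac{\beta}{g(z_0^*)-\beta} - \frac{(K+1)\beta^{K+1}}{g(z_0^*)^{K+1}-\beta^{K+1}}\right)$; so $\lE[\greedy(G,T)] = nb_0 + \beta T - n\sum_k k z_k^* + o(T)$, and substituting this exact expression into the numerator yields the stated bound (the form with $Tg(z_0^*)(1-z_0^*)$ and the form with $nb_0+\beta T - n\sum k z_k^*$ must be reconciled, which amounts to re-deriving $\beta T \approx$ budget consumed plus budget stored — a consistency check with the ODE's conservation law).

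The main obstacle I anticipate is not any single estimate but the bookkeeping around the stationary approximation: \cref{corr:lb_stochastic_z_star_allk} controls $\greedy(G,T)$ up to $o(T)$, but the numerator of the claimed bound contains $\Theta(n)$-order correction terms (the unspent-budget term), so I need the stationarity of the budget-profile $(z_k(\tau))$ to hold with an error that, when multiplied by $n$ and summed, is still $o(T)$ or at worst $\cO(nT^{-1/4}) = \cO_{K,\beta}(T^{-1/4})$ after normalization — this requires the convergence rate to $z_0^*$ (and the full profile $z_k^*$) from the stability analysis, and care that the constants depend only on $K,\beta$ and not on $T,n,a$. Verifying that the transient before reaching the stationary regime contributes only lower-order terms, and that the floor/ceiling and boundary effects at $\tau = 1/n$ are absorbed, is the delicate part; the algebraic evaluation of $\sum_{k=0}^K k z_k^*$ and the reconciliation of the two numerator forms is routine by comparison.
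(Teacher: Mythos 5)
Your overall route is the same as the paper's (budget accounting for $\greedy$, the crude bound $\lE[\opt(G,T)]\le nb_0+\beta T$, substitution of the stationary profile $z_k^*=z_0^*(\beta/g(z_0^*))^k$, and the geometric-sum algebra for $\sum_k k z_k^*$), but there is a genuine gap in the budget-conservation identity you use to produce the correction terms. Because the budgets are capped, $b_{u,t}=\min(K,b_{u,t-1}-x_{u,t}+\eta_t)$, a refill arriving at a node that already has budget $K$ is simply discarded, so the correct identity is (injected) $=$ (matched) $+$ (unspent at $T$) $+$ (refills lost to the cap), not the one you wrote. The paper tracks this overflow explicitly as $A_3=\sum_{u\in U}\sum_{t=1}^T\lE[\mathbbm{1}_{b_{u,t}=K}\mathbbm{1}_{\eta_t=1}]\approx \beta\sum_{t=1}^T z_K(t/n)\approx \beta T z_K^*=\beta T z_0^*\bigl(\beta/g(z_0^*)\bigr)^K$, which is a $\Theta(T)$ quantity (since $z_K^*>0$) and cannot be absorbed into the $o(T)$ or $\cO_{K,\beta}(T^{-1/4})$ slack.

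This missing term is not a cosmetic correction: it is exactly what turns $\beta T$ into $Tg(z_0^*)(1-z_0^*)$ in the numerator, via the normalization $\sum_{k=0}^K z_0^*(\beta/g(z_0^*))^k=1$, which gives $\beta T-\beta T z_0^*(\beta/g(z_0^*))^K=Tg(z_0^*)(1-z_0^*)$. Consequently the two expressions you propose to "reconcile" — $Tg(z_0^*)(1-z_0^*)+o(T)$ from \cref{corr:lb_stochastic_z_star_allk} and $nb_0+\beta T-n\sum_k k z_k^*+o(T)$ from your uncapped accounting — are genuinely inconsistent (they differ by $\beta T z_K^*$ plus the $\Theta(n)$ terms), and the reconciliation step would fail rather than being a consistency check. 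Worse, if one substituted your identity as written, the resulting "bound" would have $\beta T$ in place of $Tg(z_0^*)(1-z_0^*)$ in the numerator, i.e.\ a strictly stronger claim than \cref{prop: lower_bound_CR_general_case} that the argument does not justify. Once the overflow term $A_3$ is added and approximated by $\beta T z_K^*$ (with the same stationary-approximation error control you already anticipate for the residual-budget term), the rest of your outline matches the paper's proof.
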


The proof is presented in \cref{app:proof_lower_bound_general_case}

\begin{sketch}
    Initially, we express $\greedy(G,T)$ as a function of $T, z_0(t),\beta,a$. Then, we use an upper bound on $\opt(G,T)$,  which is not very tight as it only takes into account the initial budget and the refills. Subsequently, we approximate $\greedy(G,T)$ by a function that depends on the stationary solution $\bar{S}_{z_0^{*}}$. It is noteworthy that in this context, the matching size $\greedy(G,T)$ aligns with that of \cref{theo:lb_stochastic_general} through the integration of the system \cref{system_edo_z_k}, up to negligible terms.
\end{sketch}

From \cref{prop: lower_bound_CR_general_case}, the next result  shows that when $K,T,n$ goes to infinity, the competitive ratio approaches 1. 

\begin{restatable}{theorem}{lbstochastickoneCR}\label{theo:lb_stochastic_k=1_CR}
For any $\alpha,\beta >0$, the competitive ratio tends to 1,  as $T,K,n$ approach infinity, as
    $$\underset{K,n \to +\infty}{\lim} ~\underset{T \to +\infty}{\lim} \CR^{\rm sto}(\greedy,\cD)=1$$
    \end{restatable}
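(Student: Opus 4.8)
The strategy is to start from the explicit lower bound on $\CR^{\rm sto}(\greedy,\cD)$ furnished by \cref{prop: lower_bound_CR_general_case} and to take the iterated limit $T\to\infty$ first, then $K,n\to\infty$. Fix $K$ and $n$ for the moment. In the ratio
\begin{align*}
 \frac{Tg(z_0^*)(1-z_0^*)+nb_0 -n\left(\frac{\beta}{g(z_0^*)-\beta}-\frac{(K+1)\beta^{K+1}}{g(z_0^*)^{K+1}-\beta^{K+1}}\right)}{nb_0 +\beta T}\,,
\end{align*}
the numerator and denominator both grow linearly in $T$ while the correction term is $\cO_{K,\beta}(T^{-1/4})\to 0$; hence
\begin{align*}
 \lim_{T\to\infty}\CR^{\rm sto}(\greedy,\cD)\ \ge\ \frac{g(z_0^*)(1-z_0^*)}{\beta}\,,
\end{align*}
which, using $g(z_0^*)=(1-e^{-a(1-z_0^*)})/(1-z_0^*)$, simplifies to $\bigl(1-e^{-a(1-z_0^*)}\bigr)/\beta$. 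Since any competitive ratio is at most $1$, it now suffices to show this lower bound tends to $1$ as $K,n\to\infty$, i.e.\ that $1-e^{-a(1-z_0^*)}\to\beta$, equivalently $g(z_0^*)(1-z_0^*)\to\beta$; note $z_0^*$ depends on $K$ through its defining equation $\sum_{k=0}^K z_0^*\bigl(\beta/g(z_0^*)\bigr)^k=1$.

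\textbf{Analysis of $z_0^*$ as $K\to\infty$.} Write $r:=\beta/g(z_0^*)$ and let $f(z_0^*)=z_0^*\sum_{k=0}^K r^k=1$. The key observation is that the defining equation forces $r<1$: since all terms are nonnegative and $z_0^*\in(0,1)$, we need $\sum_{k=0}^K r^k = 1/z_0^* > 1$, which is automatic, but more importantly if $r\ge 1$ the partial sum diverges with $K$, so to keep the product equal to $1$ we would need $z_0^*\to 0$; I will show that in fact $r\to$ some limit $\le 1$ and analyze the two cases. If $r\to\ell<1$, then $\sum_{k=0}^K r^k\to 1/(1-\ell)$, so $z_0^*\to 1-\ell$ and consistency with $r=\beta/g(z_0^*)$ gives $g(z_0^*)(1-z_0^*)=g(z_0^*)\,\ell\cdot(1-z_0^*)/\ell$… more cleanly: $r(1-z_0^*)$-type manipulations. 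The cleanest route is: from $r=\beta/g(z_0^*)$ we get $g(z_0^*)=\beta/r$, and $g(z_0^*)(1-z_0^*)=\beta(1-z_0^*)/r$. In the limit $z_0^*\to 1-\ell$ and $r\to\ell$, this is $\beta\ell/\ell=\beta$, exactly what we want. It remains to rule out the degenerate case $r\to 1$ (which would force $z_0^*\to 0$, giving $g(z_0^*)\to (1-e^{-a})$, hence $r=\beta/(1-e^{-a})$, a constant $\neq 1$ for generic $\beta$ — contradiction unless $\beta=1-e^{-a}$, a measure-zero coincidence that can be handled separately or shown to also give the limit $1$), and the case $r\to\ell>1$ (impossible since then $z_0^*\to 0$ again and the same contradiction arises). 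So a short monotonicity/compactness argument pins down $r\to\ell\in(0,1)$ and yields $g(z_0^*)(1-z_0^*)\to\beta$.

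\textbf{Then $n\to\infty$.} Once $T\to\infty$ has been taken, $n$ has already dropped out of the leading term $\bigl(1-e^{-a(1-z_0^*)}\bigr)/\beta$ (the $nb_0$ terms are lower order in $T$), so the outer $n\to\infty$ limit is essentially vacuous for the bound itself; one only needs $n\to\infty$ to invoke \cref{prop: lower_bound_CR_general_case} cleanly and to be in the regime where $z_0^*\in(0,1)$ is well-defined. Combining: $\lim_{K,n\to\infty}\lim_{T\to\infty}\CR^{\rm sto}(\greedy,\cD)\ge 1$, and the reverse inequality $\le 1$ holds trivially, so equality follows.

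\textbf{Main obstacle.} The delicate point is the asymptotic analysis of $z_0^*$ as $K\to\infty$: establishing that $r=\beta/g(z_0^*)$ converges to a limit strictly inside $(0,1)$ and that $z_0^*\to 1-r$, rather than collapsing to $0$. This requires using the structure of $g$ — in particular that $g$ is continuous and increasing on $(0,1)$ with $g(0^+)=1-e^{-a}>0$ and $g(1^-)=a$ — to show the fixed-point equation $g(z_0^*)(1-z_0^*)\to\beta$ has a consistent solution and that the implicitly-defined sequence actually converges to it. I would make this rigorous by treating $z_0^*$ as a function of $K$, showing it is monotone in $K$ (each added term $r^K$ in the sum pushes $z_0^*$ down, but $g$ decreases too, partially compensating), hence convergent, and then passing to the limit in the defining equation to identify the limit uniquely via the geometric-series identity $z_0^*/(1-r)=1$ combined with $r=\beta/g(z_0^*)$, which collapses to $g(z_0^*)(1-z_0^*)=\beta$, i.e.\ $1-e^{-a(1-z_0^*)}=\beta$.
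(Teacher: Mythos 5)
Your proposal is correct and follows essentially the same route as the paper: take $T\to\infty$ in the bound of \cref{prop: lower_bound_CR_general_case} to obtain $g(z_0^*)(1-z_0^*)/\beta$, then let $K\to\infty$ and use the geometric-series identity $z_0^*/(1-\beta/g(z_0^*))=1$ to conclude $1-e^{-a(1-z_0^*)}=\beta$, hence the limit $1$. The only difference is that you are more explicit than the paper about needing the trivial bound $\CR^{\rm sto}\leq 1$ and about why $\beta/g(z_0^*)$ stays strictly below $1$ as $K\to\infty$, a point the paper disposes of via the standing assumption in \cref{convergence_zk_cond}.
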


The proof is in \cref{app:proof_lbstochastickoneCR}

\begin{sketch}
    The proof relies on calculating $z_0^{*}$ as $K$ approaches infinity.  Subsequently, as $T$ approaches infinity, the limit of $\CR^{\rm sto}(\greedy,\cD)$ is shown to be $g(z_0^{*})(1-z_0^{*})/\beta$. Finally, as $K$ tends towards infinity, the limit converges to $1$, with the assurance that this convergence happens with high probability as $n$ tends to infinity.
\end{sketch}

\section{Conclusion}
\label{conclusion}
We study online matching on a bipartite graph $G = (U, V, E)$ with dynamic budget refills on nodes in $U$, extending the classic framework. Two settings are considered: \emph{stochastic}, where refills follow a Bernoulli process in Erdős–Rényi graphs, and \emph{adversarial}, with deterministic graphs and refill patterns. In the stochastic case, $\greedy$ performs well under periodic refills, with its competitive ratio $\CR$ approaching 1. In the adversarial setting, infrequent refills have little impact on $\balance$, aligning its performance with $b$-matching. Frequent refills, however, yield better upper bounds on $\CR$. Establishing lower bounds is challenging due to dynamic budgets. A naive lower bound of $1 - \frac{1}{e}$ is obtained by duplicating nodes in a variant of $\ranking$. This leaves a gap with the upper bound $1 - \frac{1 - \alpha}{e^{1 - \alpha}}$ when $m = o(\sqrt{T})$.

\bibliographystyle{plainnat}
\bibliography{biblio}
\newpage
\appendix

\section{Adversarial Case}
\label{appendix:adv_case}

\subsection{Proof of \cref{thm:ub_deterministic_12}}\label{app:proof_ub_deterministic_12}
\ubdeterministicbmatching*

\begin{proof}
    Let $b_0, m, T \in \lN^*$ such that $m\geq kb_0$ where $k \triangleq (1+b_0)^{b_0}$ and $m \leq T$.
    The bipartite graph of size $(k, kb_0)$ used in \cite[Sec. 2, Thm. 5]{KALYANASUNDARAM2000319} is denoted $(U_0, V_0, E_0)$. To put the emphasis on which set of nodes the edges are defined on, $E_0$ will actually be denoted $E_0(U_0, V_0)$ as the structure of edges will be used on different subsets of nodes of the final graph.

    The graph $G = (U, \iset{T}, E)$ with $U = \{u_1, \dots, u_n\}$ of size $n\in \lN^*$ is built as follows. Intuitively, the first period of length $m$ is implementing copies of $E_0$ on disjoint nodes, then one remaining node in $U$ is the only neighbor of all following time steps. Denoting $j=\left\lfloor\frac{m}{k b_0}\right\rfloor$, 
    \begin{align}
        E &= \left(\bigcup_{i = 1}^{j} E_0\left(U_i, V_i\right)\right) \cup (\{\tilde{u}\} \times \irange{jkb_0+1}{T})
    \end{align}
    where $U_i = \{u_l : l\in \irange{(i-1)k+1}{ik}\}$,~~ $V_i=\irange{(i-1)mkb_0+1}{ikb_0}$ and $\tilde{u}$ is chosen to be a node of $U_1$ which has been depleted of its initial budget during $V_1$ (there is at least one).\\
    
For each $i \in\iset{j}$, on each subset $V_i$ of time steps, as per \citet[Proof of Thm. 5]{KALYANASUNDARAM2000319}, $\alg$ matches at most $b_0(b_0+1)^{b_0} - b_0^{b_0+1}$ edges, while $\opt$ manages to match $b_0(b_0+1)^{b_0}$ edges.
    After time $jkb_0$, both $\alg$ and $\opt$ match the same number of edges $\gamma_T$ which is at most the sum of refills obtained by $\tilde{u}$ -- i.e. $\left\lfloor\frac{T}{m}\right\rfloor$ -- (its initial budget is used during period $V_1$).
    In the end,
    \begin{align}
        \CR^{\rm adv}(\alg, \cG_{T,m}) & \leq \frac{j (kb_0  - b_0^{b_0+1}) + \gamma_T}{j k b_0 + \gamma_T}\\
        &\leq \frac{j (kb_0 - b_0^{b_0+1})}{j k b_0} + o_T(1) & \text{as}~\gamma_T=o(\sqrt{T})~\text{and}~jkb_0 = \omega(\sqrt{T})\\
        &=1 - \frac{1}{\left(1+\frac{1}{b_0}\right)^{b_0}} + o_T(1) & \text{def. of }k=(1+b_0)^{b_0}
    \end{align}

    Similarly, it is straightforward to show that $\balance$ achieves the lower bound of the $b$-matching problem on each of the duplicates of $E_0(U_i, V_i)$, as these sub-graphs are disjoint.
    
  \begin{figure}[h!]
   \centering
\input{figure_th_1}
    \caption{The graph used for the proof of \cref{thm:ub_deterministic_12}}
    \label{fig:extension_kalyanasundaram_appendix}
\end{figure}

\end{proof}

\subsection{Proof of \cref{theo:ub_balance}}\label{app:proof_ub_balance}
\ubbalance*

We provide a slightly more detailed result here.

\begin{theorem}\label{theo:ub_balance_complet}
Assuming the initial budgets are $b_{1,0} = b_{2,0} = \dots = b_{n,0} = b_0 \geq 1$. For $m = o(\sqrt{T})$ and $mb_0 = o(T)$, then,
    \begin{align}
         \CR^{\rm adv}(\balance, \cG_{T,m}) &\leq 1 - \frac{mb_0+t_0}{e(mb_0 + 2t_0)} - \frac{1}{e}\int_0^\alpha \frac{x^2e^x}{1-x}{\rm d}x \nonumber\\
         &+\frac{mb_0}{t_0}\left(1 - \frac{1}{e} + \frac{1}{e}\int_0^\alpha \frac{x(\alpha-x)e^x}{1-x}{\rm d}x\right) + o_{m,T}(1)
    \end{align}
 where $\alpha$ is defined as follows $\int_{0}^{\alpha} \frac{xe^x}{1-x}\mathrm{d}x = 1 - \frac{t_0}{mb_0+2t_0}$. The upper bound is reached for the graph defined in the proof.
\end{theorem}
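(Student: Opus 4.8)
\textbf{Proof proposal for Theorem~\ref{theo:ub_balance_complet}.}

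The plan is to make the informal construction sketched after Theorem~\ref{theo:ub_balance} fully precise and to carry out the accounting of matched edges for $\balance$ versus $\opt$ on that graph $\GT$. First I would fix the parameters: choose $t_0 \simeq T/e$ and set up the three-block structure of $U$, with $|U|\simeq 2m-1$ available during the first phase $[1,t_0]$, then a sudden removal at time $t_0$ of all but $m-1$ nodes (the adversary deletes the nodes on which $\balance$ has accumulated the most budget), and finally a schedule of removal times $t_0 < t_1 < \dots < t_{m-2} < T$ at which the remaining $U$-nodes are eliminated one by one. I would justify that during $[1,t_0]$ both $\opt$ and $\balance$ match every arriving $V$-node (the refill supply $\lfloor t_0/m\rfloor \cdot |U| $ plus initial budgets $n b_0$ is enough to keep pace), so the two algorithms agree up to $t_0$, and indeed up to $t_{m-2}$ by the same argument; the whole discrepancy is produced between $t_{m-2}$ and $T$. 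This reduces the theorem to computing (i) the remaining budget $P_{t_{m-2}}$ that $\balance$ carries into the last phase, and (ii) how many of the $T-t_{m-2}$ remaining arrivals each algorithm can match.

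Next I would set up and solve the two recurrences stated in the sketch. The times $t_i$ satisfy $t_{i+1} \approx b_0 - 1 + t_i + (b_0+t_i)/(m-1)$, a linear recurrence whose solution is $t_i \approx (1+\tfrac{1}{m-1})^i (t_0 + mb_0 - m + 1) - mb_0 + m - 1$; I would verify that the constraint $t_{m-2}<T$ together with $t_0\simeq T/e$ is what pins down the constant $e$ (this is where $(1+\tfrac1{m-1})^{m} \to e$ enters, using $m=o(\sqrt T)$ so that the $o$-terms are controlled), and also where $\alpha$ appears as the fraction $\alpha = \lim (t_i - t_0)/(\text{something})$, characterized through $\int_0^\alpha \frac{xe^x}{1-x}\,\mathrm{d}x = 1 - \frac{t_0}{mb_0+2t_0}$. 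Then I would solve the budget recurrence
\begin{align*}
P_{t_i} \approx \Big(P_{t_{i-1}} + (n-i)(t_i-t_{i-1})/m - (t_i - t_{i-1})\Big)\frac{n-i-1}{n-i},
\end{align*}
passing to a continuous (ODE) limit in which $x = (t_i - t_0)/t_0$-type variable runs over $[0,\alpha]$ and the integrating factor $\frac{n-i-1}{n-i}$ accumulates to something like $\exp(-\int \cdots)$; extracting $P_{t_{m-2}}$ from this gives the $\int_0^\alpha \frac{x^2 e^x}{1-x}\,\mathrm{d}x$ and $\int_0^\alpha \frac{x(\alpha-x)e^x}{1-x}\,\mathrm{d}x$ integrals appearing in the bound. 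Finally, in the last phase $[t_{m-2},T]$, $\opt$ — knowing the eliminations in advance — never wastes budget on a node about to be removed, so it matches essentially all $T-t_{m-2}$ arrivals, while $\balance$ loses exactly the residual budget $P_{t_{m-2}}$ sitting on eliminated nodes; assembling $\balance(\GT) = \opt(\GT) - P_{t_{m-2}} + o(T)$ and dividing by $\opt(\GT) = T + o(T)$ yields the stated expression, and the clean $1 - \frac{1-\alpha}{e^{1-\alpha}}$ form of Theorem~\ref{theo:ub_balance} follows by dropping the lower-order $mb_0/t_0 = o(1)$ terms.

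The main obstacle, as the sketch itself flags, is the rigorous control of the floor/ceiling roundings and the passage from the discrete recurrences for $t_i$ and $P_{t_i}$ to their continuous counterparts while keeping every error term genuinely $o_{m,T}(1)$ after dividing by $\opt(\GT)\asymp T$. In particular one must check that summing $m-2$ per-step errors (each of size roughly $O(1)$ or $O(1/m)$) against a denominator of order $T$ stays negligible precisely in the regime $m=o(\sqrt T)$ and $mb_0=o(T)$ — this is the reason those hypotheses are exactly as stated — and that the adversary's ``remove the highest-budget nodes first'' choice is actually what maximizes $P_{t_{m-2}}$, which requires arguing that $\balance$'s equalizing behavior combined with the removal order leaves the claimed residual budget and not less. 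A secondary but nontrivial point is verifying the claim that $\opt$ and $\balance$ coincide all the way up to $t_{m-2}$, i.e.\ that the budget bookkeeping in the middle blocks never forces either algorithm to skip an arrival; this needs a careful invariant on the total available budget versus the number of remaining arrivals in each sub-phase.
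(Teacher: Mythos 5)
Your graph construction and the two recurrences (for the removal times $t_i$ and for the budget) are the same as the paper's, but the central accounting identity you build the proof around is false for this graph. You claim that $\balance$ and $\opt$ match every arrival up to $t_{m-2}$ and that the entire gap is the residual budget at that time, i.e.\ $\balance(\GT)=\opt(\GT)-P_{t_{m-2}}+o(T)$. This cannot hold: after the mass removal at $t_0$ only $m-1$ nodes remain, so the connected pool receives only $m-1$ refills per $m$ arrivals; its total budget (about $t_0/2+mb_0$ at time $t_0$) is drained both by this deficit and by the budget discarded at each removal time, and it is exhausted already at phase $i^*=\lfloor\alpha^* m\rfloor$ with $\alpha^*\simeq 0.6<1$ --- this exhaustion, not the constraint $t_{m-2}<T$, is what defines $\alpha$. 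Beyond $t_{i^*}$, $\balance$ misses a constant fraction of arrivals, so it cannot agree with $\opt$ up to $t_{m-2}$: since $t_{m-2}=T-o(T)$, agreement up to $t_{m-2}$ would force $\balance(\GT)\geq T-o(T)$, contradicting the very bound ($\approx 0.733\,T$) the theorem asserts is attained. Moreover the loss is not a single end-of-horizon residual: by budget conservation it is the sum, over all disconnection events, of the budget sitting on the removed nodes --- roughly $t_0/2+mb_0$ thrown away at $t_0$ (this produces the $\frac{mb_0+t_0}{e(mb_0+2t_0)}$ term) plus $\sum_{i}\bigl\lceil B^{(i)}_{t_i}/(m-i)\bigr\rceil$ over the later removals (this produces the two integrals). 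Your recursion for $P_{t_i}$, run literally to $i=m-2$ under the assumption that every arrival is matched, crosses zero before then and would yield a loss of $o(T)$, i.e.\ $\CR\to 1$.

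The paper's actual proof avoids any such claim: it writes $\balance(\GT)=A_0+\sum_i A_i$ with the exact identity $A_i=B^{(i)}_{t_{i-1}}-B^{(i)}_{t_i}+(m-i)\bigl(\lfloor t_i/m\rfloor-\lfloor t_{i-1}/m\rfloor\bigr)$, valid whether or not arrivals are missed, telescopes it using $B^{(i+1)}_{t_i}=B^{(i)}_{t_i}-\lceil B^{(i)}_{t_i}/(m-i)\rceil$ (the equalization property of \cref{lemma:balance_keeps_equal_budgets} combined with the adversary removing the largest-budget node), and then estimates the three order-$T$ terms: $B^{(1)}_{t_0}$ (\cref{lem:B_0_1}), $\sum_i\lceil B^{(i)}_{t_i}/(m-i)\rceil$ via an auxiliary sequence $Y_i$ that is a valid approximation only up to the exhaustion index $i^*$ defined by the integral equation (\cref{lemma:sum_B_i_over_m}), and $\sum_i\lfloor t_i/m\rfloor$ (\cref{lemma:sum_t_i_over_m}), all against $\opt(\GT)=t_{m-1}=T-o(T)$ (\cref{lem:T_t_m_1_=_oT}). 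To repair your argument you would need to replace the single-residual bookkeeping by this (or an equivalent) conservation accounting over all removal times, and treat the post-exhaustion phases, where $\balance$ only matches refills, separately.
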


The proof is organized as follows:
\begin{enumerate}
    \item Definition of the adversarial graph.
    \item Decomposition of $\balance(\GT)$.
    \item Several lemmas to treat each term of the decomposition.
\end{enumerate}

\paragraph{Definition of the adversarial graph for $\balance$.}

For $b_0, m, T \in \lN^*$ such that $m \leq T$, the number of $U$-nodes is set to $n = m-1 + \max\left(\left\lceil\frac{t_0}{b_0 + \left\lfloor\frac{t_0}{m}\right\rfloor}\right\rceil, \left\lceil\frac{m\left\lfloor\frac{t_0}{m}\right\rfloor}{b_0 + \left\lfloor\frac{t_0}{m}\right\rfloor-1}\right\rceil \right)$ (Note that when $b_0 \ll \frac{t_0}{m}$, then $n \simeq 2m - 1$). The graph $\GT=(U, V, E)$ is
defined as follows,
$$
 \begin{cases}
        U = \iset{n} \\
        V = \iset{T} \\
        E = (U\times \iset{t_0}) \cup (U_1 \times \irange{t_0+1}{t_1}) \cup \dots \cup (U_{m-1} \times \irange{t_{m-2}+1}{t_{m-1}} \cup (U_{m-1} \times \irange{t_{m-1}+1}{T})
    \end{cases}
    $$
where,
\begin{itemize}
    \item $U_1$ is the subset of the $m-1$ node with the lowest budget at time $t_0$, i.e.
$$U_1 \mathop\sim^{\text{unif}} \{A\subseteq U: |A| = m-1, \forall u \in A, u'\in U\setminus A, b_{u,t_0}\leq b_{u',t_0}\}\,.$$
    \item for any $i> 1$, $U_i$ is built be removing the node with the lowest budget at time $t_{i-1}$ from $U_{i-1}$ -- i.e. $U_i = U_{i-1} \setminus \{u_i\}$ where 
$$u_i \mathop\sim^{\text{unif}} \argmin_{u\in U_{i-1}}b_{u, t_{i-1}}\,.$$
    \item for any $i\geq 1$, $t_i = \inf\{t > t_{i-1}: b_0 + \left\lfloor\frac{t}{m}\right\rfloor = (t - t_{i-1})\}$. \textit{Intuition: $t_i$ is the time it takes to take the budget of $u_i$ to 0 by never matching $u_i$ before $t_{i-1}$ and matching it at every time step between $t_{i-1}$ and $t_i$ (which is what ${\tt OPT}$ does).}
    \item $t_0$ is chosen such that $T - t_{m-1} = o(T)$ (it is possible as proven in \cref{lem:T_t_m_1_=_oT})
\end{itemize}
\begin{figure}[h!]
   \centering
\input{figure_th_2}
    \caption{The graph $\GT$ used for the proof of \cref{theo:ub_balance} }
    \label{fig:graph_ub_adv_appendix}
\end{figure}
\begin{proof}
    The objective is to compute the performance of $\balance$ and $\opt$ on the graph $G^{\text{Th. 2}}$ defined above to obtain a bound on the $\CR$.
\paragraph{Performance of $\opt$.} Before time $t_0$, $\opt$ can use nodes from $U\setminus U_1$ to match a each time step: $|U\setminus U_1| = \max\left(\left\lceil\frac{t_0}{b_0 + \left\lfloor\frac{t_0}{m}\right\rfloor}\right\rceil, \left\lceil\frac{m\left\lfloor\frac{t_0}{m}\right\rfloor}{b_0 + \left\lfloor\frac{t_0}{m}\right\rfloor-1}\right\rceil \right)$, which ensures that the total budget of nodes in $U\setminus U_1$ over the period $\iset{t_0}$ is at least $t_0$ (accounting for the last refill that cannot necessarily be fully used). As a remark, if $b_0 = 1$, this simplifies to $|U\setminus U_1| = m$: with a refill every $m$ timesteps, $m$ nodes suffice to match at every time step. Thus, at time $t_0$, $\opt$ never matched any node from $U_1$. Then, by induction and definition of $t_i$, $\opt(\GT) = t_{m-1}$.

    \paragraph{Performance of $\alg$.}
    \begin{align}
        \balance(\GT) &= \underbrace{\sum_{t=1}^{t_0} \sum_{u\in U} x_{u,t}}_{\triangleq A_0}+ \sum_{i=1}^{m-1}\underbrace{\sum_{t=t_{i-1}+1}^{t_i}\sum_{u\in U_i} x_{u,t}}_{\triangleq A_i}\\
        &= A_0 + \sum_{i=1}^{m-1} A_i\\
        & = A_0 + \sum_{i=1}^{m-1} B^{(i)}_{t_{i-1}} - B^{(i)}_{t_{i}} + (m-i)\left(\left\lfloor\frac{t_i}{m}\right\rfloor - \left\lfloor\frac{t_{i-1}}{m}\right\rfloor\right) ~~~~~~~~~~~~~\text{(by induction)}\label{eq:18}\\
        \text{where } B^{(i)}_{t} = &\sum_{u\in U_i}b_{u,t}\,,\nonumber\\
        & = A_0 + \sum_{i=1}^{m-1} B^{(i)}_{t_{i-1}} - B^{(i)}_{t_{i}} + \sum_{i=1}^{m-1}(m-i)\left\lfloor\frac{t_i}{m}\right\rfloor - \sum_{i=0}^{m-2}(m-i-1) \left\lfloor\frac{t_{i}}{m}\right\rfloor\\
         & = A_0 + \left\lfloor\frac{t_{m-1}}{m}\right\rfloor - (m-1)\left\lfloor\frac{t_0}{m}\right\rfloor + \sum_{i=1}^{m-1} B^{(i)}_{t_{i-1}} - B^{(i)}_{t_{i}} +  \sum_{i=1}^{m-2} \left\lfloor\frac{t_{i}}{m}\right\rfloor\\
         & = A_0 - (m-1)\left\lfloor\frac{t_0}{m}\right\rfloor + \sum_{i=1}^{m-1} B^{(i)}_{t_{i-1}} - B^{(i)}_{t_{i}} + \sum_{i=1}^{m-1} \left\lfloor\frac{t_{i}}{m}\right\rfloor\\
         & = A_0 - (m-1)\left\lfloor\frac{t_0}{m}\right\rfloor + \sum_{i=1}^{m-1} B^{(i)}_{t_{i-1}} - B^{(i-1)}_{t_{i-1}} + B^{(i-1)}_{t_{i-1}} - B^{(i)}_{t_{i}} + \sum_{i=1}^{m-1} \left\lfloor\frac{t_{i}}{m}\right\rfloor\\
         & = A_0 - (m-1)\left\lfloor\frac{t_0}{m}\right\rfloor + B^{(0)}_{t_{0}} - B^{(m-1)}_{t_{m-1}} + \sum_{i=1}^{m-1} B^{(i)}_{t_{i-1}} - B^{(i-1)}_{t_{i-1}} + \sum_{i=1}^{m-1} \left\lfloor\frac{t_{i}}{m}\right\rfloor\\
         & = A_0 - (m-1)\left\lfloor\frac{t_0}{m}\right\rfloor + B^{(0)}_{t_{0}} - B^{(m-1)}_{t_{m-1}} + \sum_{i=0}^{m-2} B^{(i+1)}_{t_{i}} - B^{(i)}_{t_{i}} + \sum_{i=1}^{m-1} \left\lfloor\frac{t_{i}}{m}\right\rfloor\\
         & = A_0 - (m-1)\left\lfloor\frac{t_0}{m}\right\rfloor + B^{(1)}_{t_{0}} - B^{(m-1)}_{t_{m-1}} + \sum_{i=1}^{m-2} B^{(i+1)}_{t_{i}} - B^{(i)}_{t_{i}} + \sum_{i=1}^{m-1} \left\lfloor\frac{t_{i}}{m}\right\rfloor\\
         & = \underbrace{A_0 - (m-1)\left\lfloor\frac{t_0}{m}\right\rfloor}_{\triangleq Q_1} + \underbrace{B^{(1)}_{t_{0}}}_{\triangleq Q_2} - \underbrace{B^{(m-1)}_{t_{m-1}}}_{\triangleq Q_3} - \underbrace{\sum_{i=1}^{m-2} \left\lceil\frac{B^{(i)}_{t_{i}}}{m-i}\right\rceil}_{\triangleq Q_4}  + \underbrace{\sum_{i=1}^{m-1} \left\lfloor\frac{t_{i}}{m}\right\rfloor}_{\triangleq Q_5}
    \end{align}
    
    where the last equality comes from $B^{(i+1)}_{t_{i}} = B^{(i)}_{t_{i}} - \left\lceil\frac{B^{(i)}_{t_{i}}}{m-i}\right\rceil$ which in turn comes from the definition of $U_{i+1}$ (the adversary removes the node with most budget) combined with \cref{lemma:balance_keeps_equal_budgets} ($\balance$ equalizes budget among available nodes).
    
    The following lemma proves that $T=t_{m-1}+o(T)$, 
    \begin{lemma}
        \label{lem:T_t_m_1_=_oT}
        For $t_0\leq \frac{T}{e}$, $T=t_{m-1}+o(T)$.
    \end{lemma}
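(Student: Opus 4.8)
The plan is to solve the recursion defining the $t_i$'s up to controlled rounding error, and then compare $r^{m-1}:=(1+\tfrac1{m-1})^{m-1}$ to $e$. First I would replace the implicit definition $t_i=\inf\{t>t_{i-1}:b_0+\lfloor t/m\rfloor=t-t_{i-1}\}$ by an explicit one-step recursion. Since $\phi(t):=t-\lfloor t/m\rfloor$ is non-decreasing with increments in $\{0,1\}$, it hits every integer it passes, so $\phi(t_i)=t_{i-1}+b_0$ exactly; combining this with $t/m-1<\lfloor t/m\rfloor\le t/m$ yields
\[
\tfrac{m}{m-1}\,(t_{i-1}+b_0-1)<t_i\le \tfrac{m}{m-1}\,(t_{i-1}+b_0),
\]
that is, $t_i=r(t_{i-1}+b_0)-\rho_i$ with $r:=\tfrac{m}{m-1}=1+\tfrac1{m-1}$ and $\rho_i\in[0,r)\subseteq[0,2)$, which makes the heuristic recursion $t_{i+1}\approx b_0-1+t_i+\tfrac{b_0+t_i}{m-1}$ rigorous.

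Next I would unroll this affine recursion around its fixed point $t^\star=-mb_0$ (indeed $r(t^\star+b_0)=t^\star$): putting $w_i:=t_i+mb_0\ge 0$ turns it into $w_i=rw_{i-1}-\rho_i$, hence
\[
t_{m-1}=r^{m-1}(t_0+mb_0)-mb_0-E,\qquad 0\le E:=\sum_{i=1}^{m-1}r^{m-1-i}\rho_i\le 2\,\frac{r^{m-1}-1}{r-1}=2(m-1)(r^{m-1}-1).
\]
Because $r^{m-1}=(1+\tfrac1{m-1})^{m-1}\le e$, the accumulated rounding error is $E=O(m)=o(\sqrt T)$, while $(r^{m-1}-1)mb_0=O(mb_0)=o(T)$ by hypothesis; thus $t_{m-1}=r^{m-1}t_0+o(T)$. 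I would also record that $r^{m-1}\uparrow e$ with $e-r^{m-1}=\Theta(1/m)$ (from $(1+\tfrac1k)^k=e^{\,1-\frac1{2k}+O(k^{-2})}$); this is the point at which the regime $m\to\infty$ — implicit in the $o_{m,T}(1)$ of Theorem~\ref{theo:ub_balance} — enters.

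Finally I would choose $t_0$. The map $t_0\mapsto t_{m-1}$ is non-decreasing and its unit increments are bounded by $r^{m-1}+E=O(m)$, so taking $t_0$ to be the largest integer with $t_{m-1}\le T$ (which, by the displayed identity, is $\le \tfrac Te+o(T)$, so that the hypothesis $t_0\le T/e$ holds up to lower-order terms and the graph $\GT$ is well defined) gives $0\le T-t_{m-1}\le t_{m-1}(t_0+1)-t_{m-1}(t_0)=O(m)=o(T)$, i.e.\ $T=t_{m-1}+o(T)$; equivalently, fixing $t_0=\lfloor T/e\rfloor$ yields $t_{m-1}=\tfrac{r^{m-1}}{e}T+o(T)$ and $T-t_{m-1}=\tfrac{e-r^{m-1}}{e}T+o(T)=o(T)$. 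The main obstacle is the bookkeeping of the floor terms: each $\rho_i\le 2$ is harmless on its own but gets amplified by $r^{m-1-i}$ when the recursion is unrolled, so one must verify that the geometric sum stays $O(m)$ (it does, since $\tfrac{r^{m-1}-1}{r-1}=(m-1)(r^{m-1}-1)$ with $r^{m-1}\le e$) and that together with $mb_0=o(T)$ it is negligible against $T$; one must also keep $t_{m-1}\le T$ so that the construction of the adversarial graph remains valid.
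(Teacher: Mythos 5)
Your proof is correct and follows essentially the same route as the paper: you solve the arithmetic--geometric recursion for the $t_i$'s in closed form up to floor-induced remainders of total size $O(m)$, take $t_0=\lfloor T/e\rfloor$, and compare $\left(1+\tfrac{1}{m-1}\right)^{m-1}$ with $e$, which is exactly the paper's argument via the idealized sequence $\tilde t_i$, its closed form, and the bound on $t_i-\tilde t_i$. The differences are organizational rather than substantive: you fold the rounding error into per-step remainders $\rho_i$ and unroll exactly instead of introducing $\tilde t_i$, and you usefully make explicit the hypotheses the paper also relies on implicitly ($m\to\infty$, $mb_0=o(T)$, and the concrete choice of $t_0$ rather than an arbitrary $t_0\le T/e$).
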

    
 \begin{proof}
According to \cref{lem:closed_forme_tilde_t_i}, 
\begin{align}
    \tilde{t}_{m-1} = \left(1 + \frac{1}{m-1}\right)^{m-1} \left(t_0 + m b_0 - m + 1\right) - m b_0 + m - 1
\end{align}
Putting everything together gives, 
\begin{align}
    T-t_{m-1}&=  T-t_{m-1}-\tilde{t}_{m-1}+ \tilde{t}_{m-1}\\
    &\leq T+ \left(1 + \frac{1}{m-1}\right)^{m-1} \left(t_0 + m b_0 - m + 1\right) - m b_0 + m - 1\\
    &\leq T+ e (t_0+mb_0)-mb_0
\end{align}
  choosing  $t_0$ such that $t_0= \lfloor T/e \rfloor $ along  with the fact that $m=o(\sqrt{T})$, implies that $T-t_{m-1}=o(T)$.
 \end{proof}

    \paragraph{Computation of the $\CR$.}
    
    \begin{align}
   & \CR^{\rm adv}(\balance, \cG_{T,m})\\
        &= \frac{Q_1 + Q_2 - Q_3 - Q_4 + Q_5}{t_{m-1} + o(T)} \\
        &= \frac{\cO\left(\frac{t_0}{m}\right) + Q_2 - Q_3 - Q_4 + Q_5}{t_{m-1} + o(T)}~~~~~~~~~~~~~~~~~~~~~~~~~~~~~~\text{as }A_0 = t_0 \\
        &= \frac{\cO\left(\frac{t_0}{m}\right) + (m-1)\left(b_0+\left\lfloor\frac{t_0}{m}\right\rfloor - \left\lfloor\frac{t_0}{n}\right\rfloor\right) + \cO(m) - Q_3 - Q_4 + Q_5}{t_{m-1} + o(T)}~~~~~~\text{\cref{lem:B_0_1}}\\
        &= \frac{\cO\left(\frac{t_0}{m}\right) + (m-1)\left(b_0+\left\lfloor\frac{t_0}{m}\right\rfloor - \left\lfloor\frac{t_0}{n}\right\rfloor\right) + \cO(m) - Q_4 + Q_5}{t_{m-1} + o(T)}\\
        & = \frac{1}{t_{m-1} + o(T)}\left(\cO\left(\frac{t_0}{m}\right) + (m-1)\left(b_0+\left\lfloor\frac{t_0}{m}\right\rfloor - \left\lfloor\frac{t_0}{n}\right\rfloor\right) - \lfloor \alpha^* m \rfloor \left\lceil\frac{B^{(1)}_{t_1}}{m}\right\rceil \right.\nonumber\\
        &~~~~~~ \left.+\bar{t}_0 \int_{\frac{1}{m}}^{\alpha^*} g_m(x){\rm d}x+ \frac{g_m(\alpha^*) - g_m(1/m)}{m} + \cO(m^2) + Q_5\right)~~~~~~~~~~~\text{\cref{lemma:sum_B_i_over_m}}\\
        & = \frac{1}{t_{m-1} + o(T)}\left(\cO\left(\frac{t_0}{m}\right) + (m-1)\left(b_0+\left\lfloor\frac{t_0}{m}\right\rfloor - \left\lfloor\frac{t_0}{n}\right\rfloor\right) - \lfloor \alpha^* m \rfloor \left\lceil\frac{B^{(1)}_{t_1}}{m}\right\rceil \right.\nonumber\\
        &~~~~~~ \left.+\bar{t}_0 \int_{\frac{1}{m}}^{\alpha^*} g_m(x){\rm d}x+ \frac{g_m(\alpha^*) - g_m(1/m)}{m} + \cO(m^2) \right.\nonumber\\
        &~~~~~~ \left.+ \left(\left(1 + \frac{1}{m-1}\right)^{m-1} - 1 + \frac{1}{m}\right) t_0 + \mathfrak{B}(m, b_0)\right)~~~~~~~~~~~~~~~~~~~~~~~~\text{\cref{lemma:sum_t_i_over_m}}
    \end{align}
    
    where $\alpha^* \in (1/m,1)$ the solution  of 
    \begin{align}
            \frac{\bar{t}_0}{m} \int_{\frac{1}{m}}^{\alpha^*} \frac{z}{1-z}e^z {\rm d}z - m\alpha^* - Y_1 = 0\,,
    \end{align}
    and 
    $$g_m(x) = \frac{x(\alpha^*-x)}{1-x}\left(1+\frac{1}{m-1}\right)^{mx}$$
    thus
    $\mathfrak{B}(m, b_0) \leq (e-2)mb_0 + b_0$
\end{proof}

\subsubsection{Proof of \cref{lemma:balance_keeps_equal_budgets}}
The following lemma states that $U$-nodes that were available exactly at the same time steps in the past should have the same budget within one unit when the algorithm is $\balance$.

\begin{lemma}\label{lemma:balance_keeps_equal_budgets}
    Let $W\subseteq U$ such that $\forall s\leq t \in V$, $(\exists u \in W, (u,s)\in E) \Rightarrow (\forall u \in W, (u,s)\in E)$. For the algorithm $\balance$,
    \begin{align}
        \exists \beta_t\in\lN,\forall u\in W, \exists z_{u,t}\in \{0,1\},~s.t.~ b_{u,t} = \beta_t + z_{u,t} ~~~\text{and}~~~ \sum_{u'\in W} z_{u',t} < |W|
    \end{align}
\end{lemma}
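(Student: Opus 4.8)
The plan is to prove the statement by induction on $t$, tracking the invariant that among the nodes of $W$ that have always been available at exactly the same time steps, the budgets take at most two consecutive integer values $\beta_t$ and $\beta_t+1$, and that the ``top'' value $\beta_t+1$ is not attained by every node of $W$ (so at least one node sits at $\beta_t$). The hypothesis on $W$ guarantees that at every time step $s\le t$, either all of $W$ is eligible for the arriving vertex $s$ or none of $W$ is, so the subset $W$ behaves homogeneously with respect to the arrival sequence.

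\textbf{Base case.} At $t=0$ all nodes have budget $b_0$, so take $\beta_0=b_0$ and $z_{u,0}=0$ for all $u$; then $\sum_{u\in W} z_{u,0}=0<|W|$.

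\textbf{Inductive step.} Assume the claim at time $t-1$: $b_{u,t-1}=\beta_{t-1}+z_{u,t-1}$ with $z_{u,t-1}\in\{0,1\}$ and $\sum_{u\in W} z_{u,t-1}<|W|$. Consider the transition to time $t$. There are three cases to handle. (i) If no node of $W$ is eligible for vertex $t$ (i.e.\ $t$ has no neighbor in $W$), then only the deterministic refill $\mathbbm{1}_{t\bmod m=0}$ is applied uniformly to every node of $W$; set $\beta_t=\beta_{t-1}+\mathbbm{1}_{t\bmod m=0}$ and $z_{u,t}=z_{u,t-1}$, and the invariant is preserved verbatim. (ii) If all of $W$ is eligible but $\balance$ matches vertex $t$ to some node \emph{outside} $W$, again every node of $W$ only receives the uniform refill, same as (i). (iii) If all of $W$ is eligible and $\balance$ matches vertex $t$ to a node $u^\star\in W$: by definition $\balance$ picks the eligible node with the highest current budget; since every node of $W$ is eligible, $u^\star$ is (one of) the node(s) of $W$ with maximal budget, so $b_{u^\star,t-1}=\beta_{t-1}+1$ if some node sits at the top, or $b_{u^\star,t-1}=\beta_{t-1}$ if all nodes of $W$ are already at the common value $\beta_{t-1}$ (this sub-case requires $\sum z_{u,t-1}=0$). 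Decrement $u^\star$ by one and apply the uniform refill. One checks that in the first sub-case the multiset of budgets on $W$ goes from ``some at $\beta_{t-1}+1$, the rest at $\beta_{t-1}$'' to ``one fewer at the top'', which is still of the required two-consecutive-values form with the same $\beta_t=\beta_{t-1}+\mathbbm{1}_{t\bmod m=0}$; in the second sub-case it goes from ``all at $\beta_{t-1}$'' to ``one at $\beta_{t-1}-1$, the rest at $\beta_{t-1}$'', i.e.\ $\beta_t=\beta_{t-1}-1+\mathbbm{1}_{t\bmod m=0}$ with exactly one node at $\beta_t+1\cdot\mathbbm{1}[\ldots]$ — here one must double check the bookkeeping, but in all situations at most two consecutive values occur and the bottom value is attained, so $\sum_{u\in W} z_{u,t}\le |W|-1<|W|$. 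Note a potential edge issue if $b_{u^\star,t-1}=\beta_{t-1}$ with $\beta_{t-1}=0$: then $u^\star$ would not be matchable (constraint 3), so case (iii) cannot occur and we fall back to case (i)/(ii); this needs a sentence.

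\textbf{Main obstacle.} The only real subtlety is sub-case (iii) when $W$ has all its budgets already equal (the $\sum z=0$ situation): matching one node pushes it \emph{below} the common level, temporarily making $\beta_t$ drop, and one has to re-index $\beta_t$ correctly so that the ``at most two consecutive values, bottom attained'' invariant still reads as in the lemma statement. Carefully, after a match the two values present are $\beta_{t-1}-1$ and $\beta_{t-1}$; choosing $\beta_t := \beta_{t-1}-1$ (plus the refill indicator) makes every $z_{u,t}\in\{0,1\}$ and forces the matched node to have $z_{u^\star,t}=0$, hence $\sum z_{u,t}=|W|-1<|W|$. I expect all other cases to be a routine uniform-shift argument; the write-up cost is essentially in enumerating these cases cleanly and invoking the homogeneity hypothesis on $W$ at each step to ensure ``all eligible or none eligible''.
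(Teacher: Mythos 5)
Your proposal is correct and follows essentially the same route as the paper's proof: an induction on time with the same case split (matched node outside $W$ versus inside $W$, and within the latter, all budgets equal versus not), and the same re-indexing choices $\beta_t=\beta_{t-1}+\mathbbm{1}_{t\,\mathrm{mod}\,m=0}$ or $\beta_t=\beta_{t-1}-1+\mathbbm{1}_{t\,\mathrm{mod}\,m=0}$. Your extra remark on the zero-budget edge case and the explicit tracking of $\sum_u z_{u,t}$ are fine additions but do not change the argument.
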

\begin{proof}
    We first focus on the first part of the result.
    Let $t \in \lN^*$ and $W\subseteq U$ such that $\forall s\leq t \in V$, $(\exists u \in W, (u,s)\in E) \Rightarrow (\forall u \in W, (u,s)\in E)$. We need to prove that using the $\balance$ algorithm implies that the budgets of the nodes at time $t$ differ only by one. We will prove it by recursion using the following hypothesis, 
    $$K(i): \exists \beta_i\in \mathbb{N}, \forall u\in W, b_{u,i}= \beta_i + z_{u,i}~~~~\text{with}~~~~z_{u,i}\in \{0,1\}~~~\text{and}~~~ \sum_{u'\in W} z_{u',i} < |W|$$
    
    By assumption, $\forall u\in W$ $b_{u,0}=b_0$, which means that $K(0)$ holds. 
    
    At time $i$, $\balance$ chooses $u_i \in \argmax_{u\in U: (u,i)\in E}b_{u,i}$.
    If $u_i\not\in W$, the result is direct from $K(i-1)$ with $\beta_i = \beta_{i-1} + \indicator{i ~{\rm mod}~m = 0}$. Otherwise, there are two cases when $u_i\in W$.
   
    \paragraph{Case $\forall u \in W, b_{u,i-1}=\beta_{i-1}$.} Then, by choosing $\beta_i = \beta_{i-1} - 1 + \indicator{i ~{\rm mod}~m = 0}$, we have $b_{u_i, i} = \beta_i$ and for any $u'\in W\setminus\{u_i\}, b_{u', i} = b_{u, i-1} + \indicator{i ~{\rm mod}~m = 0}$.

    \paragraph{Case $\exists u,u' \in W, b_{u,i-1}\neq b_{u', i-1}$.} Then, by choosing $\beta_i = \beta_{i-1} + \indicator{i ~{\rm mod}~m = 0}$, we have $b_{u_i, i} = \beta_i$ and $\forall u\in W\setminus\{u_i\}, b_{u,i} = b_{u, i-1} + \indicator{i ~{\rm mod}~m = 0}$.

    In both cases, $K(i)$ holds.\\
\end{proof}

\subsubsection{Proof of \cref{lem:B_0_1}}

During the phase $i$, between $t_{i-1}$ and $t_i$, the graph is fully-connected to $U_i$. Thus, $\sum_{u\in U_i} b_{u,t}$ follows the following dynamic $Z_t$.

Given $k, m, t, j\in \mathbb{N}^*$, the dynamic of interest is
\begin{align}
    Z_t = Z_{t-1} - \indicator{Z_{t-1} \geq 1} + k \indicator{t~{\rm mod}~m=j}\,.
\end{align}
where $k = |U_i|$, and $j$ accounts for the fact that a phase begins at a time $t_{i-1}$ that is not necessarily a multiple of $m$.

\begin{lemma}
\label{Z_dynamic}
    For $k,m, t, Z_0, j \in \lN^*$, 
    \begin{align}
        Z_t = 
        \begin{cases}
            (Z_0 - t)_+ + k\indicator{t=j} & \text{if } t \leq j\\
            g(Z_j, k, t - j, m) & \text{if } j < t \leq j + t^*\\
            f(k,m,t-j, \tilde{t}) & \text{if } j + \tilde{t} \leq t 
        \end{cases}
    \end{align}
    \begin{align}
        t^* =
        \begin{cases}
            Z_j + k\left\lceil\frac{Z_j+1-m}{m-k}\right\rceil & \text{if } m>k\\
            Z_j & \text{if }m\leq k \text{ and } Z_j < m\\
            +\infty & \text{otherwise}
        \end{cases}
        & &\text{ and } \tilde{t} = m\left\lceil\frac{t^*}{m}\right\rceil
    \end{align}
    and $f(k,m,t,\Tilde{t})=\left(
        \indicator{k < m}(k-(t~{\rm mod}~m ))_{+}
         +
        \indicator{k \geq m}\left(k \left(1+\left\lfloor \frac{t-\tilde{t}}{m}\right\rfloor\right) - (t-\tilde{t})\right)
         \right)$ and  $~~g(Z_j,k,t,m)=\left(Z_j + k\left\lfloor \frac{t}{m}\right\rfloor - t \right)$
\end{lemma}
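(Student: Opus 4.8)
The goal is to give a closed-form description of the sequence $Z_t$ defined by $Z_t = Z_{t-1} - \mathbbm{1}_{Z_{t-1}\geq 1} + k\mathbbm{1}_{t\equiv j \,({\rm mod}\,m)}$, which tracks the total budget $\sum_{u\in U_i}b_{u,t}$ during a phase. The natural approach is to split the time axis into the three regimes that appear in the statement — before the first refill ($t\leq j$), a transient phase where the walk is still draining its initial stock faster than refills arrive ($j < t\leq j+t^*$), and a long-run periodic regime ($t\geq j+\tilde t$) — and to verify the formula on each piece by a short induction on $t$, using the fact that the dynamics is piecewise affine between refill times.

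\textbf{Step 1 (initial segment).} For $t\leq j$ there is no refill term, so $Z_t = Z_{t-1}-\mathbbm{1}_{Z_{t-1}\geq 1}$, a pure unit-decrement floored at $0$. An immediate induction gives $Z_t = (Z_0 - t)_+$, except that at $t=j$ one must also add the single refill $k$ that fires at $t=j$; hence $Z_t = (Z_0-t)_+ + k\mathbbm{1}_{t=j}$. (If $Z_0 > j$ the max is inactive and this is just $Z_0-t$.)

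\textbf{Step 2 (transient regime).} Starting from $Z_j$, I would analyze the behaviour over one refill period. Between two consecutive refill times the value decreases by exactly $1$ per step as long as it stays $\geq 1$, and jumps up by $k$ at each refill. The key observation is that while $Z$ remains large enough never to hit $0$ inside a period, $Z_{j+t}$ is exactly affine: $Z_{j+t} = Z_j + k\lfloor t/m\rfloor - t = g(Z_j,k,t,m)$. One then has to identify when this linear description first breaks, i.e. the first time the running value would drop to $0$ before the next refill. Setting $g$ to $0$ and solving for the period index gives the ceiling expression $\lceil (Z_j+1-m)/(m-k)\rceil$ in the case $m>k$ (the drift per period is $k-m<0$); multiplying back by the period length and adding the residual gives $t^*$. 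When $m\leq k$ the per-period drift is nonnegative, so if $Z_j<m$ the stock is exhausted within the very first period ($t^*=Z_j$) and otherwise it never empties ($t^*=+\infty$), which is precisely the three-case split for $t^*$.

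\textbf{Step 3 (stationary regime) and wrap-up.} Once the walk has first emptied, I would round up to the next refill time $\tilde t = m\lceil t^*/m\rceil$ and argue that from there on the process is exactly periodic with period $m$: each period it receives $k$ and loses $\min(k,m)$ net... more precisely, within a period after a refill it decreases by $1$ per step until either the next refill arrives (if $k\geq m$) or it hits $0$ and waits (if $k<m$), giving the two branches of $f(k,m,t,\tilde t)$ — the ``sawtooth clipped at $0$'' form $(k-(t\bmod m))_+$ when $k<m$, and the ``never-idle'' affine-within-period form $k(1+\lfloor (t-\tilde t)/m\rfloor) - (t-\tilde t)$ when $k\geq m$. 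I expect the main obstacle to be bookkeeping the off-by-one and floor/ceiling alignments at the seams $t=j$, $t=j+t^*$, and $t=j+\tilde t$: one must check that the closed forms agree at the boundary points and that the rounding in $t^*$ and $\tilde t$ correctly captures ``the first refill at or after the stock empties.'' This is routine but error-prone; the cleanest way is to prove a one-period recurrence $Z_{t+m}$ in terms of $Z_t$ valid once $t\geq j+\tilde t$ and then solve it explicitly, rather than inducting step by step.
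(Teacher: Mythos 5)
Your proposal follows essentially the same route as the paper's proof: an affine closed form $Z_j + k\lfloor t/m\rfloor - t$ while the walk stays positive, identification of the first hitting time $t^*$ by solving the resulting Diophantine/ceiling condition (the paper packages this as a separate lemma via Euclidean division), and then a case split $k<m$ versus $k\geq m$ for the periodic sawtooth regime after $\tilde t$, with the initial segment $t\leq j$ and the shift by $j$ handled exactly as you describe. The only cosmetic difference is that the paper first proves the $j=0$ case and then reduces general $j$ to it by a time shift, and it carries out the $k<m$ stationary regime by stepwise induction rather than your suggested one-period recurrence; these are equivalent in substance.
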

\begin{proof}
\textbf{First, the case when $\mathbf{j=0}$.}\\
    For $k, m, t, Z_0 \in \mathbb{N}^*$, $t^*$ is defined to be the first time at which $Z_t$ reaches $0$.
    \begin{align}
        t^* & = \min_{t\in\lN^*} t ~~~~\text{s.t. }Z_t=0
    \end{align}
    which value is given by \cref{lem:opt_t_star}.
    
    \textbf{For any $t \leq t^*$}, $\indicator{[Z_{t-1} > 0]} = 1$, thus, by recursion,
    \begin{align}
        Z_t &= Z_0 - t +  k\sum_{t'=1}^t\indicator{[t' ~{\rm mod}~ m =0]}\\
        & = Z_0 - t + k\left\lfloor\frac{t}{m}\right\rfloor\,.
    \end{align}

    \textbf{For any $t^* \leq t < \tilde{t}$, $t~{\rm mod}~m \neq 0$} and thus $Z_t = 0$ (by recursion starting at $Z_{t^*}=0$).\\

    \textbf{For any $t > \tilde{t}$}, the analysis is split between the case $k\geq m$ and $k < m$. In both cases $Z_{\tilde{t}} = k$ and we denote $t = \tilde{t} + \Delta t$.
 
    First, if $k\geq m$, similarly as before, we get $Z_{\tilde{t} + \Delta t} = (Z_{\tilde{t}} - \Delta t + k\left\lfloor\frac{\Delta t}{m}\right\rfloor)$: for $k\geq m$, it is always true that $Z_{\tilde{t} + \Delta t-1}>0$ which gives the result by recursion.\\

    Second, if $k < m$, the result is proved by recursion.
    
    Recursion hypothesis -- $H(t) = Z_{\tilde{t}+\Delta t} = \left(k - (t~{\rm mod}~m)\right)_+$ and by definition of $\tilde{t}$, $H(\tilde{t})$ holds as $Z_{\tilde{t}} = k$.
    \begin{align}
        Z_{t + 1} 
        & = Z_{t} -\indicator{Z_t > 0} + k\indicator{t + 1 ~{\rm mod}~m = 0}\\
        & = (k - (\tilde{t}+\Delta t)~{\rm mod}~m)_+ -\indicator{(k - (\tilde{t}+\Delta t)~{\rm mod}~m))_+ > 0} + k\indicator{(\tilde{t}+\Delta t+1) ~{\rm mod}~m = 0}\\
        & = (k - \Delta t~{\rm mod}~m)_+ -\indicator{(k - \Delta t~{\rm mod}~m) > 0} + k\indicator{(\Delta t+1) ~{\rm mod}~m = 0}
    \end{align}
    \begin{itemize}
        \item \textbf{If $(\Delta t+1)~{\rm mod}~m = 0$,}  we necessarily have $Z_{\tilde{t}+\Delta t} = 0$ (as $k \leq m-1$). Thus $Z_{\tilde{t}+\Delta t + 1} = k = (k - (\tilde{t}+\Delta t + 1)~{\rm mod}~m)_+$.
        \item \textbf{If $(\Delta t+1)~{\rm mod}~m \neq 0$ and $Z_{\tilde{t}+\Delta t} >0$,} we have $Z_{\tilde{t}+\Delta t + 1} = Z_{\tilde{t}+\Delta t} - 1$ which gives the result,
        \item \textbf{If $(\Delta t+1)~{\rm mod}~m \neq 0$ and $Z_{\tilde{t}+\Delta t} = 0$,} we have $Z_{\tilde{t}+\Delta t + 1} = Z_{\tilde{t}+\Delta t}$ which gives the result.
    \end{itemize} 
\textbf{Second, the general case when $\mathbf{0\leq j < m}$.}\\
Let $\tilde{Z}_t = Z_{t+j}$, $t_j^* = \min_{t\in\lN^*} t$ s.t. $\tilde{Z}_t=0$ and $\tilde{t}_j = m\left\lfloor\frac{t_j^*}{m}\right\rfloor$.
Using the result proved above for $j=0$ gives for any $t > j$ 
\begin{align}
    \tilde{Z}_{t - j} &= g(\tilde{Z}_0, k, t - j, m)\indicator{t-j \leq t_j^* } +\indicator{t-j \geq \tilde{t}_j} f(k,m,t-j, \tilde{t})\\
    & \Leftrightarrow Z_t = g(Z_j, k, t - j, m)\indicator{t-j \leq t^* } +\indicator{t-j \geq \tilde{t}} f(k,m,t-j, \tilde{t})
\end{align}
and for any $t \leq j$,
\begin{align}
    Z_t = (Z_0 - t)_+ + k\indicator{t = j}
\end{align}
\end{proof}

\begin{lemma}\label{lem:B_0_1}
    $B_{t_0}^{(1)} = (m-1)\left(b_0+\left\lfloor\frac{t_0}{m}\right\rfloor - \left\lfloor\frac{t_0}{n}\right\rfloor\right) + (m-1-(t_0 ~{\rm mod}~n)_+)$
\end{lemma}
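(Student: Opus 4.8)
\textbf{Proof proposal for \cref{lem:B_0_1}.}

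The plan is to track the total remaining budget $B_t^{(1)} = \sum_{u\in U_1} b_{u,t}$ of the $m-1$ lowest-budget nodes as $\balance$ runs on $\GT$ during the initial phase $\iset{t_0}$, and then evaluate it at time $t_0$. First I would note that during $\iset{t_0}$ the graph is fully connected: every $t\le t_0$ has all of $U$ as neighbours. Hence, on every such time step, $\balance$ matches some node and the global budget evolves deterministically: the total budget over all of $U$ increases by $n\lfloor t_0/n\rfloor$ — wait, more precisely, it gains one unit per node each time $t$ is a multiple of $m$ and loses exactly one unit per matched step. The subtlety is that $U_1$ is defined \emph{a posteriori} as the $m-1$ nodes of \emph{lowest} budget at time $t_0$, so I cannot simply say "$\balance$ never touches $U_1$". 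Instead I would argue via \cref{lemma:balance_keeps_equal_budgets}: since all nodes of $U$ are available at exactly the same time steps during $\iset{t_0}$, the hypothesis of \cref{lemma:balance_keeps_equal_budgets} holds with $W = U$, so there is a common base level $\beta_{t_0}$ and indicators $z_{u,t_0}\in\{0,1\}$ with $\sum_{u\in U}z_{u,t_0} < n$, i.e. at most $n-1$ of the nodes are at level $\beta_{t_0}+1$ and the rest at level $\beta_{t_0}$.

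Next I would pin down $\beta_{t_0}$ by a counting argument on the global budget. The sum $\sum_{u\in U} b_{u,t}$ starts at $nb_0$, increases by $n$ at each multiple of $m$ in $\{1,\dots,t_0\}$ (there are $\lfloor t_0/m\rfloor$ of them), and decreases by $1$ at each of the $t_0$ steps (each step matches exactly one node, since the graph is fully connected and $\balance$ always finds an available node with positive budget — here I would want $nb_0 + n\lfloor t_0/m\rfloor \ge t_0$, which holds by the choice of $n$). Therefore $\sum_{u\in U} b_{u,t_0} = nb_0 + n\lfloor t_0/m\rfloor - t_0$. Dividing by $n$ and using that the $b_{u,t_0}$ take only the two values $\beta_{t_0}$ and $\beta_{t_0}+1$ identifies $\beta_{t_0} = b_0 + \lfloor t_0/m\rfloor - \lceil t_0/n\rceil$ (or $b_0 + \lfloor t_0/m\rfloor - \lfloor t_0/n\rfloor - 1$ depending on the remainder), and the number of nodes sitting at the higher level $\beta_{t_0}+1$ equals $n\beta_{t_0} + (\text{that number}) = \sum b_{u,t_0}$, i.e. it is $(t_0 \bmod n)$ up to the rounding. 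Since $U_1$ consists of the $m-1$ \emph{lowest} nodes and $m-1 \le n - (t_0\bmod n)$ — wait, one must check $U_1$ avoids the high-level nodes as much as possible; when $m-1 \le n-1-(t_0\bmod n)$ all of $U_1$ sits at level $\beta_{t_0}$, otherwise $U_1$ absorbs $(m-1)-(n-1-(t_0\bmod n))$ high-level nodes. Summing over $U_1$ then gives $B_{t_0}^{(1)} = (m-1)\beta_{t_0} + \bigl((m-1) - (n-1-(t_0\bmod n))\bigr)_+$, which rearranges to the claimed formula $(m-1)(b_0 + \lfloor t_0/m\rfloor - \lfloor t_0/n\rfloor) + (m-1-(t_0\bmod n))_+$ after substituting the value of $\beta_{t_0}$ and reconciling the floor/ceiling bookkeeping.

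I expect the main obstacle to be exactly this floor/ceiling bookkeeping: getting $\beta_{t_0}$ and the count of high-level nodes correct simultaneously, and then checking that the $m-1$ lowest nodes pick up precisely $(m-1-(t_0\bmod n))_+$ units of excess, with the two cases ($U_1$ entirely at the base level versus $U_1$ straddling both levels) collapsing into one clean expression. A secondary point to verify carefully is that the fully-connected structure really forces one match per step throughout $\iset{t_0}$ — this needs $\beta_t \ge 0$ at all intermediate times, which follows from the budget never being exhausted globally under the stated size of $n$, but it should be stated explicitly. Everything else is the deterministic global-budget accounting plus an appeal to \cref{lemma:balance_keeps_equal_budgets}.
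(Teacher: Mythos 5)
Your proposal is correct and follows essentially the same route as the paper: invoke \cref{lemma:balance_keeps_equal_budgets} to get the two-adjacent-level structure of budgets at time $t_0$, compute the aggregate budget $B^{(1)}_{t_0}$'s precursor $B^{(0)}_{t_0}=nb_0+n\left\lfloor\frac{t_0}{m}\right\rfloor-t_0$ (the paper obtains this by citing \cref{Z_dynamic}, you by the equivalent direct accounting with the same non-exhaustion condition guaranteed by the choice of $n$), and then sum the $m-1$ lowest levels to get $(m-1)\left\lfloor B^{(0)}_{t_0}/n\right\rfloor+(m-1-(t_0~{\rm mod}~n))_+$. The floor/ceiling discrepancy you flag (whether the base level involves $\lceil t_0/n\rceil$ or $\lfloor t_0/n\rfloor$) is real but is a bookkeeping slip present in the paper's own final simplification, not a gap in your argument.
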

\begin{proof}
    By application of \cref{Z_dynamic}, $B_{t_0}^{(0)} = n b_0 + n\left\lfloor\frac{t_0}{m}\right\rfloor - t_0$. By application of \cref{lemma:balance_keeps_equal_budgets}, 
    \begin{align}
        B_{t_0}^{(1)} 
        & = (m-1)\left\lfloor\frac{n b_0 + n\left\lfloor\frac{t_0}{m}\right\rfloor - t_0}{n}\right\rfloor + (m-1-(t_0 ~{\rm mod}~n))_+\\
        & = (m-1)\left(b_0+\left\lfloor\frac{t_0}{m}\right\rfloor - \left\lfloor\frac{t_0}{n}\right\rfloor\right) + (m-1-(t_0 ~{\rm mod}~n))_+
    \end{align}
\end{proof}

\subsubsection{Proof of \cref{lemma:sum_t_i_over_m}}

This section is organized as follows: 
\begin{enumerate}
    \item A characterization of $t_i$ by a recursive equation.
    \item The introduction of $\tilde{t}_i$ (to approximate $t_i$).
    \item The quantification of the approximation error between $t_i$ and $\tilde{t}_i$. 
    \item A closed-form computation of $\sum_{i=1}^{m-1}\tilde{t_i}$.
    \item The final result.
\end{enumerate}

\paragraph{A characterization of $t_i$.} The following result allows to characterize the sequence of $t_i$ by a recursive equation.

\begin{lemma}\label{lem:opt_t_star}
    For $a, b \geq 0, m, c \geq 2$, if $t^* = \inf\{t\in\lN^*: b + c\left\lfloor\frac{t}{m}\right\rfloor = (t - a)\}$ then,
    \begin{align}
        t^* = \begin{cases}
            a+b + c\left\lceil\frac{a+b+1-m}{m-c}\right\rceil & \text{if } m>c\\
            a+b & \text{if }m\leq c \text{ and } a+b < m\\
            +\infty & \text{otherwise}
        \end{cases}
    \end{align}
\end{lemma}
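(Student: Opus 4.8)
The plan is to parametrize candidate solutions by the quotient $q=\lfloor t/m\rfloor$, turning the floor equation into a linear one and reducing the problem to finding the smallest admissible value of $q$. Writing $t=qm+r$ with $q\ge 0$ and $0\le r\le m-1$ (so that $\lfloor t/m\rfloor=q$), the equation $b+c\lfloor t/m\rfloor=t-a$ becomes $b+cq=qm+r-a$, i.e.
$$q(m-c)=a+b-r\qquad\text{and equivalently}\qquad t=a+b+cq.$$
Hence a value $q\ge 0$ is realized by an admissible $t$ if and only if $r:=a+b-q(m-c)$ lies in $\{0,1,\dots,m-1\}$, and then $t=a+b+cq$ is the unique corresponding value. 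Since $c\ge 2>0$, minimizing $t$ over admissible solutions is the same as minimizing $q$; therefore $t^*=a+b+c\,q_0$, where $q_0$ is the least admissible $q$, with $t^*=+\infty$ when no admissible $q$ exists (under the convention $\inf\emptyset=+\infty$).

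For the case $m>c$ we have $m-c\ge 1$, so $r=a+b-q(m-c)$ is strictly decreasing in $q$: the constraint $r\le m-1$ reads $q\ge(a+b+1-m)/(m-c)$, and the constraint $r\ge 0$ reads $q\le(a+b)/(m-c)$. The least $q\ge 0$ meeting the lower bound is $q_0=\max\bigl(0,\lceil(a+b+1-m)/(m-c)\rceil\bigr)$; it then remains to check that $q_0$ also meets the upper bound, which is precisely where $c\ge 2$ enters. If $q_0=0$ this is immediate; if $q_0=\lceil(a+b+1-m)/(m-c)\rceil\ge 1$ then $q_0<(a+b+1-m)/(m-c)+1=(a+b+1-c)/(m-c)$, so $q_0(m-c)<a+b+1-c\le a+b$. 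Thus $q_0$ is admissible and $t^*=a+b+c\,q_0=a+b+c\lceil(a+b+1-m)/(m-c)\rceil$, the truncation at $0$ being vacuous in the regime $a+b\ge c$ (in particular whenever $a+b\ge m$), which is the situation arising from \cref{Z_dynamic}.

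For $m\le c$ the sign of $m-c$ flips, so $r=a+b+q(c-m)$ is non-decreasing in $q$; then $q_0=0$ is the minimizer whenever it is admissible, i.e. exactly when $a+b=r\le m-1$, giving $t^*=a+b$. If instead $a+b\ge m$, then $r\ge a+b\ge m>m-1$ for every $q\ge 0$ (with equality $r=a+b$ only at $q=0$ in the borderline case $m=c$), so no admissible $q$ exists and $t^*=+\infty$. Combining the three cases yields the claimed formula. The only genuinely delicate point is the feasibility cross-check in the case $m>c$ — the minimal $q$ forced by $r\le m-1$ must not violate $r\ge 0$ — which is exactly the role of the hypothesis $c\ge 2$; the rest is bookkeeping of the boundary regimes and of the $\inf\emptyset=+\infty$ convention (together with the harmless degenerate case $a=b=0$, which does not occur in our applications since $b$ always represents a positive budget).
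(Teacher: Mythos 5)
Your proof is correct and follows essentially the same route as the paper's: write $t=qm+r$ by Euclidean division, convert the floor equation into the linear relation $t=a+b+cq$ with $q(m-c)=a+b-r$, and minimize over the admissible quotient, exactly as the paper minimizes $a+b+ck$ subject to $(a+b)+1-m\le (m-c)k\le a+b$. The only difference is that you additionally check that the minimal $q$ forced by $r\le m-1$ also satisfies $r\ge 0$ (via $c\ge 2$) and explicitly flag the $\max(0,\cdot)$ truncation and the degenerate boundary cases, feasibility points the paper's proof leaves implicit.
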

\begin{proof}
First,
    \begin{align}
        t^* & = \min_{t\in \lN^*} t ~~~\text{s.t.}~~ b + c\left\lfloor\frac{t}{m}\right\rfloor - (t - a) = 0\\
        & = \min_{t\in \lN^*} t ~~~\text{s.t.}~~ b + a + (c-m)\left\lfloor\frac{t}{m}\right\rfloor - \left\{\frac{t}{m}\right\} = 0 ~~~~~ \left(\{\cdot\} \text{ denotes the fractional part}\right)\label{eq:prob_opt_t_i:2}\\
        & = \min_{\substack{k\in\lN\\j\in[0 \dots m-1]}} km+j ~~~~ \text{s.t. } (a+b) - j + (c-m)k = 0 \text{ and }km+j>0\label{eq:prob_opt_t_i:3}\\       
        & = \min_{k\in\lN} a + b + c k ~~~~ \text{s.t. } (a+b) +1 -m \leq (m-c)k \leq a + b \text{ and }a+b+k > 0 \label{eq:prob_opt_t_i:4}\\       
    \end{align}
    Going from \cref{eq:prob_opt_t_i:2} to \cref{eq:prob_opt_t_i:3} is done by using the Euclidean division of $t$ by $m$ as $t = km + j$.
    As \cref{eq:prob_opt_t_i:4} is linear in $k$ with positive coefficients, it is minimized at the lowest feasible value of $k$ which is
    \begin{align}
        k^* = \begin{cases}
            \left\lceil\frac{a+b+1-m}{m-c}\right\rceil & \text{if } m>c\\
            0 & \text{if }m\leq c \text{ and } a+b < m\\
            +\infty & \text{otherwise}
        \end{cases}
    \end{align}
    The result follows by using the fact that $t^* = k^*m + j^*$ where $j^* = a+b+(1-m)k^*$. 
\end{proof}

\begin{corollary}\label{cor:charac_t_i}
$\forall i\in\lN, t_{i+1} = b_0 - 1 + t_i + \left\lceil\frac{b_0 + t_i}{m-1}\right\rceil\,.$
\end{corollary}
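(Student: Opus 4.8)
The goal is to establish \cref{cor:charac_t_i}, namely the recursion $t_{i+1} = b_0 - 1 + t_i + \lceil (b_0 + t_i)/(m-1)\rceil$, directly from the definition of $t_i$ as $t_i = \inf\{t > t_{i-1} : b_0 + \lfloor t/m\rfloor = t - t_{i-1}\}$ and from \cref{lem:opt_t_star}. The plan is to recognize that the defining equation for $t_{i+1}$ is an instance of the fixed-point equation appearing in \cref{lem:opt_t_star}. Concretely, set $a = t_i$ (the shift, since we require $t > t_i$ and measure the matched steps as $t - t_i$), $b = b_0$ (the initial budget of node $u_{i+1}$ at time $t_i$), and $c = 1$ (a single unit refilled every $m$ steps). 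Then $t_{i+1} = \inf\{t \in \lN^* : b + c\lfloor t/m\rfloor = t - a\}$ is exactly $t^*$ from \cref{lem:opt_t_star}.

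Next I would plug these values into the closed form. Since $c = 1 < 2 \le m$ (assuming $m \ge 2$, which holds in the regime of interest), we are in the first branch $m > c$, so \cref{lem:opt_t_star} yields
\begin{align*}
  t_{i+1} = a + b + c\left\lceil\frac{a+b+1-m}{m-c}\right\rceil = t_i + b_0 + \left\lceil\frac{t_i + b_0 + 1 - m}{m-1}\right\rceil\,.
\end{align*}
The remaining task is to simplify the ceiling term into the stated form. Using the identity $\lceil (x + 1 - m)/(m-1)\rceil = \lceil x/(m-1)\rceil - 1$ (valid because $m - 1 = (m-1)$ divides out one full step: $\lceil (x - (m-1))/(m-1)\rceil = \lceil x/(m-1)\rceil - 1$), with $x = t_i + b_0$, gives
\begin{align*}
  t_{i+1} = t_i + b_0 + \left\lceil\frac{t_i + b_0}{m-1}\right\rceil - 1 = b_0 - 1 + t_i + \left\lceil\frac{b_0 + t_i}{m-1}\right\rceil\,,
\end{align*}
which is the claimed recursion.

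One small point to handle carefully is the identification $b = b_0$: I should justify that at time $t_i$ the node $u_{i+1}$ (the one \textsc{Opt} will deplete over the interval $(t_i, t_{i+1}]$) indeed has budget $b_0$ available to spend, i.e. that \textsc{Opt} never matched it before $t_i$ — this is exactly the inductive structure of the adversarial construction described before the proof, where \textsc{Opt} serves each node $u_j$ only during its dedicated window $(t_{j-1}, t_j]$. I expect this bookkeeping (lining up the ``$+1$'' offsets and fractional parts between the floor-based definition and the ceiling-based recursion, and confirming the boundary case when $t_{i+1}$ is or is not a multiple of $m$) to be the only genuinely fiddly part; the algebraic manipulation of the ceiling is routine, and \cref{lem:opt_t_star} already does the substantive combinatorial work of counting refills against matched steps. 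No separate obstacle of conceptual depth remains once \cref{lem:opt_t_star} is in hand.
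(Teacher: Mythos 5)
Your proof is correct and matches the paper's own argument, which likewise proves the recursion by a direct application of \cref{lem:opt_t_star} with $a=t_i$, $b=b_0$, $c=1$ (you merely spell out the routine ceiling simplification $\left\lceil\frac{x+1-m}{m-1}\right\rceil=\left\lceil\frac{x}{m-1}\right\rceil-1$ that the paper leaves implicit, and your worry about justifying $b=b_0$ is moot since $t_{i+1}$ is defined purely by the displayed equation). The only wrinkle — that \cref{lem:opt_t_star} is stated for $c\geq 2$ while the application uses $c=1$ — is shared by the paper itself, and the lemma's proof goes through verbatim for $c=1$.
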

\begin{proof}
    Direct application of \cref{lem:opt_t_star} from the definition of $t_i$.
\end{proof}

\paragraph{Introduction of $\tilde{t}_i$ to approximate $t_i$.}
To obtain a sequence $\tilde{t}_i$ close to $t_i$ with a closed form, the intuition is to "remove" the fractional part and solve the arithmetic-geometric equation.
\begin{align}\label{eq:recursion_tilde_t_i}
    &\forall i\in\lN^*, \tilde{t}_{i+1} = b_0 - 1 + \tilde{t}_i + \frac{b_0 + \tilde{t}_i}{m-1},.\\
    &\tilde{t}_0 = t_0
\end{align}
\textit{The intuitive justification is that we are in the regime $m=o(\sqrt{T})$, thus the error introduced by ignoring a term of order $\left\{\frac{b_0+t_i}{m-1}\right\}$ is small (especially if $t_1 = \Theta(T)$).}\\
Now, given that $\tilde{t}_i$ follows an arithmetic-geometric equation, it admits a closed-form expression:
\begin{lemma}\label{lem:closed_forme_tilde_t_i}
For any $i\in\lN$,
\begin{align}
\label{eq:tilde_t_i}
    \tilde{t}_i = \left(1 + \frac{1}{m-1}\right)^i \left(t_0 + m b_0 - m + 1\right) - m b_0 + m - 1
\end{align}
\end{lemma}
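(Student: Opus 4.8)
The statement to prove is \cref{lem:closed_forme_tilde_t_i}: the closed form
\[
\tilde{t}_i = \left(1 + \frac{1}{m-1}\right)^i \left(t_0 + m b_0 - m + 1\right) - m b_0 + m - 1
\]
for the sequence defined by $\tilde{t}_0 = t_0$ and $\tilde{t}_{i+1} = b_0 - 1 + \tilde{t}_i + \frac{b_0 + \tilde{t}_i}{m-1}$.

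\textbf{Approach.} The plan is to recognize \cref{eq:recursion_tilde_t_i} as an arithmetic-geometric (affine) recursion $\tilde{t}_{i+1} = q\,\tilde{t}_i + r$ and solve it by the standard fixed-point shift, then verify the resulting formula by induction on $i$. First I would rewrite the recursion by collecting the $\tilde{t}_i$ terms: $\tilde{t}_{i+1} = \left(1 + \frac{1}{m-1}\right)\tilde{t}_i + \left(b_0 - 1 + \frac{b_0}{m-1}\right)$, so that $q = \frac{m}{m-1}$ and $r = b_0 - 1 + \frac{b_0}{m-1} = \frac{(b_0-1)(m-1) + b_0}{m-1} = \frac{m b_0 - m + 1}{m-1}$. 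The fixed point $\ell$ solving $\ell = q\ell + r$ is $\ell = \frac{r}{1-q} = \frac{r}{-1/(m-1)} = -(m-1)r = -(m b_0 - m + 1) = -mb_0 + m - 1$, which is exactly the additive constant appearing in the claimed formula.

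\textbf{Key steps in order.} (1) Set $s_i = \tilde{t}_i - \ell$ with $\ell = -mb_0 + m - 1$; then $s_{i+1} = q\,s_i$, a pure geometric sequence, so $s_i = q^i s_0$. (2) Compute $s_0 = \tilde{t}_0 - \ell = t_0 + mb_0 - m + 1$. (3) Substitute back: $\tilde{t}_i = \ell + q^i s_0 = \left(\frac{m}{m-1}\right)^i (t_0 + mb_0 - m + 1) - mb_0 + m - 1$, and note $\frac{m}{m-1} = 1 + \frac{1}{m-1}$, giving the claim. (4) For rigor, present this as an induction: the base case $i=0$ gives $\tilde{t}_0 = (t_0 + mb_0 - m + 1) - mb_0 + m - 1 = t_0$, which checks; for the inductive step, plug \cref{eq:tilde_t_i} into the right-hand side of the recursion and simplify, using $b_0 - 1 + \frac{b_0 + \tilde{t}_i}{m-1} = \frac{(m-1)(b_0-1) + b_0}{m-1} + \frac{\tilde{t}_i}{m-1}$ and checking that the geometric factor advances from $\left(1+\frac{1}{m-1}\right)^i$ to $\left(1+\frac{1}{m-1}\right)^{i+1}$ while the constant term $-mb_0+m-1$ is reproduced.

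\textbf{Main obstacle.} There is no real obstacle here — this is an entirely routine linear-recursion solve; the only thing to be careful about is the algebra of combining the constant terms over the common denominator $m-1$ (making sure $(m-1)(b_0-1) + b_0 = mb_0 - m + 1$) and correctly identifying the fixed point's sign, since an arithmetic slip there would misplace the $-mb_0 + m - 1$ offset. The genuinely substantive work — quantifying how well $\tilde{t}_i$ approximates the true $t_i$ (which carries the ceiling function) — is deferred to the subsequent lemmas and is not part of this statement.
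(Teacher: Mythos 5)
Your proposal is correct and follows essentially the same route as the paper: both recognize \cref{eq:recursion_tilde_t_i} as an affine recursion with ratio $\frac{m}{m-1}$, shift by the fixed point $-mb_0+m-1$ to obtain a geometric sequence, and read off the closed form (the paper does the shift directly in a chain of equivalences; your induction wrapper is just a more explicit presentation of the same argument). The algebra checks out, including the key identity $(m-1)(b_0-1)+b_0 = mb_0-m+1$ and the base case $\tilde{t}_0=t_0$.
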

\begin{proof}
    Let $i$ be in $\lN$.
\begin{align}
        \tilde{t}_{i+1} &= b_0 - 1 + \tilde{t}_i + \frac{b_0 + \tilde{t}_i}{m-1} \\
        & \Leftrightarrow \tilde{t}_{i+1} = \frac{m}{m-1}\tilde{t}_i + \frac{m}{m-1}b_0 - 1\\
        & \Leftrightarrow \tilde{t}_{i+1} + mb_0 - m + 1 = \frac{m}{m-1}\tilde{t}_i + \frac{m}{m-1}b_0 - 1 + mb_0 - m + 1\\
        & \Leftrightarrow \tilde{t}_{i+1} + mb_0 - m + 1 = \frac{m}{m-1} \left(\tilde{t}_i + b_0 + (m-1)b_0 - m + 1\right)\\
        & \Leftrightarrow \tilde{t}_{i+1} + mb_0 - m + 1 = \frac{m}{m-1} \left(\tilde{t}_i + m b_0 - m + 1\right)\\
        & \Leftrightarrow \tilde{t}_{i+1} = - mb_0 + m - 1 + \left(\frac{m}{m-1}\right)^i (\tilde{t}_0 + m b_0 - m + 1)
    \end{align}
\end{proof}

\paragraph{Quantification of the approximation error.}
\begin{lemma}\label{lem:error_t_tilde_t}
    $\forall i \in \lN^*, t_i - \tilde{t}_i < (m-1)\left(\left(1+\frac{1}{m-1}\right)^i -1\right)$. 
\end{lemma}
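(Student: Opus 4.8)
The plan is to reduce the claim to a scalar affine recursion for the error term $e_i := t_i - \tilde t_i$ and then solve it in closed form. First I would record that $e_0 = 0$, since $\tilde t_0 = t_0$ by the initialization in \cref{eq:recursion_tilde_t_i}. Then, subtracting the recursion for $\tilde t_{i+1}$ (\cref{eq:recursion_tilde_t_i}) from the exact recursion $t_{i+1} = b_0 - 1 + t_i + \lceil (b_0 + t_i)/(m-1)\rceil$ of \cref{cor:charac_t_i}, and writing $t_i = \tilde t_i + e_i$, I obtain
\begin{align*}
  e_{i+1} \;=\; \Bigl(1 + \tfrac{1}{m-1}\Bigr)\, e_i \;+\; \left(\left\lceil\frac{b_0 + t_i}{m-1}\right\rceil - \frac{b_0 + t_i}{m-1}\right).
\end{align*}
Since $0 \le \lceil x\rceil - x < 1$ for every real $x$, the parenthesized remainder lies in $[0,1)$, so this gives the strict bound $e_{i+1} < \bigl(1 + \tfrac1{m-1}\bigr) e_i + 1$.

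Next I would compare $e_i$ with the sequence defined by $f_0 = 0$ and $f_{i+1} = \bigl(1 + \tfrac1{m-1}\bigr) f_i + 1$: since the map $x \mapsto \bigl(1+\tfrac1{m-1}\bigr)x + 1$ is increasing and $e_0 = f_0$, a one-line induction yields $e_i < f_i$ for all $i \ge 1$ (strictness being inherited from the strict step inequality above). Finally I would solve the arithmetic--geometric recursion for $f_i$: with ratio $r = \tfrac{m}{m-1}$ and fixed point $-1/(r-1) = -(m-1)$, one gets $f_i = (m-1)(r^i - 1) = (m-1)\bigl((1+\tfrac1{m-1})^i - 1\bigr)$, which is precisely the asserted upper bound on $t_i - \tilde t_i$. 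As a sanity check this also matches the difference of closed forms if one uses \cref{lem:closed_forme_tilde_t_i} for $\tilde t_i$ against the exact $t_i$.

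I do not expect a genuine obstacle here; the only thing to be careful about is keeping the inequality strict throughout (which is automatic from $\lceil x\rceil - x < 1$) and treating the base case $i=1$ correctly. Conceptually the lemma just says that the $i$ successive ceiling roundings each contribute less than one unit of error, but that this error is amplified by the geometric factor $1 + \tfrac1{m-1}$ coming from the slope $\tfrac{m}{m-1}$ of the recursion; in the regime of interest $i \le m-1$, so the bound reads $t_i - \tilde t_i < (m-1)(e-1) + o(m) = \cO(m)$, negligible against $t_i = \Theta(T)$.
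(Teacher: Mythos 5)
Your proposal is correct and follows essentially the same route as the paper: subtract the two recursions to get $e_{i+1} = \bigl(1+\tfrac{1}{m-1}\bigr)e_i + \bigl(\lceil x\rceil - x\bigr)$, bound the ceiling remainder strictly by $1$, and solve/induct on the resulting arithmetic--geometric majorant with $e_0=0$, yielding exactly $(m-1)\bigl((1+\tfrac1{m-1})^i-1\bigr)$. The only cosmetic difference is that you introduce an explicit comparison sequence $f_i$, whereas the paper carries out the induction directly on $t_i-\tilde t_i$.
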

\begin{proof}
\begin{align}
        t_i - \tilde{t}_i & = t_{i-1} - \tilde{t}_{i-1} + \left\lceil\frac{t_{i-1} + b_0}{m-1}\right\rceil - \frac{\tilde{t}_{i-1} + b_0}{m-1} ~~~~~~~~~~~~~~~~~\text{\cref{eq:recursion_tilde_t_i,cor:charac_t_i}} \\
        & = t_{i-1} - \tilde{t}_{i-1} + \frac{t_{i-1} - \tilde{t}_{i-1}}{m-1} +\left\lceil\frac{t_{i-1} + b_0}{m-1}\right\rceil - \frac{t_{i-1} + b_0}{m-1}  \\
        & = (t_{i-1} - \tilde{t}_{i-1}) \left(1 + \frac{1}{m-1}\right) + \left\lceil\frac{t_{i-1} + b_0}{m-1}\right\rceil - \frac{t_{i-1} + b_0}{m-1} 
    \end{align}
    Thus, by induction, using that $t_0 - \tilde{t}_0 = 0$ (by definition).
    \begin{align}
\label{eq:error_t_tilde_i_intermediary}
        t_i - \tilde{t}_i & < (t_{i-1} - \tilde{t}_{i-1}) \left(1 + \frac{1}{m-1}\right) + 1 \\
\label{eq:error_t_tilde_i_intermediary_bis}
& < \left(1 + \frac{1}{m-1}\right)^i(m-1) + 1 - m
    \end{align}
\end{proof}

\paragraph{Closed-form computation of $\sum_{i=1}^{m-1}\tilde{t}_i$.}

\begin{lemma}\label{lem:sum_tilde_t}
    $\sum_{i=0}^{m-1} \tilde{t}_i = \left(\left(1 + \frac{1}{m-1}\right)^{m-1} - 1 + \frac{1}{m}\right) m t_0 + \mathfrak{A}(m, b_0)$ where 
    \begin{align}
        \mathfrak{A}(m, b_0) = m (mb_0 -m +1) \left(\left(1 + \frac{1}{m-1}\right)^{m-1} - 2 + \frac{1}{m}\right) \,.
    \end{align}
\end{lemma}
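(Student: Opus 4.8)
The plan is to substitute the closed form of $\tilde{t}_i$ established in \cref{lem:closed_forme_tilde_t_i} and then evaluate a finite geometric sum. Set $r = 1 + \frac{1}{m-1} = \frac{m}{m-1}$ and $E = mb_0 - m + 1$, so that \cref{eq:tilde_t_i} reads $\tilde{t}_i = r^i(t_0 + E) - E$ for every $i$. Summing over $i = 0,\dots,m-1$ gives
\[
\sum_{i=0}^{m-1}\tilde{t}_i \;=\; (t_0+E)\sum_{i=0}^{m-1} r^i \;-\; mE \;=\; (t_0+E)\,\frac{r^m-1}{r-1} \;-\; mE .
\]

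The next step is purely algebraic bookkeeping. Since $r - 1 = \frac{1}{m-1}$, one has $\frac{r^m-1}{r-1} = (m-1)(r^m-1)$, and writing $r^m = \frac{m}{m-1}\,r^{m-1}$ turns this into $m\,r^{m-1} - (m-1)$. Plugging this back in and separating the $t_0$ and $E$ contributions yields
\[
\sum_{i=0}^{m-1}\tilde{t}_i \;=\; t_0\bigl(m\,r^{m-1} - m + 1\bigr) \;+\; E\bigl(m\,r^{m-1} - 2m + 1\bigr).
\]
It then suffices to factor $m\,r^{m-1} - m + 1 = m\bigl(r^{m-1} - 1 + \frac1m\bigr)$ and $m\,r^{m-1} - 2m + 1 = m\bigl(r^{m-1} - 2 + \frac1m\bigr)$, and to recall that $E = mb_0 - m + 1$; this is exactly the claimed identity, with $\mathfrak{A}(m,b_0) = E\bigl(m\,r^{m-1} - 2m + 1\bigr) = m(mb_0 - m + 1)\bigl((1+\frac{1}{m-1})^{m-1} - 2 + \frac1m\bigr)$.

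There is no real obstacle here: the statement follows immediately from \cref{lem:closed_forme_tilde_t_i} and the closed form of a geometric series, and the only thing requiring attention is to keep everything expressed through $r^{m-1}$ (rather than $r^m$) so that the output lines up with the $\bigl(1+\frac{1}{m-1}\bigr)^{m-1}$ factors used later in the proof of \cref{theo:ub_balance}. As a quick sanity check at $m=2$ one has $r^{m-1}=2$, so $\tilde{t}_0 + \tilde{t}_1 = t_0 + (2b_0 - 1 + 2t_0) = 3t_0 + 2b_0 - 1$, while the formula gives $\bigl(2-1+\frac12\bigr)\cdot 2\,t_0 + 2(2b_0-1)\bigl(2-2+\frac12\bigr) = 3t_0 + (2b_0-1)$, in agreement.
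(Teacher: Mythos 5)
Your proof is correct and follows essentially the same route as the paper: substitute the closed form of $\tilde{t}_i$ from \cref{lem:closed_forme_tilde_t_i}, sum the geometric series, and rewrite $r^m$ as $\frac{m}{m-1}r^{m-1}$ to match the $\left(1+\frac{1}{m-1}\right)^{m-1}$ factors. If anything, your bookkeeping is slightly cleaner (the paper's displayed derivation carries a spurious $-m$ in an intermediate line that disappears in the final step), and your $m=2$ sanity check confirms the stated identity.
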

\begin{proof}
    \begin{align}
        \sum_{i=0}^{m-1} \tilde{t}_i 
        & = -m(mb_0 - m +1) + \left(t_0 + m b_0 -m + 1\right)\sum_{i=0}^{m-1} \left(1 + \frac{1}{m-1}\right)^i\\
        & \\
        & = -m(mb_0 - m +1) - m + \left(t_0 + m b_0 -m + 1\right)\frac{\left(1 + \frac{1}{m-1}\right)^m -1}{1 + \frac{1}{m-1} - 1} \\
        & = -m(mb_0 - m +1) - m + \left(t_0 + m b_0 -m + 1\right)(m-1)\left(\left(1 + \frac{1}{m-1}\right)^m -1\right) \\
        & = \left(\left(1 + \frac{1}{m-1}\right)^{m-1} - 1 + \frac{1}{m}\right) m t_0 \nonumber\\
        & ~~~~~~~~~~~~~~~~~~~~~~~~~~~~~~~~~~+
        \underbrace{m (mb_0 -m +1) \left(\left(1 + \frac{1}{m-1}\right)^{m-1} - 2 + \frac{1}{m}\right)}_{\triangleq \mathfrak{A}(m, b_0)}
    \end{align}
\end{proof}

\paragraph{Putting everything together.}

\begin{lemma}\label{lemma:sum_t_i_over_m}
    $\sum_{i=1}^{m-1} \left\lfloor\frac{t_{i}}{m}\right\rfloor = \left(\left(1 + \frac{1}{m-1}\right)^{m-1} - 1 + \frac{1}{m}\right) t_0 + \mathfrak{B}(m, b_0)$ where $\mathfrak{B}(m, b_0) < (e-2)m b_0 + b_0$
\end{lemma}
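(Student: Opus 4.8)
The plan is to prove \cref{lemma:sum_t_i_over_m} by successively removing the two approximations that separate $\sum_{i=1}^{m-1}\lfloor t_i/m\rfloor$ from an explicit arithmetic--geometric sum. Writing $r := 1+\tfrac1{m-1}$, the first step replaces each floor by a quotient,
\[
\sum_{i=1}^{m-1}\left\lfloor\frac{t_i}{m}\right\rfloor \;=\; \frac1m\sum_{i=1}^{m-1} t_i \;-\; \sum_{i=1}^{m-1}\Big\{\frac{t_i}{m}\Big\}, \qquad 0\le \sum_{i=1}^{m-1}\Big\{\frac{t_i}{m}\Big\}< m-1 ,
\]
so the rounding costs only an $\cO(m)$ term. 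The second step replaces the true sequence $(t_i)$ by the surrogate $(\tilde t_i)$ of \cref{lem:closed_forme_tilde_t_i}: by \cref{lem:error_t_tilde_t}, $0\le t_i-\tilde t_i < (m-1)(r^i-1)$, and summing with the geometric identity $\sum_{i=1}^{m-1} r^i = m\,(r^{m-1}-1)$ together with $r^{m-1}<e$ gives $\sum_{i=1}^{m-1}(t_i-\tilde t_i) < (m-1)\big(m(r^{m-1}-1)-(m-1)\big) = \cO(m^2)$, i.e.\ an $\cO(m)$ contribution after dividing by $m$.

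Next I would plug in the closed form of \cref{lem:sum_tilde_t}: subtracting the $i=0$ term $\tilde t_0 = t_0$ and dividing by $m$ yields
\[
\frac1m\sum_{i=1}^{m-1}\tilde t_i \;=\; \big(r^{m-1}-1\big)\,t_0 \;+\; \frac1m\,\mathfrak A(m,b_0),
\]
so, combining the displays, $\sum_{i=1}^{m-1}\lfloor t_i/m\rfloor = \big(r^{m-1}-1\big)t_0 + \tfrac1m\mathfrak A(m,b_0) + \cO(m)$. Writing $(r^{m-1}-1)t_0 = \big(r^{m-1}-1+\tfrac1m\big)t_0 - \tfrac{t_0}{m}$, the $\tfrac{t_0}{m}$ shift (and the discarded $i=0$ floor $\lfloor t_0/m\rfloor$) is $\cO(t_0/m)$, and the two $\cO(m)$ corrections are $\cO(m)$; both are absorbed into the $\cO(t_0/m)$ and $\cO(m^2)$ terms already present in the computation of $\CR^{\rm adv}(\balance,\cG_{T,m})$ (and both vanish in that ratio as $m,T\to\infty$). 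Thus one may take $\mathfrak B(m,b_0) := \tfrac1m\mathfrak A(m,b_0) = (mb_0-m+1)\big(r^{m-1}-2+\tfrac1m\big)$ as the clean, $b_0$-dependent remainder.

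It then remains to bound this clean term. The factor $r^{m-1}-2+\tfrac1m$ is positive for every $m\ge2$ (since $r^{m-1}=(1+\tfrac1{m-1})^{m-1}\ge2$ and is increasing in $m$) and satisfies $r^{m-1}-2+\tfrac1m< e-2$, because $\big(1+\tfrac1{m-1}\big)^{m-1}+\tfrac1m<e$ for all $m\ge2$ — the power term increases to $e$ from below while $\tfrac1m\to0$, and the inequality is checked directly for small $m$. Combined with $0< mb_0-m+1\le mb_0$ for $b_0\ge1$, this gives $\mathfrak B(m,b_0)< (e-2)mb_0$ when $b_0\ge2$; and for $b_0=1$ the first factor equals $1$, so $\mathfrak B(m,1)=r^{m-1}-2+\tfrac1m< e-2 < (e-2)m+1$. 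In all cases $\mathfrak B(m,b_0)<(e-2)mb_0+b_0$, as claimed. The main obstacle is not any single estimate but the error accounting: ensuring each slippage from the two rounding reductions lands against a term genuinely carried through the surrounding $\CR$ computation, and verifying, uniformly in $m$, both the positivity and the bound $e-2$ for the coefficient $r^{m-1}-2+\tfrac1m$ via the sharp estimate $\big(1+\tfrac1{m-1}\big)^{m-1}<e$; the geometric-sum identity and the case split on $b_0$ are then routine.
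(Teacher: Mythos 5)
Your route is the same as the paper's: replace $t_i$ by the surrogate $\tilde t_i$, control the substitution with \cref{lem:error_t_tilde_t}, evaluate $\sum_i\tilde t_i$ with \cref{lem:sum_tilde_t}, and bound what is left; your bookkeeping of the $i=0$ term and of $\mathfrak{A}(m,b_0)/m$ is in fact cleaner than the paper's own display. The genuine issue is the final assembly. In the lemma, $\mathfrak{B}(m,b_0)$ is pinned down by the stated equality, i.e. it \emph{is} the quantity $\sum_{i=1}^{m-1}\lfloor t_i/m\rfloor-\bigl(\bigl(1+\frac{1}{m-1}\bigr)^{m-1}-1+\frac{1}{m}\bigr)t_0$, and the claim is that this exact remainder is $<(e-2)mb_0+b_0$. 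You instead \emph{define} $\mathfrak{B}:=\mathfrak{A}(m,b_0)/m$ and propose to push the terms $-t_0/m$, $-\sum_i\{t_i/m\}$ and, crucially, the \emph{positive} replacement error $\frac1m\sum_{i=1}^{m-1}(t_i-\tilde t_i)$ into big-$O$ terms of the surrounding competitive-ratio computation. With that choice the displayed identity is false, so the lemma as stated (and as it is invoked, as an identity, in the proof of \cref{theo:ub_balance_complet}) is not what you have proven; the positive error is of order $(e-2)m$ and cannot simply be dropped from an upper bound on the true remainder.

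The fix is short and uses only estimates you already have, and it is exactly what the paper's $\mathfrak{E}(m)$ does: keep every correction inside $\mathfrak{B}$. The terms $-t_0/m$ and $-\sum_i\{t_i/m\}$ are nonpositive, hence harmless for an upper bound, and \cref{lem:error_t_tilde_t} gives $\frac1m\sum_{i=1}^{m-1}(t_i-\tilde t_i)<(m-1)\bigl(\bigl(1+\frac1{m-1}\bigr)^{m-1}-2+\frac1m\bigr)$. Added to your clean term $(mb_0-m+1)\bigl(\bigl(1+\frac1{m-1}\bigr)^{m-1}-2+\frac1m\bigr)$, the $+(m-1)$ cancels the $-(m-1)$ hidden in $\mathfrak{A}/m$, yielding $\mathfrak{B}(m,b_0)<mb_0\bigl(\bigl(1+\frac1{m-1}\bigr)^{m-1}-2+\frac1m\bigr)<(e-2)mb_0+b_0$. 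Two smaller remarks: the inequality $\bigl(1+\frac1{m-1}\bigr)^{m-1}+\frac1m<e$ is true but does not follow from "one term increases to $e$ while the other tends to $0$"; you need the quantitative fact that the deficiency $e-\bigl(1+\frac1{m-1}\bigr)^{m-1}$ exceeds $\frac1m$ (e.g.\ via $\ln(1+x)\le x-\frac{x^2}{2}+\frac{x^3}{3}$), and the case split on $b_0$ is unnecessary since $0<mb_0-m+1\le mb_0$ already holds for all $b_0\ge1$.
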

\begin{proof}
    \begin{align}
        \sum_{i=1}^{m-1} \left\lfloor\frac{t_{i}}{m}\right\rfloor 
        & = \sum_{i=1}^{m-1} \frac{\tilde{t}_{i}}{m}  + \underbrace{\sum_{i=1}^{m-1} \frac{t_i - \tilde{t}_{i}}{m} - \sum_{i=1}^{m-1} \left\{\frac{\tilde{t}_{i}}{m}\right\}}_{\triangleq \mathfrak{E}(m)} 
    \end{align}
    where
    \begin{align}
        \mathfrak{E}(m)& \leq  \sum_{i=1}^{m-1} \frac{t_i - \tilde{t}_{i}}{m} \\
        & <  \frac{m-1}{m}\sum_{i=1}^{m-1}\left(\left(1+\frac{1}{m-1}\right)^i -1\right) \\
        & < \frac{m-1}{m}\left(\frac{\left(1+\frac{1}{m-1}\right)^m - 1}{1+\frac{1}{m-1} - 1} - m\right)  \\
        & < \frac{(m-1)}{m}\left((m-1)\left(\left(1+\frac{1}{m-1}\right)^m - 1\right) - m\right)  \\
        & < \frac{(m-1)}{m}\left(m\left(1+\frac{1}{m-1}\right)^{m-1} + 1 - 2m\right) \\
        & < {(m-1)\left(\left(1+\frac{1}{m-1}\right)^{m-1} - 2\right) + \frac{m-1}{m}} 
    \end{align}
    Using \cref{lem:sum_tilde_t} gives
    \begin{align}
        \sum_{i=1}^{m-1} \left\lfloor\frac{t_{i}}{m}\right\rfloor 
        & = \left(\left(1 + \frac{1}{m-1}\right)^{m-1} - 1 + \frac{1}{m}\right) t_0 + \underbrace{\mathfrak{E}(m) + \mathfrak{A(m,b_0)}}_{\triangleq \mathfrak{B}(m, b_0)}
    \end{align}
    where
    \begin{align}
        \mathfrak{B}(m, b_0)& <\left(\left(1 + \frac{1}{m-1}\right)^{m-1} - 2 + \frac{1}{m}\right)mb_0
    \end{align}
\end{proof}

\subsubsection{Proof of \cref{lemma:sum_B_i_over_m}}
The objective of this subsection is the compute $\sum_{i=1}^{m-1} \left\lceil\frac{B^{(i)}_{t_i}}{m-i}\right\rceil$. This section is organised as follows: 
\begin{enumerate}
    \item The introduction of $Y_i$ to approximate.
    \item The bounding of $Y_i$.
    \item The quantification of the approximation error between $Y_i$ and $\left\lceil\frac{B^{(i)}_{t_i}}{m-i}\right\rceil$. 
    \item The final result.
\end{enumerate}
\paragraph{Introduction of $Y_i$.} For any $i\geq 1$, $Y_i$ is defined by the following recursion:
\begin{align}\label{eq:def_Y_i_adv}
    Y_1 &= \left\lceil\frac{B^{(1)}_{t_1}}{m-1}\right\rceil \\
    Y_{i+1} & = Y_i - (\tilde{t}_{i+1} - \tilde{t}_i)\left(\frac{1}{m-i-1}-\frac{1}{m}\right) + 1
\end{align}
where $\tilde{t}_i$ is the approximate time dynamic defined in \cref{eq:recursion_tilde_t_i}.

\paragraph{The bounding of $Y_i$}.
\begin{lemma}\label{lem:y_i_dynamic}
    For $1\leq i < m-1$,
    \begin{align}
    Y_1 +i - 1 - \frac{\bar{t}_0}{m} g((i+1)/m) \leq Y_i \leq Y_1 +i - 1 - \frac{\bar{t}_0}{m} g(i/m)
    \end{align}
    where $g(z) = \int_{\frac{1}{m}}^{z}\frac{x}{1-x}\exp(x)dx$ and $\bar{t}_0 = t_0 - mb_0 -m + 1$.
\end{lemma}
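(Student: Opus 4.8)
The plan is to unroll the recursion in \cref{eq:def_Y_i_adv} and then recognize the resulting sum as a discretization of the integral defining $g$. Telescoping $Y_{j+1}-Y_j$ from $j=1$ to $i-1$ gives
\[
Y_i = Y_1 + (i-1) - \sum_{j=1}^{i-1}(\tilde{t}_{j+1} - \tilde{t}_j)\left(\frac{1}{m-j-1} - \frac{1}{m}\right).
\]
Substituting the closed form of $\tilde{t}_j$ from \cref{lem:closed_forme_tilde_t_i}, the increments $\tilde{t}_{j+1} - \tilde{t}_j$ form a geometric sequence with ratio $1 + \frac{1}{m-1}$ and leading constant $\bar t_0$, and the elementary identity $\frac{1}{m-j-1} - \frac{1}{m} = \frac{j+1}{m(m-j-1)}$ turns the sum into
\[
Y_i = Y_1 + (i-1) - \frac{\bar{t}_0}{m}\,S_i, \qquad S_i := \sum_{j=1}^{i-1}\frac{1}{m-1}\left(1+\tfrac{1}{m-1}\right)^{j}\frac{j+1}{m-j-1}.
\]
It therefore remains to prove the two-sided bound $g(i/m) \le S_i \le g((i+1)/m)$, with the base case $i=1$ being immediate since $S_1$ is an empty sum and $g(1/m)=0$.

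For the sandwich, reindex with $\ell = j+1$ and write $y_\ell = \ell/m$; then $\frac{j+1}{m-j-1} = \frac{y_\ell}{1-y_\ell}$ and $\frac{1}{m-1}\left(1+\frac{1}{m-1}\right)^{\ell-1} = \frac{1}{m}\left(1+\frac{1}{m-1}\right)^{\ell}$, so $S_i = \frac{1}{m}\sum_{\ell=2}^{i}\left(1+\frac{1}{m-1}\right)^{\ell}\frac{y_\ell}{1-y_\ell}$. The key estimate is the elementary bracket $e^{\ell/m} \le \left(1+\frac{1}{m-1}\right)^{\ell} \le e^{(\ell+1)/m}$, valid for $1\le \ell \le m-1$ (the left inequality from $\ln\frac{m}{m-1} > \frac{1}{m}$, the right from $\frac{\ell}{m-1} \le \frac{\ell+1}{m}$); note $\ell \le i < m-1$ here. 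Combining this with the monotonicity of $\phi(y) := \frac{y}{1-y}e^{y}$ on $(0,1)$, each surviving term $\frac{1}{m}\phi(y_\ell)$ (up to the one-index shift introduced by the bracket) is squeezed between $\int_{y_{\ell-1}}^{y_\ell}\phi$ and $\int_{y_\ell}^{y_{\ell+1}}\phi$. Summing these telescoping integrals over the range, and absorbing the two boundary intervals (near $\ell=2$ and near $\ell=i$) into the deliberately inserted slack $g((i+1)/m) - g(i/m) = \int_{i/m}^{(i+1)/m}\phi$, yields exactly $g(i/m)\le S_i \le g((i+1)/m)$.

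The main obstacle is this last squeezing step: the geometric mesh of $\left(1+\frac{1}{m-1}\right)^{\ell}$ has step $\frac{1}{m-1}$ whereas the Riemann partition has step $\frac{1}{m}$, and $\phi$ blows up as $y\to 1$, so one must verify that the accumulated discretization error is uniformly $O(1/m)$ over the relevant range $i/m \le \alpha^* < 1$ and is genuinely dominated by the one-step gap built into the statement. Getting the index bookkeeping and the floor/ceiling placements right so that the error falls exactly inside $[g(i/m),g((i+1)/m)]$ is the only delicate part; everything before it is routine algebra.
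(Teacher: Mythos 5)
Your proposal follows essentially the same route as the paper's proof: unroll the recursion defining $Y_i$, substitute the closed form of $\tilde{t}_i$ to get $Y_i = Y_1 + i-1 - \frac{\bar{t}_0}{m}S_i$ with $S_i$ a geometric-weighted sum, bracket $\left(1+\frac{1}{m-1}\right)^{\ell}$ between exponentials, and use monotonicity of $x\mapsto \frac{x}{1-x}e^{x}$ to sandwich $S_i$ between $g(i/m)$ and $g((i+1)/m)$ via a Riemann-sum comparison. The one step you flag as delicate (making the index shift from the exponential bracket land inside the slack $\int_{i/m}^{(i+1)/m}\phi$) is treated at the same level of detail in the paper itself, so the proposal is correct and matches the paper's argument.
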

\begin{proof}
By definition of $Y_i$, it holds
        \begin{align}
        Y_i & = Y_1 + i-1 - \sum_{k=2}^i (\tilde{t_k} - \tilde{t}_{k-1})\left(\frac{1}{m-k} - \frac{1}{m}\right)\\
        & = Y_1 + i-1 - \sum_{k=2}^i (\tilde{t_k} - \tilde{t}_{k-1})\frac{k}{m(m-k)} \\
        & = Y_1 + i-1 - \frac{\bar{t}_0}{m}\sum_{k=2}^i \left(1 + \frac{1}{m-1}\right)^k \left(1 - \frac{m-1}{m}\right)\frac{k}{(m-k)} & \text{by }\cref{eq:tilde_t_i} \\
        & = Y_1 + i-1 - \frac{\bar{t}_0}{m^2}\sum_{k=2}^i \frac{k}{m-k} \left(1 + \frac{1}{m-1}\right)^k\label{eq:y_i_exact} 
    \end{align}
    where $\bar{t}_0 = t_0 - mb_0 -m + 1$.
Moreover, since $(1+\frac{1}{m-1}) \geq \exp(\frac{1}{m})$, this gives 
$$\exp(\frac{k}{m})\leq ( 1+\frac{1}{m-1})^k\leq \exp(\frac{k}{m-1})\leq \exp(\frac{k}{m})(1+\frac{2}{m}),$$
Since the function $x \mapsto \frac{x}{1-x}\exp(x)$ is increasing on $\mathbb{R}_+$, we get that (for $i<m-1$) 
$$ \underbrace{\int_{\frac{1}{m}}^{\frac{i}{m}}\frac{x}{1-x}\exp(x)dx}_{\triangleq g(i/m)} \leq  \underbrace{\frac{1}{m}\sum_{k=2}^i\frac{k}{m-k}\exp(\frac{k}{m})}_{A}\leq \int_{\frac{2}{m}}^{\frac{i+1}{m}}\frac{x}{1-x}\exp(x)dx,$$
Or equivalently, for $i < m-1$,
\begin{align}
    Y_1 +i - 1 - \frac{\bar{t}_0}{m} g((i+1)/m) \leq Y_i \leq Y_1 +i - 1 - \frac{\bar{t}_0}{m} g(i/m)
\end{align}
\end{proof}

\paragraph{Quantification of approximation error.} The sequence $Y_i$ only approximates well $\left\lceil\frac{B^{(i)}_{t_i}}{m-i}\right\rceil$ as long as it stays positive.

\begin{lemma}\label{lem:error_b_i_over_m_m_i_vs_y_i}
    $\left\lceil\frac{B^{(i)}_{t_i}}{m-i}\right\rceil - Y_i \leq \cO(i)$
\end{lemma}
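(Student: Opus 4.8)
The plan is to turn the passage from one phase to the next into a one-step recursion for $D_i:=\lceil B^{(i)}_{t_i}/(m-i)\rceil$, to observe that this recursion is the one defining $Y_i$ in \cref{eq:def_Y_i_adv} up to a $\cO(1)$ error at each step, and then to telescope; since the statement is a \emph{one-sided} bound, I will only need the sign of the accumulated $t_i$-versus-$\tilde t_i$ discrepancy, which will point in the favourable direction.

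First I would record the dynamics of $B^{(i)}_t=\sum_{u\in U_i}b_{u,t}$. On phase $i$, i.e.\ for $t_{i-1}<t\le t_i$, the graph is fully connected to $U_i$ (which has $m-i$ nodes), $\balance$ matches exactly one of these nodes at each step, and each refill adds $m-i$ to $B^{(i)}_t$; hence $t\mapsto B^{(i)}_t$ follows the recursion $Z_t$ of \cref{Z_dynamic} with $k=m-i$, an appropriate offset $j$, and $Z_{t_{i-1}}=B^{(i)}_{t_{i-1}}$. Moreover, since the adversary removes the node of largest budget and \cref{lemma:balance_keeps_equal_budgets} forces the budgets inside $U_i$ to differ by at most one, $B^{(i)}_{t_{i-1}}=B^{(i-1)}_{t_{i-1}}-\lceil B^{(i-1)}_{t_{i-1}}/(m-i+1)\rceil=B^{(i-1)}_{t_{i-1}}-D_{i-1}$.

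Next, as long as $B^{(k)}_t$ stays positive throughout phases $k=1,\dots,i$, only the first (pre-depletion) branch of \cref{Z_dynamic} is ever used, so $B^{(i+1)}_{t_{i+1}}=B^{(i+1)}_{t_i}+(m-i-1)\,N_i-(t_{i+1}-t_i)$, where $N_i=\lfloor t_{i+1}/m\rfloor-\lfloor t_i/m\rfloor$ counts the refills over the phase. Dividing by $m-i-1$, using $B^{(i+1)}_{t_i}=B^{(i)}_{t_i}-D_i$, and three elementary $\cO(1)$ estimates — the final ceiling; the identity $(B^{(i)}_{t_i}-D_i)/(m-i-1)=D_i+\theta_i$ with $|\theta_i|\le 1$ (write $B^{(i)}_{t_i}=(m-i)q+r$); and $|N_i-(t_{i+1}-t_i)/m|<1$ — I obtain a recursion of the form
$$D_{i+1}=D_i-(t_{i+1}-t_i)\Big(\tfrac{1}{m-i-1}-\tfrac{1}{m}\Big)+\varepsilon_i,\qquad |\varepsilon_i|\le C,$$
for a universal constant $C$. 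This is exactly \cref{eq:def_Y_i_adv}, except with $\tilde t_i$ in place of $t_i$ and $\varepsilon_i$ in place of $1$. Setting $E_i:=D_i-Y_i$ (so $E_1=D_1-Y_1=0$ by \cref{eq:def_Y_i_adv}) and subtracting the two recursions,
$$E_{i+1}-E_i=-\Big[(t_{i+1}-t_i)-(\tilde t_{i+1}-\tilde t_i)\Big]\Big(\tfrac{1}{m-i-1}-\tfrac{1}{m}\Big)+\varepsilon_i-1.$$
The bracket is nonnegative: from \cref{cor:charac_t_i} and the computation in the proof of \cref{lem:error_t_tilde_t}, $(t_{i+1}-t_i)-(\tilde t_{i+1}-\tilde t_i)=(t_i-\tilde t_i)/(m-1)+\rho_i$ with $\rho_i=\lceil(t_i+b_0)/(m-1)\rceil-(t_i+b_0)/(m-1)\in[0,1)$, and $t_i\ge\tilde t_i$. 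Since $\tfrac{1}{m-i-1}-\tfrac{1}{m}>0$, this gives $E_{i+1}-E_i\le\varepsilon_i-1\le C$, hence by telescoping $E_i\le C(i-1)=\cO(i)$, which is the claim on the range where the budgets stay positive — by the choice of $\alpha^*$ this is precisely $i\lesssim\alpha^*m$. On the complementary range I would argue crudely: once $B^{(i')}_t$ hits $0$, the third branch of \cref{Z_dynamic} keeps $B^{(i'')}_{t_{i''}}\le m-i''$ and thus $D_{i''}\le 1$ for all $i''\ge i'$, while $B^{(1)}_{t_1}=\Theta(T)$ (via \cref{lem:B_0_1} plus one pass of \cref{Z_dynamic}) forces $i'=\Theta(m)$, so \cref{lem:y_i_dynamic} yields $Y_i\ge-\cO(m)=-\cO(i)$ there, and once more $D_i-Y_i\le\cO(i)$.

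The main obstacle is bookkeeping rather than ideas: one must verify that every rounding and offset correction feeding into $\varepsilon_i$ is $\cO(1)$ with a constant independent of $i$ and $m$, and one must glue the ``$B$ positive'' regime (clean telescoping) to the depleted regime across $i\simeq\alpha^*m$. No new cancellation is needed — the arithmetic–geometric identity aligning the two recursions is the same one already used to obtain the closed form of $\tilde t_i$ in \cref{lem:closed_forme_tilde_t_i}.
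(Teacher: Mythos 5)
Your main argument is essentially the paper's own proof: both derive the one-step evolution of $D_i=\left\lceil B^{(i)}_{t_i}/(m-i)\right\rceil$ from \cref{Z_dynamic} together with the budget-equalization property of \cref{lemma:balance_keeps_equal_budgets}, subtract the defining recursion \cref{eq:def_Y_i_adv} of $Y_i$, and telescope from $D_1=Y_1$, controlling the accumulated $(t_j-\tilde t_j)$ discrepancy through \cref{cor:charac_t_i} and \cref{lem:error_t_tilde_t}. Your remark that the discrepancy enters with a favourable sign (a nonnegative bracket times the positive coefficient $\frac{1}{m-i-1}-\frac{1}{m}$, with a minus in front), so that it can simply be dropped for the one-sided bound, is a mild streamlining of the paper's last step, which instead keeps the sum $\sum_{j}(t_j-t_{j-1}-\tilde t_j+\tilde t_{j-1})\left(\frac{1}{m}-\frac{1}{m-j}\right)$ and bounds it via \cref{lem:error_t_tilde_t}; both routes yield the same $\cO(i)$.

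One caveat: your closing paragraph on the depleted regime does not work as written. \cref{lem:y_i_dynamic} only gives $Y_i\ge Y_1+i-1-\frac{\bar t_0}{m}\,g\!\left((i+1)/m\right)$, and for $i$ close to $m$ the term $\frac{\bar t_0}{m}g(i/m)$ is of order $\frac{T\log m}{m}$, which dwarfs $m$ in the regime $m=o(\sqrt T)$; so the claimed $Y_i\ge-\cO(m)$ does not follow. Indeed, once the budget pool is exhausted one has $D_i\le 1$ while $Y_i$ keeps decreasing by roughly $\frac{\bar t_0}{m}\cdot\frac{i}{m(m-i)}$ per step, so $D_i-Y_i$ eventually exceeds any $\cO(i)$ bound with absolute constants; the inequality cannot be rescued there by these means. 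This is not a defect relative to the paper, however: the paper's proof likewise uses only the pre-depletion branch of \cref{Z_dynamic}, and the lemma is invoked in \cref{lemma:sum_B_i_over_m} only for $i\le\lfloor\alpha^*m\rfloor$, which is exactly the range your telescoping argument covers correctly.
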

\begin{proof}
    By application of \cref{Z_dynamic} on $B^{(i)}_t$
    \begin{align}
        B^{(i)}_{t_i}
        & = B^{(i)}_{t_{i-1}} - \left\lceil\frac{B^{(i)}_{t_{i-1}}}{m-i+1}\right\rceil + (m-i)\left(1+\left\lfloor\frac{t_i}{m}\right\rfloor-\left\lceil\frac{t_{i-1}}{m}\right\rceil\right) - (t_i - t_{i-1})
    \end{align}
    Thus, using the definition of $Y_i$ in \cref{eq:def_Y_i_adv},
    \begin{align}
        \left\lceil\frac{B^{(i)}_{t_i}}{m-i}\right\rceil - Y_i
        & = \frac{B^{(i)}_{t_i}}{m-i} - Y_i + \cO(1)\\
        & = \frac{1}{m-i}\left(B^{(i)}_{t_{i-1}} - \left\lceil\frac{B^{(i)}_{t_{i-1}}}{m-i+1}\right\rceil\right) + \left(1+\left\lfloor\frac{t_i}{m}\right\rfloor-\left\lceil\frac{t_{i-1}}{m}\right\rceil\right) - \frac{t_i - t_{i-1}}{m-i} - Y_i \nonumber\\ 
        &+\cO(1)\\
        & = (t_i - t_{i-1}) \left(\frac{1}{m} - \frac{1}{m-i}\right) - \left(Y_{i-1} - (\tilde{t}_i - \tilde{t}_{i-1}) \left(\frac{1}{m} - \frac{1}{m-i}\right) + 1\right) \nonumber\\
        &+ \frac{B^{(i)}_{t_{i-1}}}{m-i+1} + 1 +\cO(1)\\
        & = \left\lceil\frac{B^{(i)}_{t_{i-1}}}{m-i+1}\right\rceil -Y_{i-1}  + (t_i - t_{i-1} -\tilde{t}_i + \tilde{t}_{i-1}) \left(\frac{1}{m} - \frac{1}{m-i}\right) + \cO(1)
    \end{align}
    By induction
    \begin{align}
        \left\lceil\frac{B^{(i)}_{t_i}}{m-i}\right\rceil - Y_i
        & = \sum_{j=2}^{i} (t_j - t_{j-1} -\tilde{t}_j + \tilde{t}_{j-1}) \left(\frac{1}{m} - \frac{1}{m-j}\right) + \cO(i)\\
        & = \sum_{j=2}^{i}\frac{(t_j-t_{j-1})-(\tilde{t}_{j}-\tilde{t}_{j+1})}{m}- \sum_{j=2}^{i}\frac{t_j-t_{j-1}-\tilde{t}_{j}+\tilde{t}_{j-1}}{m-j}+\cO(i)\\
        &= \frac{(t_i-t_{1})-(\tilde{t}_{i}-\tilde{t}_{1})}{m}- \sum_{j=2}^{i}\frac{\left((t_{j-1} - \tilde{t}_{j-1}) \left(1 + \frac{1}{m-1}\right) + 1\right)+\tilde{t}_{j-1}-t_{j-1}}{m-j}\nonumber\\
        &+\cO(i) ~~~~~~~~~~~~~~~~~~~~~~~~~~~~~~~~~~~~~~~~~~~~~~~~~~~~~~~\text{by \cref{eq:error_t_tilde_i_intermediary}}\\
        &= \frac{t_i - t_1 - \tilde{t}_i + \tilde{t}_1}{m} - \sum_{j=2}^{i} \frac{t_{j-1} - \tilde{t}_{j-1}+1}{(m-j)(m-1)} + \cO(i) & \\
        & \leq \cO(i) ~~~~~~~~~~~~~~~~~~~~~~~~~~~~~~~~~~~~~~~~~~~~~~~~~~~~~~~ \text{by \cref{lem:error_t_tilde_t}}
    \end{align}

\end{proof}
\begin{lemma}
    $i^{*}=\lfloor \alpha^* m\rfloor$
\end{lemma}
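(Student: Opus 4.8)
The plan is to pin down $i^{*}$ — the last phase index at which the approximating sequence $(Y_i)$ of \cref{lemma:sum_B_i_over_m} is still nonnegative — by comparing $Y_i$ with the smooth proxy $\phi(i) := Y_1 + i - 1 - \frac{\bar{t}_0}{m}\,g(i/m)$, where $g(z)=\int_{1/m}^{z}\frac{x}{1-x}e^{x}\,\mathrm{d}x$ is the function from \cref{lem:y_i_dynamic}. That lemma already furnishes the sandwich $\phi(i) - \frac{\bar{t}_0}{m}\bigl(g(\tfrac{i+1}{m})-g(\tfrac{i}{m})\bigr) \le Y_i \le \phi(i)$, so the task reduces to (i) locating the zero of $\phi$ and (ii) showing the error bar is harmless.

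For (i), I would first observe that $\alpha^{*}$ is by definition the solution of $\frac{\bar{t}_0}{m}\,g(\alpha^{*}) = m\alpha^{*} + Y_1$ (this is exactly the equation stated just before \cref{lemma:sum_B_i_over_m}, since the integrand $\frac{z}{1-z}e^{z}$ appearing there is $g'$). Substituting $i=\alpha^{*}m$ gives $\phi(\alpha^{*}m) = Y_1 + \alpha^{*}m - 1 - (m\alpha^{*}+Y_1) = -1$. Next I would differentiate (in the real variable $i$): $\phi'(i) = 1 - \frac{\bar{t}_0}{m^{2}}\cdot\frac{i/m}{1-i/m}e^{i/m}$. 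Because $mb_0=o(T)$ forces $\bar{t}_0 = t_0 - mb_0 - m + 1 = \Theta(T)$, while $m=o(\sqrt{T})$ gives $\bar{t}_0/m^{2}=\omega(1)$, and because $\alpha^{*}$ stays bounded away from $1$ (indeed $\alpha^{*}\to\alpha\simeq 0.603$ as in \cref{theo:ub_balance}, so $\frac{x}{1-x}e^{x}$ is bounded below by a positive constant on a fixed neighbourhood of $\alpha^{*}$), we get $\phi'(i) = -\Theta(\bar{t}_0/m^{2}) = -\omega(1)$ for all $i$ within a constant fraction of $m$ of $\alpha^{*}m$. Hence $\phi$ has a unique zero there, located at $\alpha^{*}m + O(m^{2}/\bar{t}_0) = \alpha^{*}m + o(1)$; moreover $\phi(1)=Y_1>0$ and, since $x\mapsto\frac{x}{1-x}e^{x}$ is increasing, the forward difference of $\phi$ changes sign at most once on $\{1,\dots,\lfloor\alpha^{*}m\rfloor\}$, so $\phi$ is unimodal and stays positive on all of $\{1,\dots,\lfloor\alpha^{*}m\rfloor\}$ up to an $o(1)$ boundary effect.

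For (ii), note that the slack $\frac{\bar{t}_0}{m}\bigl(g(\tfrac{i+1}{m})-g(\tfrac{i}{m})\bigr)$ is, up to a constant factor, exactly $|\phi(i+1)-\phi(i)|$, so $Y_i$ sits within one $\phi$-step of $\phi(i)$; combined with the $\omega(1)$ per-step drop of $\phi$ through its zero, this shows $Y_i>0$ for $i\le\lfloor\alpha^{*}m\rfloor-O(1)$ and $Y_i<0$ for $i\ge\lfloor\alpha^{*}m\rfloor+O(1)$, i.e. $i^{*}=\lfloor\alpha^{*}m\rfloor+O(1)$ (or $+o(m)$ if one reads $i^{*}$ off $\lceil B^{(i)}_{t_i}/(m-i)\rceil$ directly, via \cref{lem:error_b_i_over_m_m_i_vs_y_i}). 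Finally I would observe that changing the truncation point of $Q_4=\sum_{i=1}^{i^{*}}\lceil B^{(i)}_{t_i}/(m-i)\rceil$ by $O(1)$ (resp.\ $o(m)$) only adds terms each of size $O(T/m^{2})$, hence $o(T)$ in total — negligible after dividing by $t_{m-1}=\Theta(T)$ in the competitive-ratio computation of \cref{theo:ub_balance_complet} — so the stated equality is valid at the resolution at which \cref{lemma:sum_B_i_over_m} is used. The main obstacle is precisely this last point: the error bars on $Y_i$ are $\omega(1)$, not $o(1)$, so the equality $i^{*}=\lfloor\alpha^{*}m\rfloor$ cannot be read as a statement about a single sharply defined crossing; the saving grace is that the relevant slope of $\phi$ is of the same $\omega(1)$ order, pinning the crossing to an $O(1)$ window, and making the claim airtight is then a matter of carrying this $o(T)$ slack through the floor/ceiling bookkeeping rather than any new argument.
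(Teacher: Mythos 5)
Your proposal is correct and follows essentially the same route as the paper: the paper likewise defines $\alpha^*$ via $\frac{\bar{t}_0}{m}\int_{1/m}^{\alpha^*}\frac{z}{1-z}e^z\,\mathrm{d}z - m\alpha^* - Y_1 = 0$ and identifies $i^*$ with $\lfloor\alpha^* m\rfloor$ through the sandwich of \cref{lem:y_i_dynamic}, simply asserting the bracketing condition \cref{eq:def_of_i_star} without the slope/concavity analysis you carry out. Your additional point that the crossing is only pinned to an $O(1)$ window of indices (an ambiguity that is harmless after the $o(T)$ bookkeeping in \cref{lemma:sum_B_i_over_m} and \cref{theo:ub_balance_complet}) is a more careful rendering of the same argument, not a departure from it.
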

\begin{proof}
The objective is to find $i^*$ such that 
    \begin{align}\label{eq:def_of_i_star}
        Y_1 + i^*-1 - \frac{\bar{t}_0}{m}g((i^*+1)/m) < 0 \leq Y_1 + i^*-1 - \frac{\bar{t}_0}{m}g(i^*/m)
    \end{align}
Let define $\alpha^* \in (1/m,1)$ the solution  of 
\begin{align}
        \frac{\bar{t}_0}{m} \int_{\frac{1}{m}}^\alpha \frac{z}{1-z}e^z {\rm d}z - m\alpha^* - Y_1 = 0\,,
\end{align}
then $i^* = \lfloor \alpha^* m\rfloor$ satisfies \cref{eq:def_of_i_star}.
\end{proof}
\begin{lemma}\label{lemma:sum_B_i_over_m}
    \begin{align}
        \sum_{i=1}^{m-1} \left\lceil\frac{B^{(i)}_i}{m-i}\right\rceil = \lfloor \alpha^* m \rfloor \left\lceil\frac{B^{(1)}_{t_1}}{m}\right\rceil - \bar{t}_0 \int_{\frac{1}{m}}^{\alpha^*} g(x){\rm d}x+ \frac{g(\alpha^*) - g(1/m)}{m} + \cO(m^2)
    \end{align}
    where $\alpha^* \in (1/m,1)$ the solution  of $$\frac{\bar{t}_0}{m} \int_{\frac{1}{m}}^\alpha \frac{z}{1-z}e^z {\rm d}z - m\alpha^* - Y_1 = 0$$  
    and $$g(x) = \frac{x(\alpha^*-x)}{1-x}\left(1+\frac{1}{m-1}\right)^{mx}$$
\end{lemma}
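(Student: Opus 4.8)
The plan is to localise the whole sum around the index $i^{*}=\lfloor\alpha^{*}m\rfloor$ produced by the previous lemma — the index at which the auxiliary sequence $Y_i$ of \cref{eq:def_Y_i_adv} changes sign — and to split accordingly,
\[
\sum_{i=1}^{m-1}\left\lceil\frac{B^{(i)}_{t_i}}{m-i}\right\rceil
=\sum_{i=1}^{i^{*}}\left\lceil\frac{B^{(i)}_{t_i}}{m-i}\right\rceil
+\sum_{i=i^{*}+1}^{m-1}\left\lceil\frac{B^{(i)}_{t_i}}{m-i}\right\rceil .
\]
For the tail, \cref{lem:y_i_dynamic} together with the defining equation of $\alpha^{*}$ shows that the upper bound $Y_1+i-1-\frac{\bar t_0}{m}\int_{1/m}^{i/m}\frac{x}{1-x}e^{x}\,{\rm d}x$ on $Y_i$ is negative for every $i>i^{*}$ (the relevant map is eventually decreasing because $\bar t_0/m^{2}=\omega(1)$ in the regime $m=o(\sqrt T)$, $mb_0=o(T)$), so $Y_i\le 0$ there; plugging this into \cref{lem:error_b_i_over_m_m_i_vs_y_i} yields $\lceil B^{(i)}_{t_i}/(m-i)\rceil\le Y_i+\cO(i)=\cO(m)$ for each such $i$, hence the tail contributes only $\cO(m^{2})$ and is absorbed in the error term.

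For the main block I would first invoke \cref{lem:error_b_i_over_m_m_i_vs_y_i} to replace each $\lceil B^{(i)}_{t_i}/(m-i)\rceil$ by $Y_i$ at cumulative cost $\sum_{i=1}^{i^{*}}\cO(i)=\cO\big((i^{*})^{2}\big)=\cO(m^{2})$, and then evaluate $\sum_{i=1}^{i^{*}}Y_i$ from the exact formula \cref{eq:y_i_exact}. The constant part gives $i^{*}Y_1=\lfloor\alpha^{*}m\rfloor\lceil B^{(1)}_{t_1}/(m-1)\rceil$, which I would rewrite as $\lfloor\alpha^{*}m\rfloor\lceil B^{(1)}_{t_1}/m\rceil$ modulo a lower-order term; the arithmetic part $\sum_{i=1}^{i^{*}}(i-1)$ is $\cO(m^{2})$. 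For the double sum, swap the order of summation to obtain $-\frac{\bar t_0}{m^{2}}\sum_{k=2}^{i^{*}}(i^{*}-k+1)\frac{k}{m-k}\big(1+\frac{1}{m-1}\big)^{k}$; using $i^{*}-k+1=\alpha^{*}m-k+\cO(1)$, the substitution $x=k/m$, the identity $\frac{k}{m-k}=\frac{x}{1-x}$, and $\big(1+\frac{1}{m-1}\big)^{k}=\big(1+\frac{1}{m-1}\big)^{mx}\big|_{x=k/m}$, each summand equals $m\,g(k/m)$ up to $\cO(1)$ corrections, where $g(x)=\frac{x(\alpha^{*}-x)}{1-x}\big(1+\frac{1}{m-1}\big)^{mx}$ is exactly the function in the statement. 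The double sum is therefore $-\frac{\bar t_0}{m}\sum_{k=2}^{i^{*}}g(k/m)$, a left Riemann sum of mesh $1/m$ for $-\bar t_0\int_{1/m}^{\alpha^{*}}g$; its boundary (Euler--Maclaurin) correction supplies the remaining — and, since $g(\alpha^{*})=0$, negligible — term $\frac{g(\alpha^{*})-g(1/m)}{m}$.

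The main obstacle is the error bookkeeping rather than any single idea. The substitution of $\tilde t_i$ for $t_i$ (\cref{lem:error_t_tilde_t}) has to be carried through \cref{lem:error_b_i_over_m_m_i_vs_y_i} and through the swapped double sum so that the accumulated slack stays of order $m^{2}$, up to the lower-order $Y_1$-discrepancy that is fed into the $o(T)$ error of \cref{theo:ub_balance_complet}; one must likewise control the replacement of the floors $\lfloor t_i/m\rfloor$ by their arguments, the $1/(m-1)$-versus-$1/m$ gap in $Y_1$, and the precision of the Riemann approximation of a summand that itself varies with $m$ through $\big(1+\frac{1}{m-1}\big)^{mx}$. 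None of these steps is conceptually hard, but they are numerous; the structural point — that everything past $i^{*}$ is negligible while everything up to $i^{*}$ is governed by the explicit sequence $Y_i$ — is what makes the computation go through.
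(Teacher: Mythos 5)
Your proposal follows essentially the same route as the paper's proof: split the sum at $i^*=\lfloor\alpha^* m\rfloor$, replace each $\left\lceil B^{(i)}_{t_i}/(m-i)\right\rceil$ by $Y_i$ via \cref{lem:error_b_i_over_m_m_i_vs_y_i} at cumulative cost $\cO(m^2)$, evaluate $\sum_{i\le i^*}Y_i$ from \cref{eq:y_i_exact} by swapping the double sum, and pass to the integral of $g$ by a Riemann-sum approximation with the stated boundary correction. The only cosmetic difference is in the tail $i>i^*$, which you bound through $Y_i\le 0$ together with non-negativity of the budgets while the paper writes those terms out explicitly; both give the same $\cO(m^2)$ contribution.
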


\begin{proof}
    Let define $i^* = \lfloor \alpha^* m\rfloor$. Then,
    \begin{align}
        \sum_{i=1}^{m-1} \left\lceil\frac{B^{(i)}_i}{m-i}\right\rceil 
        & = \sum_{i=1}^{i^*} Y_i + \sum_{i=1}^{i^*} \left\lceil\frac{B^{(i)}_i}{m-i}\right\rceil - Y_i + \sum_{i=i^*+1}^{m-1} \left\lceil\frac{B^{(i)}_i}{m-i}\right\rceil \\
        \sum_{i=1}^{m-1} \left\lceil\frac{B^{(i)}_i}{m-i}\right\rceil &= \sum_{i=1}^{i^*} Y_i+ \sum_{i=1}^{i^{*}}\left(\frac{t_i - t_1 - \tilde{t}_i + \tilde{t}_1}{m} - \sum_{j=2}^{i} \frac{t_{j-1} - \tilde{t}_{j-1}}{(m-j)(m-1)} + \cO(i)\right)\nonumber\\
        &+ (m-1)(m-1-i^{*})-\sum_{i=i^{*}+1}^{m-1} (t_i ~{\rm mod}~m)\\
        \sum_{i=1}^{m-1} \left\lceil\frac{B^{(i)}_i}{m-i}\right\rceil &= \sum_{i=1}^{i^*} Y_i+ i^{*}\frac{\tilde{t}_1-t_1}{m}+ \sum_{i=1}^{i^{*}}\frac{t_i - \tilde{t}_i}{m}- \sum_{i=1}^{i^{*}}\sum_{j=2}^{i}\frac{t_{j-1} - \tilde{t}_{j-1}}{(m-j)(m-1)}+ \cO((i^{*})^{2})\nonumber\\
        &+ (m-1)(m-1-i^{*})- \sum_{i=i^{*}+1}^{m-1} (t_i ~{\rm mod}~m)\\
        \sum_{i=1}^{m-1} \left\lceil\frac{B^{(i)}_i}{m-i}\right\rceil &= \sum_{i=1}^{i^*} Y_i+i^{*}\frac{\tilde{t}_1-t_1}{m}+\sum_{i=1}^{i^{*}}\frac{t_i - \tilde{t}_i}{m}- \sum_{i=1}^{i^*}\sum_{j=2}^{i}\frac{1}{(m-j)} \left(\left(1+\frac{1}{m-1}\right)^{j-1} -1\right)
         \\
        &+(m-1)(m-1-i^{*})+\cO(m^2) ~~~~~~~~~~~~~~~~~~~~~\text{using}~i^{*}=\lfloor\alpha^{*} ~m\rfloor~\text{and~\cref{eq:error_t_tilde_i_intermediary_bis} }\nonumber\\
       \end{align}
       \begin{align}
        \sum_{i=1}^{m-1} \left\lceil\frac{B^{(i)}_i}{m-i}\right\rceil &= \sum_{i=1}^{i^*} Y_i+i^{*}\frac{\tilde{t}_1-t_1}{m}+\sum_{i=1}^{i^{*}}\frac{t_i - \tilde{t}_i}{m}- \frac{1}{(m-1)}\sum_{i=1}^{i^*}\sum_{j=2}^{i} \left(\left(1+\frac{1}{m-1}\right)^{j-1} -1\right)\nonumber\\
        &+(m-1)(m-1-i^{*})+\cO(m^2)\\
        \sum_{i=1}^{m-1} \left\lceil\frac{B^{(i)}_i}{m-i}\right\rceil &= \sum_{i=1}^{i^*} Y_i+ i^{*}\frac{\tilde{t}_1-t_1}{m}+\frac{i^*(i^*+1)}{2(m-1)}-\frac{1}{(m-1)}\sum_{i=1}^{i^*}\sum_{j=2}^{i} \left(\left(1+\frac{1}{m-1}\right)^{j-1} \right)\nonumber\\
        &-\frac{i^*}{m-1}+(m-1)(m-1-i^{*})+\sum_{i=1}^{i^{*}}\frac{t_i - \tilde{t}_i}{m}+\cO(m^2)\\
        \sum_{i=1}^{m-1} \left\lceil\frac{B^{(i)}_i}{m-i}\right\rceil &= \sum_{i=1}^{i^*} Y_i+ i^{*}\frac{\tilde{t}_1-t_1}{m}+\sum_{i=1}^{i^{*}}\frac{t_i - \tilde{t}_i}{m}+\frac{i^*(i^*+1)}{2(m-i^{*})}+ (m-1)(m-1-i^{*})+\cO(m^2)\\
       \sum_{i=1}^{m-1} \left\lceil\frac{B^{(i)}_i}{m-i}\right\rceil  &= \sum_{i=1}^{i^*} Y_i+ \cO(m^2)~~~~~~~~~~~~~~~~~\text{by~\cref{lem:error_t_tilde_t}~and~$i^{*}=\lfloor\alpha^{*}m\rfloor$}
       \end{align}

\begin{align}
 \sum_{i=1}^{m-1} \left\lceil\frac{B^{(i)}_i}{m-i}\right\rceil&= \sum_{i=1}^{i^*}\left(Y_1 + i-1 - \frac{\bar{t}_0}{m^2}\sum_{k=2}^i \frac{k}{m-k} \left(1 + \frac{1}{m-1}\right)^k\right) + \cO(m^2)~~~~~~~\text{by~\cref{eq:y_i_exact}}\\
 \sum_{i=1}^{m-1} \left\lceil\frac{B^{(i)}_i}{m-i}\right\rceil&= i^{*}Y_{1}+\frac{i^{*}(i^*+1)}{2}-i^{*}- \frac{\bar{t}_0}{m^2}\sum_{i=1}^{i^{*}}\sum_{k=2}^i \frac{k}{m-k} \left(1 + \frac{1}{m-1}\right)^k + \cO(m^2)\\
 \sum_{i=1}^{m-1} \left\lceil\frac{B^{(i)}_i}{m-i}\right\rceil&= i^{*}Y_1- \frac{\bar{t}_0}{m^2}\sum_{i=1}^{i^{*}}\sum_{k=2}^i \frac{k}{m-k} \left(1 + \frac{1}{m-1}\right)^k + \cO(m^2)~~~~~~~~~~~~\text{by~$i^{*}=\lfloor\alpha^{*}m\rfloor$}\\
 \sum_{i=1}^{m-1} \left\lceil\frac{B^{(i)}_i}{m-i}\right\rceil& = i^*Y_1 - \frac{\bar{t}_0}{m^2} \sum_{k=2}^{i^*} \frac{(i^* - k) k}{m-k} \left(1 + \frac{1}{m-1}\right)^k + \cO(m^2)\\
        \sum_{i=1}^{m-1} \left\lceil\frac{B^{(i)}_i}{m-i}\right\rceil & = i^*Y_1 - \frac{\bar{t}_0}{m} \sum_{k=2}^{i^*} \frac{(\frac{i^*}{m} - \frac{k}{m}) \frac{k}{m}}{1-\frac{k}{m}} \left(\left(1 + \frac{1}{m-1}\right)^m\right)^\frac{k}{m} + \cO(m^2)\\
        \sum_{i=1}^{m-1} \left\lceil\frac{B^{(i)}_i}{m-i}\right\rceil &=i^{*}Y_1- \frac{\bar{t}_0}{m} \sum_{k=2}^{i^*} g\left(\frac{k}{m}\right)+\cO(m^2) ~~~~~~~~~~~\text{with $g(x)=\frac{x(\alpha^{*}-x)}{1-x}\left(1+\frac{1}{m-1}\right)^{mx}$}\\
        \sum_{i=1}^{m-1} \left\lceil\frac{B^{(i)}_i}{m-i}\right\rceil &=i^{*}Y_1- \frac{\bar{t}_0}{m} \sum_{k=2}^{i^*} g\left(\frac{k}{m}\right)+\cO(m^2)\label{eq:135}\\
        \sum_{i=1}^{m-1} \left\lceil\frac{B^{(i)}_i}{m-i}\right\rceil &= i^{*}\left\lceil\frac{B^{(1)}_{t_1}}{m-1}\right\rceil - \bar{t}_0 \int_{\frac{1}{m}}^{\alpha^{*}} g\left(x\right) \mathrm{d}x+ \frac{g(\alpha^{*})-g(1/m)}{m}+\cO(m^2)\label{eq:136}
\end{align}
  Moving from \cref{eq:135} to \cref{eq:136}  arises from approximating a Riemann sum by an integral.
\end{proof}
\subsection{Proof of \cref{theo:balance_is_best}}\label{app:proof_balance_is_best}
\uballadv*
 \begin{proof}
    The proof is based on the adversarial graph design defined in \cref{app:proof_ub_balance} and organized in two steps: 
    \begin{enumerate}
        \item  Showing that the sequence of total budget decreases at least as fast as the one of $\balance$.
        \item  Using this in \cref{eq:18} to show that $\alg(\GT) \leq \balance(\GT) + o(T)$.
    \end{enumerate}

    $\alg$ is assumed to be any matching algorithm, potentially randomized. The matching built by $\alg$ is denoted $\bf x$, the graph $\GT$ is adversarially defined based on $\bx$ as in \cref{app:proof_ub_balance} and we use the rest of the notation defined there. Note that only the choice of nodes in $U_1,\dots, U_{m-1}$ differs from the graph adversarial to $\alg$, not the other quantities such as $t_i$.
    
    For any $i\leq m-1$ and $t\in\iset{T}$, the total budget of $\alg$ is denoted $B^{(i)}_t = \lE\left[\sum_{u\in U_i}x_{u,t}\right]$ where the expectation is taken over the randomness of $\alg$. We denote $B^{(i),{\tt bal}}_{t}$ for the sequence generated by $\balance$ in \cref{app:proof_ub_balance} for comparison.

    \paragraph{Dynamic of $B^{(i)}_{t_i}$.} First, at time $t_0$, by application of \cref{Z_dynamic}, $B_{t_0}^{(0)} = B^{(0),{\tt bal}}_{t_0}$. The key element of the proof is to use that
    \begin{align}
        \forall y\in \lR_+^n, \forall k\leq n,  \sum_{j=1}^k y_{(n-j)} \leq \frac{k}{n}\|y\|_1 ~~~\text{where}~y_{(j)} \geq 0~\text{is the $i^{th}$ largest coordinate of $y$.}
\label{eq:smurf}
    \end{align}
    Thus, by applying \cref{eq:smurf} on $b_{\cdot,t_0}$, after the adversary removes $n-m+1$ nodes to build $U_1$, we have $B_{t_0}^{(1)} \leq B^{(1),{\tt bal}}_{t_0} + m-1$. The term $r_0 = m-1$ comes from the fact $\balance$ does not exactly equalize the budgets (see \cref{lemma:balance_keeps_equal_budgets}) and a randomized balance algorithm could do it more accurately in expectation. By induction, using at each step \cref{Z_dynamic} and \cref{eq:smurf}, we obtain that $\forall i\leq m-1, ~B^{(i)}_{t_i} \leq B^{(i),{\tt bal}}_{t_i} + i (m-i)$

    \paragraph{Showing that $\alg(\GT) \leq \balance(\GT) + o(T)$.} Denoting $i_t$ the phase of the graph to which time step $t$ belongs, it is possible to show that,
    $$\alg(\GT) = t^* + \sum_{i=i_{t^*}+1}^{m-1} (m-i) \left(\left\lfloor\frac{t_i}{m}\right\rfloor - \left\lfloor\frac{t_{i-1}}{m}\right\rfloor\right) + \cO(m) ~~\text{where}~t^* = \max\{t\in\iset{T}: B^{(i_t)}_{t} \geq 1\}$$
    Noting that, $B^{(i_t),{\tt bal}}_{t} \leq B^{(i_t)}_{t} + i_t(m-i_t)$,  leads to $\alg(G^{\rm Th.2}) \leq \balance(G^{\rm Th.2}) + m^2$, which gives, 
    $$\CR^{\rm adv}(\alg, \cG_{T,m}) \leq \CR(\alg, G^{\rm Th.2})  \leq \CR(\balance, G^{\rm Th.2}) + o_{T}(1)$$
where $\CR(\alg, \GT)=\frac{\alg(\GT)}{\opt(\GT)}$.

\end{proof}

\section{Stochastic Case}
\label{appendix:sto_case}

\subsection{Proof of \cref{theo:lb_stochastic_general}}\label{app:proof_lb_stochastic_general}

\lbstochasticgeneral*

  The proof of  \cref{theo:lb_stochastic_general} is based on Wormald's theorem introduced in \cite{Wormald95,Wormald1995DifferentialEF} and is organized as follows:
  \begin{enumerate}
      \item Definition of the evolution of $(Y_k(t))_{k\geq 0}$ and $\greedy(G,t)$.
      \item Proving that $(Y_k(t))_{k\geq 0}$ and $\greedy(G,t)$ satisfy the hypotheses of the Wormald theorem.
      \item Application of Wormald theorem on $(Y_k(t))_{k\geq 0}$ and $\greedy(G,t)$.
  \end{enumerate}

Recall that for all $u\in U$, $b_{u,t}\in \lN$ is given by, 
\begin{equation}
b_{u,t}=\min(K,b_{u,t-1}-x_{u,t}+ \eta_t) ~~~~~\text{with}~b_{u,0}=b_0\geq 1 ~~~\text{and}~K\in \lN.
    \label{budget_dynamic_sto}
\end{equation}
Here $\eta_t$ is a realization of  a Bernoulli random variable  with parameter $\frac{\beta}{n}$ denoted $\cB\left(\frac{\beta}{n}\right)$.  

First, let's introduce some notations, for $k\in\mathbb{N}$ and $t\in [T]$,
\begin{itemize}
    \item $U_k(t)= \{u\in U: b_{u,t} = k\}$ is the set of nodes with budget $k$.
    \item $Y_{k}(t)=|U_k(t)|$ is the number of nodes with budget equals to $k$. 
    \item $\greedy(G,T)= \sum_{u\in U}\sum_{t=1}^T x_{u,t}$ is  the matching size. 
    \item $C(t)=\sum_{k\geq 1}Y_{k}(t)=n-Y_{0}(t)$  is the total number of nodes with budget at least equals to 1. 
\end{itemize}

In order to apply Wormald's theorem, it is necessary to track the evolution of $\greedy(G,T)$. To achieve this, we must precisely quantify the one-step change in $\greedy(G,t)$ for all $t\in \iset{T}$. This crucial step is addressed in the forthcoming lemma,
\begin{lemma}
\label{lemma:Greedy_one_step_change}
For $t\in \iset{T}$, the expectation of the one-step change of $\greedy(G,t)$ is given by, 
$$\mathbb{E}\left[\greedy(G,t+1)-\greedy(G,t)|\greedy(G,t)\right]= 1-\left(1-\frac{a}{n}\right)^{\sum_{k\geq 1}Y_{k}(t)}=1-\left(1-\frac{a}{n}\right)^{n-Y_{0}(t)} $$
\end{lemma}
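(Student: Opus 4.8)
The plan is to recognize that $\greedy(G,t+1)-\greedy(G,t)$ is a $\{0,1\}$-valued random variable --- it equals $1$ precisely when the newly revealed vertex $t+1\in V$ is matched by $\greedy$, and $0$ otherwise --- and then to compute its conditional success probability by exploiting that the edges incident to $t+1$ are independent of the past. First I would fix the meaning of the conditioning: I would condition on the filtration $\mathcal{F}_t$ generated by all edges and refills revealed up to and including step $t$, which determines both $\greedy(G,t)$ and the whole profile $\mathbf{Y}(t)=(Y_k(t))_{k\ge 0}$, so that the right-hand side of the claimed identity is indeed $\mathcal{F}_t$-measurable.

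The structural observation driving the computation is that, by the budget recursion $b_{u,t}=\min(K,b_{u,t-1}-x_{u,t}+\eta_t)$ together with the eligibility constraint $b_{u,t-1}<1\Rightarrow x_{u,t}=0$, the set of $U$-nodes to which $\greedy$ is allowed to match the arriving vertex $t+1$ is exactly $\{u\in U:\ b_{u,t}\ge 1\}=\bigcup_{k\ge 1}U_k(t)$, of cardinality $\sum_{k\ge 1}Y_k(t)=n-Y_0(t)$. This set is $\mathcal{F}_t$-measurable and, in particular, is not affected by $\eta_{t+1}$ nor by the edges of vertex $t+1$. Since $\greedy$ matches $t+1$ to some eligible neighbour whenever one exists (and $t+1$ can be matched at most once), the increment is $1$ iff $t+1$ has at least one neighbour in $\bigcup_{k\ge 1}U_k(t)$.

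Now I would invoke the random graph structure $G(n,T,p)$: each potential edge $(u,t+1)$ is present independently with probability $p=a/n$, and these $n$ indicator variables are independent of $\mathcal{F}_t$. Hence the conditional probability that $t+1$ has no eligible neighbour is $(1-a/n)^{n-Y_0(t)}$, and therefore
$$\mathbb{E}\!\left[\greedy(G,t+1)-\greedy(G,t)\mid \mathcal{F}_t\right]=\Pr\!\left[\text{increment}=1\mid \mathcal{F}_t\right]=1-\left(1-\frac{a}{n}\right)^{n-Y_0(t)},$$
which is the asserted identity (and the equality $\sum_{k\ge1}Y_k(t)=n-Y_0(t)$ gives the first form of the statement). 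I do not expect a genuine obstacle here; the argument is just ``a greedy match succeeds iff the incoming vertex touches the independently sampled set of currently funded $U$-nodes.'' The only point that needs a little care is the timing bookkeeping --- checking that the refill $\eta_{t+1}$ updates budgets only \emph{after} the step-$(t+1)$ matching decision, so that eligibility at step $t+1$ is governed by $b_{\cdot,t}$ and the exponent legitimately equals $n-Y_0(t)$ rather than a quantity depending on $\eta_{t+1}$.
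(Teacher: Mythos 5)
Your proposal is correct and follows essentially the same route as the paper: the increment is the indicator that the arriving vertex is matched, and since its $n$ potential edges are fresh and independent with probability $a/n$ each, the conditional success probability is $1-(1-a/n)^{n-Y_0(t)}$ where $n-Y_0(t)=\sum_{k\ge1}Y_k(t)$ counts the currently eligible $U$-nodes. Your extra care in conditioning on the full filtration (so that $Y_0(t)$ is measurable) and in checking the timing of the refill is a minor but welcome sharpening of the paper's terser statement, which conditions only on $\greedy(G,t)$.
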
   
\begin{proof}
For $t\in[T]$, the matching size at time $t+1$ is defined as follows, 
\begin{align*}
\greedy(G,t+1)=\greedy(G,t)+\indicator{x_{u,t+1}=1,~ u\in U_k(t+1)}
\end{align*}

Moving to conditional expectation gives,

\begin{align*}
    \mathbb{E}\left[\greedy(G,t+1)-\greedy(G,t)\middle|\greedy(G,t)\right]&= \mathbb{P}\left(x_{u,t+1}=1,~ u\in U_k\middle|\greedy(G,t)\right)\\
         &= 1-\left(1-\frac{a}{n}\right)^{C(t)}\\
         &= 1-\left(1-\frac{a}{n}\right)^{n-Y_{0}(t)}
\end{align*}
\end{proof} 

  Since the evolution of $\greedy(G,t)$ depends on $Y_0$, we need to quantify  the evolution of $Y_k(t), \forall k\in \lN, t\in \iset{T} $. This is done in the subsequent lemma,
\begin{lemma}
\label{lemma:Y_one_step_change}
    For $t\in \iset{T}$, denoting $\Sigma(t)=\frac{1}{pC(t)} (1-(1-p)^{C(t)})$, the expectation of the one-step change of   $Y_{k}$, when matching is built using $\greedy$ algorithm is given by, 

    \begin{align}
    \label{evolution_Y_k_in_mean}
     \begin{cases}
            \mathbb{E}\left[\Delta_0(t)|\mathbf{Y}(t)\right]= -Y_{0}(t)\left[\frac{\beta}{n} (1-p\,\Sigma(t))\right]+Y_{1}(t)(1-\frac{\beta}{n})p\,\Sigma(t)\\
            \mathbb{E}\left[\Delta_1(t)|\mathbf{Y}(t)\right]= -Y_{1}(t)\left[\frac{\beta}{n}(1-p\,\Sigma(t))+ (1-\frac{\beta}{n})p\,\Sigma(t)\right]+Y_{0}(t)\frac{\beta}{n}+Y_{2}(t)(1-\frac{\beta}{n})p\,\Sigma(t)\\
            \mathbb{E}\left[\Delta_k(t)|\mathbf{Y}(t)\right]=\frac{\beta}{n} (1-p\,\Sigma(t))\left[Y_{k-1}(t)-Y_{k}(t)\right]
+\left[Y_{k+1}(t)-Y_{k}(t)\right](1-\frac{\beta}{n})p\,\Sigma(t)~~~~\forall k>1
        \end{cases}
    \end{align}
where $ \forall k\geq 0, ~\Delta_k(t)=Y_{k}(t+1)-Y_{k}(t)$
\end{lemma}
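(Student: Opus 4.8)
The plan is to compute the one-step change of each $Y_k(t)$ by conditioning on the state $\mathbf{Y}(t)$ and decomposing the randomness at step $t+1$ into three independent sources: the incoming vertex $t+1$ and its edges (each edge $(u,t+1)$ present with probability $p=a/n$), the refill indicator $\eta_{t+1}\sim\cB(\beta/n)$ applied uniformly to all $u$, and the $\greedy$ matching decision. First I would isolate the matching event: vertex $t+1$ is matched iff it has at least one neighbour among the $C(t)=n-Y_0(t)$ nodes with positive budget, which happens with probability $1-(1-p)^{C(t)}$; conditionally on being matched, $\greedy$ picks a uniformly random such neighbour, so a \emph{given} node $u$ with positive budget is the matched node with probability $\tfrac1{C(t)}\big(1-(1-p)^{C(t)}\big) = p\,\Sigma(t)$, where $\Sigma(t)=\tfrac{1}{pC(t)}(1-(1-p)^{C(t)})$. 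Writing $\delta$ for the refill probability $\beta/n$ (the notation used in the lemma), I would then treat the refill and match as the two elementary moves changing a node's budget.

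Next I would track, for each budget level $k$, the expected inflow and outflow. A node currently at level $k$ can: (i) lose one unit by being matched (probability $p\Sigma(t)$) and gain one by a refill (probability $\delta$) — net effect depending on $k$ because of the floor at $0$ and the cap at $K$; (ii) stay. The cleanest bookkeeping is to handle the generic interior case $1<k<K$ first: a node leaves level $k$ downward if matched-and-not-refilled (probability $(1-\delta)p\Sigma(t)$, assuming $\delta p\Sigma(t)$ is second order and can be folded into the error term, or by noting matched-and-refilled keeps it at $k$), and leaves upward if refilled-and-not-matched (probability $\delta(1-p\Sigma(t))$); symmetrically it receives inflow from level $k-1$ (refilled, not matched) and from level $k+1$ (matched, not refilled). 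This yields $\mathbb{E}[\Delta_k(t)\mid\mathbf{Y}(t)] = \delta(1-p\Sigma(t))[Y_{k-1}(t)-Y_k(t)] + (1-\delta)p\Sigma(t)[Y_{k+1}(t)-Y_k(t)]$ for $k>1$, after discarding $\cO(1/n^2)$ cross terms. Then I would do the two boundary cases separately: for $k=0$ there is no downward move (matched nodes at $0$ cannot be matched since they have no budget — indeed level-$0$ nodes are never matched), so $Y_0$ only loses by refills of its own nodes, $-Y_0(t)\delta(1-p\Sigma(t))$, and gains from level-$1$ nodes that are matched but not refilled, $+Y_1(t)(1-\delta)p\Sigma(t)$; for $k=1$ one must add the inflow $+Y_0(t)\delta$ from level-$0$ refills (here there is no "not matched" qualifier because level-$0$ nodes cannot be matched), giving the stated formula. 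The case $k=K$ would be handled analogously, using $\min(K,\cdot)$ so refills of level-$K$ nodes have no effect.

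The main obstacle — really the only subtlety — is the careful accounting of the simultaneous refill-and-match event and the correct asymmetry at the boundaries $k=0$ (no outgoing match, and its inflow from $k=1$ and outflow to $k=1$ must be matched consistently) and $k=K$ (no outgoing refill). One must be consistent about whether a node that is \emph{both} matched and refilled in the same step is counted as staying at level $k$ or as moving; since the two events have probabilities $p\Sigma(t)=\cO(1/n)$ and $\delta=\cO(1/n)$ their intersection is $\cO(1/n^2)$, negligible at the scale of the differential-equation method, so any consistent convention gives the same drift up to the error terms already absorbed elsewhere in the proof of \cref{theo:lb_stochastic_general}. I would therefore state the convention explicitly (match takes precedence, or simply drop the joint event) and verify that the three displayed equations close up, i.e. that $\sum_{k\geq 0}\mathbb{E}[\Delta_k(t)\mid\mathbf{Y}(t)]=0$, which is the required consistency check since $\sum_k Y_k(t)=n$ is conserved.
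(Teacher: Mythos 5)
Your proposal is correct and follows essentially the same route as the paper: decompose $\Delta_k(t)$ into outflow (refilled-and-not-matched, matched-and-not-refilled) and the corresponding inflows from levels $k\pm 1$, use independence of the refill and the matching decision, and identify the per-node matching probability as $p\,\Sigma(t)$ — the only cosmetic difference being that you get $\tfrac{1}{C(t)}\bigl(1-(1-p)^{C(t)}\bigr)$ by an exchangeability argument, while the paper derives the same value through an explicit binomial sum conditioning on the number of positive-budget neighbours. The boundary cases $k=0,1,K$ are handled as in the paper, so no gap.
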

\begin{proof} 
For $t\in \iset{T}$, the evolution of the number of  nodes with budget $k\in \lN$ can be formulated as, 
\begin{align}
\label{Y_k_evol}
Y_{k}(t+1) &=Y_{k}(t)-\sum_{u\in U_k(t)} \indicator{\{\eta_t=1\}\cap \{ x_{u,t}=0\}\cup\left\{\{x_{u,t}=1\} \cap \{\eta_t=0\}\right\}} + \sum_{u\in U_{k-1}(t)} \indicator{\{\eta_t=1\}\cap \{ x_{u,t}=0\}}+\nonumber\\
    &\sum_{u\in U_{k+1}(t)} \indicator{\{x_{u,t}=1\} \cap \{\eta_t=0\}}
\end{align}

We are interested in the conditional expectation of \cref{Y_k_evol}  denoted by $E(t)=\mathbb{E}\left[Y_{k}(t+1)-Y_{k}(t)|\mathbf{Y}(t) \right]$ where $\mathbf{Y}(t)=\left(Y_k(t)\right)_{k\geq 0}$,
\begin{align}
&E(t)\\
&= -\sum_{u\in U_k(t)} \mathbb{P}\left(\left\{\{\eta_t=1\}\cap \{ x_{u,t}=0\}\right\}\cup\left\{ \{x_{u,t}=1\} \cap \{\eta_t=0\}\right\}\middle|\mathbf{Y}(t)\right)\nonumber\\
&+\sum_{u\in U_{k-1}(t)} \mathbb{P}\left( \left\{\{\eta_t=1\}\cap \{ x_{u,t}=0\}\right\}\Big|\mathbf{Y}(t)\right) +\sum_{u\in U_{k+1}(t)} \mathbb{P}\left( \left\{\{x_{u,t}=1\} \cap \{\eta_t=0\}\right\}\Big|\mathbf{Y}(t)\right) \label{eq:139}\\
&= -\sum_{u\in U_k(t)} \mathbb{P}\left(\left\{\{\eta_t=1\}\cap \{ x_{u,t}=0\}\right\}\middle | \mathbf{Y}(t)\right)- \sum_{u\in U_k(t)} \mathbb{P}\left(
\left\{\{\eta_t=0\}\cap \{ x_{u,t}=1\}\right\}\middle | \mathbf{Y}(t)\right) \nonumber\\
&+ \sum_{u\in U_{k-1}(t)} \mathbb{P}\left( \left\{\{\eta_t=1\}\cap \{ x_{u,t}=0\}\right\}\Big|\mathbf{Y}(t)\right) +\sum_{u\in U_{k+1}(t)} \mathbb{P}\left( \left\{\{x_{u,t}=1\} \cap \{\eta_t=0\}\right\}\Big|\mathbf{Y}(t)\right) \label{eq:140}\\
&= -\sum_{u\in U_k(t)}\frac{\beta}{n}\mathbb{P}\left( \{ x_{u,t}=0\}\middle | \mathbf{Y}(t)\right)- \sum_{u\in U_k(t)} (1-\frac{\beta}{n})\mathbb{P}\left(\{ x_{u,t}=1\}
\middle | \mathbf{Y}(t)\right) \nonumber\\
&+\sum_{u\in U_{k-1}(t)} \frac{\beta}{n}\mathbb{P}\left(  \{ x_{u,t}=0\}\Big|\mathbf{Y}(t)\right)+\sum_{u\in U_{k+1}(t)} (1-\frac{\beta}{n})\mathbb{P}\left( \{x_{u,t}=1\} \middle|\mathbf{Y}(t)\right) \label{eq:141}
\end{align}

Moving from \cref{eq:139}  to \cref{eq:140} and then from \cref{eq:140} to  \cref{eq:141} is done using independence. 

To get the final expression of $E(t)$, we need to compute $\mathbb{P}\left[ \{ x_{u,t}=1\} \Big|\mathbf{Y}(t)\right]$. By using   Bayes formula we can see that, 
\begin{align*}
    \mathbb{P}\left[ \{ x_{u,t}=1\} \Big|\mathbf{Y}(t)\right]&=  \mathbb{P}\left[ \{ (u,t)\in G\} \Big|\mathbf{Y}(t)\right]\mathbb{P}\left[ \{ x_{u,t}=1\} \Big|\mathbf{Y}(t), (u,t)\in G \right]\\
   &=p\,\mathbb{P}\left[ \{ x_{u,t}=1\} \Big|\mathbf{Y}(t), (u,t)\in G \right]
\end{align*}
Now, let's compute $\mathbb{P}\left[ \{ x_{u,t}=1\} \Big|\mathbf{Y}(t), (u,t)\in G \right]$, 
\begin{align*}
    \mathbb{P}\left[ \{ x_{u,t}=1\} \Big|\mathbf{Y}(t), (u,t)\in G \right]&= \sum_{i=1 }^{C(t)}\mathbb{P}\left[ \{ x_{c,t}=1, c\in[i], B_c(t)\geq1 \} \Big|\mathbf{Y}(t), (u,t)\in G \right]\\
    &= \sum_{i=1 }^{C(t)}\mathbb{P}\left[ \{ x_{c,t}=1 \} \Big| c\in[i], B_c(t)\geq1 ,\mathbf{Y}(t), (u,t)\in G \right]  \\
    &\mathbb{P}\left[ c\in[i], B_c(t)\geq1 \Big| \mathbf{Y}(t), (u,t)\in G \right] \\
    &=\sum_{i=1 }^{C(t)} \frac{1}{i} \mathbb{P}\left[ c\in[i], B_c(t)\geq1 \Big| \mathbf{Y}(t), (u,t)\in G \right]\\
    &= \sum_{i=1}^{C(t)} \frac{1}{i}\, {C(t)-1\choose i-1} p^{i-1}\,(1-p)^{C(t)-i}\\
    &= \underbrace{\frac{1}{pC(t)} (1-(1-p)^{C(t)})}_{\Sigma(t)}
\end{align*}

Thus, we get 
$$ \mathbb{P}\left[ \{ x_{u,t}=1\} \Big|\mathbf{Y}(t)\right]=\frac{1}{C(t)} (1-(1-p)^{C(t)})$$

  Due to $\greedy$ algorithm,  here the choice of $u$ inside the probabilities doesn't depend on $U_k$, so putting everything together in $E(t)$, and distinguishing cases where $k=0, k=1$ and $k\geq 2$, we get, 
\begin{align*}
    \begin{cases}
        \mathbb{E}\left[\Delta_0(t)|\mathbf{Y}(t)\right]= -Y_{0}(t)\left[\frac{\beta}{n} (1-p\,\Sigma(t))\right]+Y_{1}(t)(1-\frac{\beta}{n})p\,\Sigma(t)\\
            \mathbb{E}\left[\Delta_1(t)|\mathbf{Y}(t)\right]= -Y_{1}(t)\left[\frac{\beta}{n} (1-p\,\Sigma(t))+ (1-\frac{\beta}{n})p\,\Sigma(t)\right]+Y_{0}(t)\frac{\beta}{n}+Y_{2}(t)(1-\frac{\beta}{n})p\,\Sigma(t)\\
            \mathbb{E}\left[\Delta_k(t)|\mathbf{Y}(t)\right]= \frac{\beta}{n} (1-p\,\Sigma(t))\left[Y_{k-1}(t)-Y_{k}(t)\right]
+\left[Y_{k+1}(t)-Y_{k}(t)\right](1-\frac{\beta}{n})p\,\Sigma(t)~~~~\forall k>1
        \end{cases}
    \end{align*}
  where $ \forall k\geq 0, ~\Delta_k(t)=Y_{k}(t+1)-Y_{k}(t)$.  
  
\end{proof}

  Before establishing the hypotheses of Wormald's theorem, we introduce a technical lemma,
\begin{lemma}
\label{lemma:technical_lemma}
    For $n>0$, $a\leq n/2$ and $0\leq w\leq 1$, 
    $$0\leq e^{-aw}-\left(1-\frac{a}{n}\right)^{nw}\leq \frac{a}{ne}$$
\end{lemma}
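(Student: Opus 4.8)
The plan is to handle the two inequalities separately. The left inequality $e^{-aw} \geq (1-a/n)^{nw}$ is immediate: since $a \leq n/2$ we have $a/n \in [0,1)$, hence $1 - a/n \leq e^{-a/n}$, and raising both nonnegative sides to the power $nw \geq 0$ gives the claim. (If $a = 0$ the whole statement is trivial, so assume $a > 0$ from now on.)

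For the right inequality, the idea is to rewrite $(1-a/n)^{nw} = e^{-\lambda w}$ where $\lambda := -n\ln(1-a/n)$, so that the quantity to bound becomes $e^{-aw} - e^{-\lambda w}$. From $-\ln(1-t) \geq t$ we get $\lambda \geq a > 0$; then I would factor $e^{-aw} - e^{-\lambda w} = e^{-aw}\bigl(1 - e^{-(\lambda-a)w}\bigr)$ and apply $1 - e^{-s} \leq s$ (valid for $s \geq 0$) to obtain $e^{-aw} - e^{-\lambda w} \leq (\lambda - a)\, w\, e^{-aw}$. Since $w \mapsto w e^{-aw}$ is maximized over $w \geq 0$ at $w = 1/a$ with value $1/(ae)$, this yields $e^{-aw} - (1-a/n)^{nw} \leq (\lambda - a)/(ae)$ uniformly in $w \in [0,1]$.

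It then remains to show $\lambda - a \leq a^2/n$, equivalently $-\ln(1-x) \leq x + x^2$ for $x := a/n \in (0, 1/2]$. I would use the power series $-\ln(1-x) = x + \sum_{k\geq 2} x^k/k$ together with the estimate $\sum_{k\geq 2} x^k/k \leq \tfrac12 x^2 \sum_{j\geq 0} x^j = \tfrac{x^2}{2(1-x)} \leq x^2$, where the last step uses exactly the hypothesis $x \leq 1/2$. Substituting back gives $e^{-aw} - (1-a/n)^{nw} \leq (a^2/n)/(ae) = a/(ne)$, as desired.

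The only genuinely substantive point is this last estimate $\lambda - a \leq a^2/n$, and the observation that it is precisely where the hypothesis $a \leq n/2$ is consumed; the remaining ingredients ($1-t \leq e^{-t}$, $1-e^{-s}\leq s$, and the one-variable maximization of $w e^{-aw}$) are entirely routine. So I expect no real obstacle here, only the bookkeeping of keeping track of where each hypothesis enters.
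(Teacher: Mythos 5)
Your proof is correct and is essentially the paper's argument in logarithmic clothing: the key estimate $-\ln(1-x)\leq x+x^2$ for $x\leq 1/2$ is exactly the paper's inequality $1-x\geq e^{-x-x^2}$, and both arguments then combine $1-e^{-s}\leq s$ (equivalently $e^{-s}\geq 1-s$) with the bound $aw\,e^{-aw}\leq 1/e$, while the left inequality follows from $1-x\leq e^{-x}$ in both. No gap; the routes coincide up to presentation.
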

\begin{proof}
    Using the following inequalities: $1-x \geq e^{-x-x^2}$ for $x\leq \frac{1}{2}$ and $1-x\leq e^{-x}$ for $x\geq 0$, we obtain $e^{-aw}\left(1-\frac{a^2 w}{n}\right)\leq \left(1-\frac{a}{n}\right)^{nw}\leq e^{-aw} $.  The result follows by rearranging terms and using that $aw e^{-aw}\leq 1/e$. 
    
\end{proof}

  To apply Wormald's theorem \cite{Wormald1995DifferentialEF} in our model, three key hypotheses need to be met: the boundedness hypothesis, the Lipschitz hypothesis, and the trend hypothesis. These hypotheses will be established in the following lemmas for both $\greedy(G,t)$ and $Y_k(t)$. 

\begin{lemma}
\label{lemma:lipschitz_hypothesis}
$\forall k\geq 0$, let $-\epsilon<\tau< \frac{T}{n}+\epsilon$ and  $-\epsilon< z_k<1+\epsilon$  where $\epsilon>0$. The  functions $f_k(\tau)$ and $j_0(\tau)$ defined as follows,  
\begin{align*}
    f_{k}(\tau) &=\begin{cases}
-z_{0}(\tau)\beta + \frac{z_{1}(\tau)}{1-z_{0}(\tau)}(1-e^{-a+az_{0}(\tau)}) &~~\text{for}~~k=0\\
(z_{k-1}(\tau)-z_{k}(\tau))\beta +(z_{k+1}(\tau)-z_{k}(\tau))\frac{1-e^{-a+az_{0}(\tau)}}{1-z_{0}(\tau)} & ~~\text{for}~~k\geq0
\end{cases}\\
 j_0(\tau)&=1-e^{-a(1-z_0(\tau))}
\end{align*}

are Lipschitz with a constant $L= (\beta+a)(1+\epsilon)$ and $L'=ae^{a\epsilon}$ respectively.
\end{lemma}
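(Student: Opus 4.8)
## Proof Proposal for Lemma \ref{lemma:lipschitz_hypothesis}

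The plan is to verify the Lipschitz property directly by bounding the relevant partial derivatives (or, where a product/quotient appears, by a direct difference estimate), exploiting that the domain $(-\epsilon, \tfrac{T}{n}+\epsilon)\times(-\epsilon,1+\epsilon)$ keeps every factor bounded away from its singularities for $\epsilon$ small enough. The key observation is that on this domain $1 - z_0(\tau)$ stays bounded below by a positive constant (since $z_0 < 1+\epsilon$ would be problematic, but the intended regime has $z_0$ bounded away from $1$; more carefully, one restricts attention to the part of the extended domain where $1-z_0 \geq$ some fixed $c>0$, as is standard in applications of Wormald's method where the trajectory of interest stays in the interior). Assuming this, each building block — $z_0$, $z_1$, $z_{k\pm 1}$, $z_k$, the exponential $e^{-a(1-z_0)}$, and the reciprocal $\tfrac{1}{1-z_0}$ — is a smooth function with uniformly bounded derivative on the domain.

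First I would treat $j_0(\tau) = 1 - e^{-a(1-z_0(\tau))}$: write $|j_0(\tau) - j_0(\tau')| = |e^{-a(1-z_0(\tau'))} - e^{-a(1-z_0(\tau))}|$ and use that $x \mapsto e^{-a(1-x)}$ is $a e^{a\epsilon}$-Lipschitz on $(-\epsilon, 1+\epsilon)$ (its derivative is $a e^{-a(1-x)}$, maximized at $x = 1+\epsilon$, giving $a e^{a\epsilon}$), together with $z_0$ being $1$-Lipschitz in the relevant variable (or composing with the Lipschitz dependence coming from the ODE system itself). This yields the constant $L' = a e^{a\epsilon}$. For $f_k$ with $k\geq 1$, the expression is a sum of two terms: $(z_{k-1}-z_k)\beta$ is $2\beta$-Lipschitz (coordinatewise), bounded more crudely by $\beta(1+\epsilon)$-type terms after accounting for the magnitude of the $z$'s; and $(z_{k+1}-z_k)\tfrac{1-e^{-a(1-z_0)}}{1-z_0}$ is a product of a bounded Lipschitz factor $(z_{k+1}-z_k)$ (bounded by $2(1+\epsilon)$ in magnitude, Lipschitz constant $2$) with the factor $g(z_0) = \tfrac{1-e^{-a(1-z_0)}}{1-z_0}$, which is itself smooth and bounded by roughly $a$ on the domain (one checks $0 \le g \le a$ there and $g$ has bounded derivative). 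Applying the product rule for Lipschitz functions ($\mathrm{Lip}(fg) \le \|f\|_\infty \mathrm{Lip}(g) + \|g\|_\infty \mathrm{Lip}(f)$) and the sum rule, one collects the total constant and checks it is dominated by $L = (\beta + a)(1+\epsilon)$; the $k=0$ case $-z_0\beta + \tfrac{z_1}{1-z_0}(1-e^{-a(1-z_0)})$ is handled identically.

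The main obstacle — and the point that needs care rather than routine computation — is the factor $\tfrac{1}{1-z_0(\tau)}$ and the claim that it (and hence $g(z_0)$) has bounded derivative on the extended domain. If one literally allows $z_0$ up to $1+\epsilon$, the quotient blows up at $z_0 = 1$. The correct resolution, which I would make explicit, is that Wormald's theorem only requires the Lipschitz bound on a domain $D$ that the scaled process provably does not leave before the stopping time; since $C(t) = n - Y_0(t) \geq 1$ throughout (there is always at least one available node by the structure of the refill dynamics, or more precisely $z_0$ stays bounded away from $1$ along the solution), one takes $D$ to be $\{(\tau, z_0,\dots,z_K): -\epsilon<\tau<\tfrac T n + \epsilon,\ z_0 \le 1-\epsilon',\ |z_k|<1+\epsilon\}$ for suitable $\epsilon'>0$, on which $\tfrac{1}{1-z_0} \le \tfrac{1}{\epsilon'}$ and its derivative is at most $\tfrac{1}{\epsilon'^2}$. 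With that domain fixed, all the bounds above go through and combine to give the stated constants (up to absorbing $\epsilon'$-dependent factors into the choice of $\epsilon$), completing the verification of the Lipschitz hypothesis needed for the application of Wormald's theorem.
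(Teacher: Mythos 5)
Your treatment of $j_0$ matches the paper's: the map $x\mapsto e^{-a(1-x)}$ has derivative at most $ae^{a\epsilon}$ on $(-\epsilon,1+\epsilon)$, which gives $L'=ae^{a\epsilon}$. The gap is in how you handle the factor $\frac{1-e^{-a(1-z_0)}}{1-z_0}$ appearing in every $f_k$. You declare that ``the quotient blows up at $z_0=1$'' and resolve this by shrinking the Wormald domain to $\{z_0\le 1-\epsilon'\}$, justified by the claims that $C(t)=n-Y_0(t)\ge 1$ and that ``$z_0$ stays bounded away from $1$ along the solution.'' Neither claim does the job: $C(t)\ge 1$ only yields $Y_0(t)/n\le 1-1/n$, which is not a uniform separation from $1$, and the assertion that the trajectory avoids a neighbourhood of $z_0=1$ is precisely the containment statement that a restricted-domain application of Wormald's theorem would additionally require you to prove (otherwise the conclusion only holds up to the exit time from $D$). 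No such argument is given — and none is needed, because the singularity you are worried about is removable: $1/(1-z_0)$ never appears alone, it always multiplies $1-e^{-a(1-z_0)}$ (or $z_1\bigl(1-e^{-a(1-z_0)}\bigr)$), and $y\mapsto \frac{1-e^{-ay}}{y}$ extends to an entire function, bounded by $a$ for $y\ge 0$ and by $ae^{a\epsilon}$ for $y>-\epsilon$, with bounded derivative on the compact domain. This is exactly how the paper proceeds: it bounds $\bigl|\frac{z_1}{1-z_0}(1-e^{-a(1-z_0)})\bigr|\le a|z_1|$ via the elementary inequality $1-e^{-ax}\le ax$, so the full extended domain $-\epsilon<z_k<1+\epsilon$ is kept and no invariance argument is required. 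Your second paragraph actually contains the correct observation ($0\le g\le a$ with bounded derivative), but your third paragraph overrides it with the domain-restriction route, and that route, as written, rests on an unproven step.

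A secondary remark: the Lipschitz hypothesis in Wormald's theorem is a joint Lipschitz condition in $(\tau,z_0,\dots,z_K)$ on the domain, and your derivative-based formulation in those variables is the right one (the paper's own write-up phrases the bound as Lipschitz in $\tau$, which is looser). Once the removable-singularity point is used in place of the domain restriction, the constants you collect are of the same order as the stated $L=(\beta+a)(1+\epsilon)$ up to an absolute factor, which is all the application of Wormald's theorem requires.
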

\begin{proof}
The proof is done for $k=0$ and remains the same for $k\geq 1$. Let $-\epsilon< z_0< 1+\epsilon$, $-\epsilon< z_1< 1+\epsilon$, $-\epsilon<\tau< \frac{T}{n}+\epsilon$ and  $-\epsilon<\tau'< \frac{T}{n}+\epsilon$. 
\begin{align*}
|f_0(\tau)-f_0(\tau')|&= \left|-z_{0}(\tau)\beta + \frac{z_{1}(\tau)}{1-z_{0}(\tau)}(1-e^{-a+az_{0}(\tau)})+z_{0}(\tau')\beta - \frac{z_{1}(\tau')}{1-z_{0}(\tau')}(1-e^{-a+az_{0}(\tau')}) \right| \\
&\leq \beta  \left|z_{0}(\tau')-z_{0}(\tau)\right|+ \left |\frac{z_{1}(\tau)}{1-z_{0}(\tau)}(1-e^{-a+az_{0}(\tau)})+\frac{z_{1}(\tau')}{1-z_{0}(\tau')}(1-e^{-a+az_{0}(\tau')}) \right|\\
&\leq \beta  \left|z_{0}(\tau')-z_{0}(\tau)\right|+ a|z_{1}(\tau)+z_{1}(\tau')| ~~~~~\text{using}(~1-e^{-ax}\leq ax)
z\end{align*}
Thus we get, 
 $$|f_0(\tau)-f_0(\tau')|\leq (\beta+a+2)(1+\epsilon) |\tau-\tau'|$$
Therefore, we proved that $f_0$ is $L$-Lipschitz  with $L= (\beta+a+2)(1+\epsilon) $.  Now, let's proceed to prove that $j_0$ is Lipschitz, 
\begin{align*}
    |j_0(\tau)-j_0(\tau')|&= |e^{-a(1-z_0(\tau))}-e^{-a(1-z_0(\tau'))}|\\
&=e^{-a(1-z_0(\tau))} |1-e^{a(z_0(\tau')-z_0(\tau))}|\\
&\leq e^{-a(1-z_0(\tau))} a|z_0(\tau')-z_0(\tau)| ~~~~~~\text{using $1-e^{ax}\leq -ax$}\\
&\leq e^{a\epsilon} a|\tau-\tau'|
\end{align*}
Hence $j_0$ is $L'$-Lipschitz with $L'=e^{a\epsilon} a$.

\end{proof}

  The next lemma proves the trend hypothesis, 
\begin{lemma}
\label{lemma:trend_hypothesis}
    For $t\in \iset{T}$  the functions $f_{k}\left(\frac{t}{n}, \frac{Y_0(t)}{n},\hdots,\frac{Y_K(t)}{n} \right)$ and $j(\frac{t}{n},\frac{Y_0(t)}{n})$ are given by,
    \begin{align*}
f_{k}&=\begin{cases}
-\frac{Y_0(t)\,\beta}{n}(1-\frac{1}{n-Y_0(t)}(1-e^{-a(1-\frac{Y_{0}(t)}{n})}) + \frac{Y_{1}(t)(n-a)}{n}\frac{1}{n-Y_0(t)}(1-e^{-a(1-\frac{Y_{0}(t)}{n})}) &\text{for}~~k=0\\
\frac{-Y_{1}(t)}{n}\left[a(1-\frac{1}{n-Y_0(t)}(1-e^{-a(1-\frac{Y_{0}(t)}{n})})+ \frac{(n-a)}{n-Y_0(t)}(1-e^{-a(1-\frac{Y_{0}(t)}{n})})\right]\\
+\frac{Y_{2}(t)(n-a)}{n(n-Y_0(t))}(1-e^{-a(1-\frac{Y_{0}(t)}{n})})
 +\frac{Y_{0}(t)a}{n}&\text{for}~~k=1\\
 \frac{-Y_{k}(t)}{n}\left[a(1-\frac{1}{n-Y_0(t)}(1-e^{-a(1-\frac{Y_{0}(t)}{n})})+ \frac{(n-a)}{n-Y_0(t)}(1-e^{-a(1-\frac{Y_{0}(t)}{n})})\right]\\
+\frac{Y_{k+1}(t)(n-a)}{n(n-Y_0(t))}(1-e^{-a(1-\frac{Y_{0}(t)}{n})})
 +\frac{Y_{k-1}(t)a}{n}(1-\frac{1}{n-Y_0(t)}(1-e^{-a(1-\frac{Y_{0}(t)}{n})})&\text{for}~k\leq K-1\\
\frac{Y_{k-1}(t)\,\beta}{n}(1-\frac{1}{n-Y_0(t)}(1-e^{-a(1-\frac{Y_{0}(t)}{n})}) - \frac{Y_{k}(t)(n-a)}{n}\frac{1}{n-Y_0(t)}(1-e^{-a(1-\frac{Y_{0}(t)}{n})}) &\text{for}~~k=K
 \end{cases}\\
 j&=1-e^{-a(1-\frac{Y_0(t)}{n})}
    \end{align*}
and we have for all $k\geq 1$,
 \begin{align}
 \left|\mathbb{E}(\greedy(G,t+1)-\greedy(G,t)|\greedy(G,t))-j\left(\frac{t}{n}, \frac{Y_0(t)}{n}\right)\right|&\leq \frac{a}{en}\label{trend_G}\\
     \left|\mathbb{E}(Y_{k}(t+1)-Y_{k}(t)|\mathbf{Y}(t))-f_k\left(\frac{t}{n}, \frac{Y_0(t)}{n},\hdots,\frac{Y_k(t)}{n},\hdots \right)\right|&\leq \frac{a}{en}\label{trend_Y}
 \end{align}
    
\end{lemma}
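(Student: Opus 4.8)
The plan is to treat the two announced estimates \eqref{trend_G} and \eqref{trend_Y} in a unified way, reducing each to a single scalar discrepancy controlled by \cref{lemma:technical_lemma}. The exact conditional one-step expectations are already in closed form: \cref{lemma:Greedy_one_step_change} gives $\mathbb{E}[\greedy(G,t+1)-\greedy(G,t)\mid\greedy(G,t)]=1-(1-a/n)^{\,n-Y_0(t)}$, and \cref{lemma:Y_one_step_change} gives $\mathbb{E}[\Delta_k(t)\mid\mathbf Y(t)]$ as an affine expression in $p\,\Sigma(t)$, where $p=a/n$, $\delta=\beta/n$ and $p\,\Sigma(t)=\tfrac{1}{C(t)}\bigl(1-(1-a/n)^{C(t)}\bigr)$ with $C(t)=n-Y_0(t)$. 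The functions $j$ and $f_k$ in the statement are exactly what these formulas produce once the unique ``discrete'' factor $(1-a/n)^{C(t)}$ (which also hides inside the increment of $\greedy$) is replaced by the continuous proxy $e^{-a(1-Y_0(t)/n)}$. Hence, writing $\mathbb{E}[\Delta_k(t)\mid\mathbf Y(t)]=R_k(t)+c_k(t)\,(1-a/n)^{C(t)}$ with $R_k(t)$ the part free of that factor and $c_k(t)$ its coefficient, one has $f_k(\dots)=R_k(t)+c_k(t)\,e^{-a(1-Y_0(t)/n)}$, and similarly for $\greedy$; the trend error is therefore \emph{exactly} $|c_k(t)|$ times $\bigl|(1-a/n)^{C(t)}-e^{-a(1-Y_0(t)/n)}\bigr|$.

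First I would dispatch \eqref{trend_G}. By \cref{lemma:Greedy_one_step_change} its left-hand side equals $\bigl|e^{-a(1-Y_0(t)/n)}-(1-a/n)^{\,n-Y_0(t)}\bigr|$; setting $w=1-Y_0(t)/n\in[0,1]$ so that $n-Y_0(t)=nw$, and using $a\le n/2$ in the regime of interest, \cref{lemma:technical_lemma} yields $0\le e^{-aw}-(1-a/n)^{nw}\le \tfrac{a}{ne}$, which is precisely \eqref{trend_G}. The multiplying coefficient is $1$, which is what pins the constant to $a/(en)$.

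Next, for \eqref{trend_Y}, I would expand each of the three cases of \cref{lemma:Y_one_step_change} with $p=a/n$, $\delta=\beta/n$, substitute $p\,\Sigma(t)=\tfrac{1}{C(t)}-\tfrac{1}{C(t)}(1-a/n)^{C(t)}$, and collect $c_k(t)$. Then
\begin{align*}
\bigl|\mathbb{E}[\Delta_k(t)\mid\mathbf Y(t)]-f_k(\dots)\bigr|=|c_k(t)|\cdot\bigl|(1-a/n)^{C(t)}-e^{-a(1-Y_0(t)/n)}\bigr|\le |c_k(t)|\cdot\tfrac{a}{ne}
\end{align*}
by \cref{lemma:technical_lemma}, so it only remains to check $|c_k(t)|\le 1$. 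Reading off \cref{lemma:Y_one_step_change}, for $k>1$ the coefficient $c_k(t)$ is a $\{\delta,1-\delta\}$-weighted combination of terms $\pm\tfrac{Y_{k\pm 1}(t)-Y_k(t)}{C(t)}$ (the $1/C(t)$ coming from $p\,\Sigma(t)$); since the $Y_j(t)$ are nonnegative and $\sum_{j\ge 1}Y_j(t)=C(t)$, we have $|Y_{k+1}(t)-Y_k(t)|\le\max_j Y_j(t)\le C(t)$, so each ratio lies in $[-1,1]$ and a convex combination of them does too, giving $|c_k(t)|\le 1$ and hence \eqref{trend_Y}. The boundary cases ($k=0,1$, and $k=K$ because of the cap) use the corresponding special form of the dynamics but are handled identically, the same bound $|Y_j(t)|\le C(t)$ controlling the coefficient.

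The main obstacle is the bookkeeping in this last step: one must make sure that all the genuinely lower-order contributions — those created by using $p=a/n$ rather than the leading $a$, by $\delta=\beta/n$, and by the slightly different boundary shape of the $k\in\{0,1,K\}$ equations — are absorbed into the factor-free part $R_k(t)$ without inflating $|c_k(t)|$ beyond $1$, so that the final constant stays $a/(en)$ and not merely some $c(a,\beta)/n$. Once \eqref{trend_G}--\eqref{trend_Y} hold, they combine with the Lipschitz bounds already proved in \cref{lemma:lipschitz_hypothesis} and with the boundedness hypothesis — valid because at each step at most one $U$-node changes budget through a match and the number of refilled nodes is tightly concentrated around $\beta$ — to supply every hypothesis needed to invoke Wormald's theorem.
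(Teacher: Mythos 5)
Your proposal matches the paper's proof: it writes the exact one-step conditional expectations from \cref{lemma:Greedy_one_step_change} and \cref{lemma:Y_one_step_change}, observes that $j$ and $f_k$ differ from them only in replacing the discrete factor $(1-a/n)^{n-Y_0(t)}$ by $e^{-a(1-Y_0(t)/n)}$, and bounds that single discrepancy by $a/(en)$ via \cref{lemma:technical_lemma}, with the multiplying coefficients controlled as you describe. The coefficient bookkeeping you flag as the main obstacle is treated no more carefully in the paper (which simply factors out the coefficients and asserts the $a/(en)$ bound, case $k=0$ spelled out and the rest declared identical), so your argument follows the same route at essentially the same level of detail.
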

\begin{proof}
    Let's prove \cref{trend_Y} for $k=0$ ( the proof is the same for $k\geq 1$),
    \begin{align*}
    M_0&=\left|\mathbb{E}(Y_{0}(t+1)-Y_{0}(t)|\mathbf{Y}(t))-f_0\left(\frac{t}{n}, \frac{Y_0(t)}{n},\frac{Y_1(t)}{n} \right)\right|\\
        M_0&\leq\left|\frac{-Y_0(t)\beta}{n(n-Y_0(t))}(1-(1-\frac{a}{n})^{n-Y_{0}(t)}-1+e^{-a(1-\frac{Y_{0}(t)}{n})})\right|\\
        &+\left|\frac{Y_1(t)(n-\beta)}{n(n-Y_0(t))}(1-(1-\frac{a}{n})^{n-Y_{0}(t)}-1+e^{-a(1-\frac{Y_{0}(t)}{n})})\right|\\
        &\leq \frac{a}{ne}~~~~~~(\text{using \cref{lemma:technical_lemma}})
    \end{align*}
Let's now prove \cref{trend_G},
    \begin{align*}
    P&=\left|\mathbb{E}(\greedy(G,t+1)-\greedy(G,t)|\greedy(G,t))-j\left(\frac{t}{n}, \frac{Y_0(t)}{n}\right)\right|\\
        P&=\left|e^{-a(1-\frac{Y_{0}(t)}{n})}-((1-\frac{a}{n})^{n-Y_{0}(t)})\right|\\
        &\leq \frac{a}{ne}~~~~~~(\text{using \cref{lemma:technical_lemma}})
    \end{align*}
\end{proof}

  The following lemma shows the Boundness hypothesis, 
\begin{lemma}
\label{boundnesshypothesis}
For $t\in \iset{T}$, $k\geq 0$, 
\begin{align*}
|\greedy(G,t+1)-\greedy(G,t)|&\leq \beta'\\
    |Y_k(t+1)-Y_k(t)|&\leq \beta
\end{align*}
with $\beta, \beta'>0$. 
\end{lemma}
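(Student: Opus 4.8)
The first inequality is immediate, so I would dispatch it first: at step $t+1$ a single vertex of $V$ is revealed, and $\greedy$ matches it to at most one node of $U$, hence $\greedy(G,t+1)-\greedy(G,t)\in\{0,1\}$ and one may take $\beta'=1$.

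For the second inequality I would track how many coordinates of the budget vector $(b_{u,t})_{u\in U}$ move in one step, and by how much. There are two effects. The match at step $t+1$ touches at most one node $u^\star$ (constraint (2)), and $b_{u^\star,t+1}=\min(K,b_{u^\star,t}-1+\eta_{u^\star,t+1})$ differs from $b_{u^\star,t}$ by at most one; so the match moves at most one node across at most one class boundary, changing any fixed $Y_k$ by at most one. The refill moves each refilled node $u$ (those with $\eta_{u,t+1}=1$, via the dynamics \cref{budget_dynamic_sto}) up exactly one class, $b_{u,t}\mapsto\min(K,b_{u,t}+1)$; hence it changes $Y_k$ by at most the number of refilled nodes $R_{t+1}:=\sum_{u\in U}\eta_{u,t+1}$. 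Adding the two contributions, $|Y_k(t+1)-Y_k(t)|\le R_{t+1}+1$ for every $k$ and $t$.

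It then remains to bound $R_{t+1}$, and this is the one delicate point. Since $R_{t+1}\sim\mathrm{Binomial}(n,\beta/n)$ (or, under the shared-refill reading of \cref{budget_dynamic_sto}, $R_{t+1}\in\{0,n\}$ with $\Pr(R_{t+1}=n)=\beta/n$), it is not almost surely bounded by a constant, so strictly speaking the deterministic boundedness hypothesis must be established on a high-probability event. My plan is to fix a threshold $\beta$, work on $\mathcal E:=\bigcap_{1\le t\le T}\{R_t\le\beta\}$, and note that a Chernoff estimate together with a union bound over $t\le T$ makes $\Pr(\mathcal E^c)$ small enough to be absorbed into the failure probability of \cref{theo:lb_stochastic_general}; on $\mathcal E$ one has $|Y_k(t+1)-Y_k(t)|\le\beta+1$, which, combined with the Lipschitz property (\cref{lemma:lipschitz_hypothesis}) and the trend estimates (\cref{lemma:trend_hypothesis}) already proved, supplies all the hypotheses needed to invoke Wormald's differential equation method. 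If one insists on a genuinely constant $\beta$, the alternative I would pursue is to run the differential equation method only on the intervals between consecutive refill epochs (where every jump has size at most $2$) and to splice the pieces, treating the $O(\beta T/n)$ refill epochs by hand; I expect this bookkeeping — reconciling the rare but large refill moves with the $O(n^{3/4})$ error target — to be the main obstacle, the two boundedness estimates above being otherwise routine.
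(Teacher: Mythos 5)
Your first bound is exactly the paper's: at most one edge is added per arriving vertex, so $\beta'=1$. For the second bound you genuinely diverge from the paper, and in a direction worth spelling out. The paper's proof of \cref{boundnesshypothesis} consists of the single assertion that, ``by the nature of the matching process'', $|Y_k(t+1)-Y_k(t)|\le 1$, i.e.\ it takes $\beta=1$ and never confronts the possibility that several nodes receive a refill in the same time step. Your decomposition $|Y_k(t+1)-Y_k(t)|\le R_{t+1}+1$ with $R_{t+1}=\sum_{u\in U}\eta_{u,t+1}\sim\mathrm{Bin}(n,\beta/n)$ is the faithful reading of the dynamics \cref{budget_dynamic_sto} and of the drift computation in \cref{lemma:Y_one_step_change}, and your observation that a deterministic constant bound is then unavailable---so that the boundedness hypothesis must be enforced on a high-probability event, a relaxation that Wormald-type statements such as \citet{Warnke2019OnWD} do allow---is more careful than what the paper actually writes.

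The concrete flaw is your choice of threshold. $R_t$ is essentially Poisson of mean $\beta$, so $\Pr(R_t>\beta)$ is bounded away from $0$ at every step; hence $\Pr(\mathcal{E}^c)$ for $\mathcal{E}=\bigcap_{t\le T}\{R_t\le\beta\}$ tends to one as $T\to\infty$, and no Chernoff estimate rescues that particular event. You need a growing threshold, e.g.\ $C\log T$: then $\Pr(R_t>C\log T)$ is smaller than any polynomial in $T$, the union bound over $t\le T$ goes through, and running the differential equation method with boundedness constant $C\log T+1$ only degrades the error and the failure probability by polylogarithmic factors, so one still gets $\greedy(G,T)=nh(T/n)+\cO(n^{3/4}\log n)=nh(T/n)+o(n)$, which is all the downstream results use. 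Your fallback of splicing the method between refill epochs is only relevant under the shared-refill reading ($\eta_t$ common to all $u$), under which no constant bound and indeed no fluid limit of the stated form could be expected; under the per-node reading, which is the one consistent with the model statement and the drift computations, the truncation with a logarithmic threshold closes the argument, whereas the paper's own proof does not address the issue at all.
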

\begin{proof}
   For $t\in \iset{T}$ and $k\geq 0$, 
   $$|\greedy(G,t+1)-\greedy(G,t)|=\indicator{\{x_{u,t+1}=1, u\in U_k(t+1)\}}\leq 1$$
   Hence we have $\beta'=1$. 

As seen previously, $Y_k(t)$ is the number of nodes with budget equals to $k$ at time $t$. So, by the nature of the matching process, 
$$ |Y_{k}(t+1)-Y_{k}(t)|\leq 1$$
Hence $\beta=1$. 

\end{proof}

  In the following lemma we  approximate with high probability $Y_k(t),\forall k\geq 0, t\in\iset{T}$ by the solution of a system of differential equations, 
\begin{lemma}
\label{lemma:Y_approximation}
With probability $1-\cO\left(n^{1/4} \exp(-a^3 n^{1/4})\right)$, 
$$Y_k(T)=nz_k(T/n)+\cO(n^{3/4})~~~~\text{for}~k\geq 0$$ 
$\forall~\tau\in [\frac{1}{n},\frac{T}{n}]$, $(z_0,\hdots,z_K)$ is the solution of the following system, 
\begin{align}
    \begin{cases}
        \dot{z}_{0}(\tau)=-z_{0}(\tau)\beta + \frac{z_{1}(\tau)}{1-z_{0}(\tau)}(1-e^{-a+az_{0}(\tau)}) \hspace{1cm } & \text{for} ~k=0\\
        \dot{z}_{k}(\tau)= (z_{k-1}(\tau)-z_{k}(\tau))\beta +(z_{k+1}(\tau)-z_{k}(\tau))\frac{1-e^{-a+az_{0}(\tau)}}{1-z_{0}(\tau)} & \text{for} ~ 1\leq k \leq K-1\\
        \dot{z}_{k}(\tau)=\beta\,z_{k-1}(\tau)-z_{k}(\tau)\frac{1-e^{-a(1-z_0(\tau))}}{1-z_0(\tau)} &\text{for}~k= K\\
        \sum_{k=0}^K z_{k}(\tau)= 1
    \end{cases}
    \label{sysdiffequaZk}
\end{align}
\end{lemma}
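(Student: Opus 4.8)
\textbf{Proof plan for \cref{lemma:Y_approximation}.}
The plan is to invoke Wormald's differential equation method (in the formulation of \cite{Wormald1995DifferentialEF}) applied to the $(K+2)$-dimensional random process $\left(\greedy(G,t), Y_0(t), \dots, Y_K(t)\right)_{t\in\iset{T}}$. All three hypotheses required by the theorem have already been verified in the preceding lemmas: the boundedness hypothesis is \cref{boundnesshypothesis} (with constant $\beta=\beta'=1$), the trend hypothesis is \cref{lemma:trend_hypothesis} (each conditional one-step expectation matches the functions $f_k$ up to an error $a/(en)$, which is $o(1)$ and absorbed into Wormald's admissible slack), and the Lipschitz hypothesis is \cref{lemma:lipschitz_hypothesis} (the limiting vector field $(f_0,\dots,f_K)$ is Lipschitz on the relevant enlarged domain $-\epsilon<\tau<T/n+\epsilon$, $-\epsilon<z_k<1+\epsilon$). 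The domain $\mathcal{D}$ is chosen to be this open box, and one must check that the process stays inside it until the stopping time, which is immediate since $0\le Y_k(t)\le n$ and $0\le t\le T$ so the normalized quantities lie in $[0,1]$.

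First I would state precisely the version of Wormald's theorem being used, identifying the parameters: $\lambda$ (the one-step error tolerance) can be taken of order $n^{-1/4}$, $\beta$ (the bounded-difference constant) equals $1$, and the Lipschitz constant is the $L$ from \cref{lemma:lipschitz_hypothesis}. The theorem then yields that, with probability $1 - \mathcal{O}\!\left(n^{1/4}\exp(-n\lambda^3)\right) = 1 - \mathcal{O}\!\left(n^{1/4}\exp(-a^3 n^{1/4})\right)$, one has $Y_k(t) = n z_k(t/n) + \mathcal{O}(\lambda n) = n z_k(t/n) + \mathcal{O}(n^{3/4})$ uniformly for $t$ up to $T$, where $(z_0,\dots,z_K)$ solves the autonomous ODE system obtained by reading off the trend functions $f_k$ in the limit $n\to\infty$. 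Taking $t=T$ gives the claimed statement. I would also note that the constraint $\sum_{k=0}^K z_k(\tau) = 1$ is automatically preserved by the flow: summing the $f_k$'s telescopes to zero, so $\frac{d}{d\tau}\sum_k z_k = 0$, and the initial condition is $z_{b_0}(0)=1$, all others $0$, which sums to $1$.

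The one genuinely delicate point — and the step I expect to be the main obstacle — is the well-posedness of the limiting ODE system near the boundary $z_0 = 1$, where the factor $\frac{1-e^{-a(1-z_0)}}{1-z_0}$ appears. This quantity is in fact bounded (it extends continuously to the value $a$ at $z_0=1$ and is globally Lipschitz on $[0,1]$ by a Taylor argument, which is essentially what \cref{lemma:lipschitz_hypothesis} exploits), so the vector field is genuinely Lipschitz on a neighborhood of $[0,1]^{K+1}$ and Picard–Lindelöf gives a unique solution on the whole interval $[1/n, T/n]$; one must argue the solution does not escape the box $\mathcal{D}$, which follows because $z_0$ stays bounded away from $1$ for $a>0$ (the positive drift term $-z_0\beta$ plus the structure of the system keeps the trajectory in the interior — alternatively, invariance of the simplex $\{z_k\ge 0, \sum z_k = 1\}$ under the flow can be checked directly by noting $f_k\ge 0$ whenever $z_k=0$). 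Once this invariance is in hand, all of Wormald's hypotheses hold on $\mathcal{D}$ up to the natural stopping time $T$, and the conclusion follows. The remaining work is purely bookkeeping: matching the $\mathcal{O}$-terms and the probability bound to the stated form by the choice $\lambda = n^{-1/4}$.
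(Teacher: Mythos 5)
Your proposal is correct and follows essentially the same route as the paper: both verify Wormald's boundedness, trend, and Lipschitz hypotheses via the preceding lemmas on the domain $-\epsilon<\tau<T/n+\epsilon$, $-\epsilon<z_k<1+\epsilon$, choose $\lambda$ of order $n^{-1/4}$, and read off the limiting ODE system from the normalized one-step drifts, yielding the stated probability $1-\cO\bigl(n^{1/4}\exp(-a^3 n^{1/4})\bigr)$ and error $\cO(n^{3/4})$. Your additional remarks on simplex invariance and the regularity of $\frac{1-e^{-a(1-z_0)}}{1-z_0}$ near $z_0=1$ are sound refinements that the paper handles implicitly through its Lipschitz lemma.
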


\begin{proof}
For $\frac{1}{n}\leq\tau\leq \frac{T}{n}$, let's consider the  normalized random variable $Z_{k}(\tau)=\frac{Y_{k}(\tau\,n)}{n}$ and $\mathbf{Z}(\tau)=\left(Z_{k}(\tau)\right)_{k\geq0}$. The conditional expectation of the one-step change of $Z_k(\tau)$ for different values of $k$ is given by, 
\begin{itemize}
    \item For $k=0$,
    \begin{align*}
        \frac{\mathbb{E} \left[Z_{0}(\tau+\frac{1}{n})-Z_{0}(\tau)|\mathbf{Z}(\tau)\right]}{1/n}&=  \frac{\mathbb{E} \left[Y_{0}(\tau\,n+1)/n-Y_{0}(\tau\,n)/n|\mathbf{Y}(\tau\,n)/n  \right]}{1/n}\\
    &= \frac{\mathbb{E}\left[-Z_{0}(\tau)\left[\frac{\beta}{n} (1-\,\frac{1}{n-nZ_{0}(\tau)} (1-(1-p)^{n-nZ_{0}(\tau)}))\right]\right]}{1/n}\\
    & + \frac{\mathbb{E}\left[ Z_{1}(\tau)(1-\frac{\beta}{n})\,\frac{1}{n-nZ_{0}(\tau)} (1-(1-p)^{n-nZ_{0}(\tau)})\right]}{1/n}
    \end{align*}
    when $n\to +\infty$, we get, 
   $$
   \dot{z}_{0}(\tau)=-z_{0}(\tau)\beta + \frac{z_{1}(\tau)}{1-z_{0}(\tau)}(1-e^{-a+az_{0}(\tau)})$$
    \item For  $k=1$, 
    \begin{align*}
        \frac{\mathbb{E} \left[Z_{1}(\tau+\frac{1}{n})-Z_{1}(\tau)\middle|\mathbf{Z}(\tau)\right]}{1/n}&=  \frac{\mathbb{E} \left[Y_{1}(\tau\,n+1)/n-Y_{1}(\tau\,n)/n\middle|\mathbf{Y}(\tau\,n)/n  \hspace{0.1cm}\forall k\geq 1\right]}{1/n}\\
    &= \frac{\mathbb{E}\left[-Z_{1}(\tau)\left[\frac{\beta}{n} (1-\,\frac{1}{n-nZ_{0}(\tau)} (1-(1-p)^{n-nZ_{0}(\tau)}))\right]\right]}{1/n}\\
    & + \frac{\mathbb{E}\left[ -Z_{1}(\tau)(1-\frac{\beta}{n})\frac{1}{n-nZ_{0}(\tau)}(1-(1-p)^{n-nZ_{0}(\tau)}))\right]}{1/n}\\
    &+ \frac{\mathbb{E}\left[Z_{0}(\tau)\frac{\beta}{n} +Z_{2}(\tau)(1-\frac{\beta}{n})\frac{1}{n-nZ_{0}(\tau)} (1-(1-p)^{n-nZ_{0}(\tau)})\right]}{1/n}
    \end{align*}
  when $n\to +\infty$ we get, 
  $$
  \dot{z}_{1}(\tau)=\beta (z_0(\tau)-z_1(\tau))+(z_2(\tau)-z_1(\tau))\frac{1-e^{-a+az_{0}(\tau)}}{1-z_{0}(\tau)}$$
  \item $k\geq 2$, 
  \begin{align*}
        \frac{\mathbb{E} \left[Z_{k}(\tau+\frac{1}{n})-Z_{k}(\tau)|\mathbf{Z}(t)\right]}{1/n}&=  \frac{\mathbb{E} \left[Y_{k}(\tau\,n+1)/n-Y_{k}(\tau\,n)/n|\mathbf{Y}(\tau\,n)/n  \right]}{1/n}\\
    &= \frac{\mathbb{E}\left[-Z_{k}(\tau)\left[\frac{\beta}{n} (1-\,\frac{1}{n-nZ_{0}(\tau)} (1-(1-p)^{n-nZ_{0}(\tau)}))\right]\right]}{1/n}\\
    & + \frac{\mathbb{E}\left[ -Z_{k}(\tau)(1-\frac{\beta}{n})\frac{1}{n-nZ_{0}(\tau)}(1-(1-p)^{n-nZ_{0}(\tau)}))\right]}{1/n}\\
    &+ \frac{\mathbb{E}\left[Z_{k-1}(\tau)\frac{\beta}{n} (1-\,\frac{1}{n-nZ_{0}(\tau)} (1-(1-p)^{n-nZ_{0}(\tau)})) \right]}{1/n}\\
    &+\frac{\mathbb{E}\left[Z_{k+1}(\tau)(1-\frac{\beta}{n})\frac{1}{n-nZ_{0}(\tau)}(1-(1-p)^{n-nZ_{0}(\tau)}))\right]}{1/n}
    \end{align*}
     when $n\to +\infty$ we get, 
     $$\dot{z}_{k}(\tau)= (z_{k-1}(\tau)-z_{k}(\tau))\beta +(z_{k+1}(\tau)-z_{k}(\tau))\frac{1-e^{-a+az_{0}(\tau)}}{1-z_{0}(\tau)}$$
\end{itemize}
Applying the Wormald theorem \citep{Wormald95,Wormald1995DifferentialEF}, with the domain $D$ defined by $-\epsilon<\tau<\frac{T}{n}+\epsilon$, $-\epsilon<z_k< 1+\epsilon$, for $\epsilon>0$. And taking $\beta=1$ for the boundeness hypothesis (see  \cref{boundnesshypothesis}),  $\Lambda_1=a/(e\,n)$ for the trend hypothesis (see \cref{lemma:trend_hypothesis}).  The Lipschitz hypothesis is satisfied with Lipschitz constant $L=(\beta+a)(1+\epsilon)$  (see \cref{lemma:lipschitz_hypothesis}). Setting  $\lambda=a\,n^{-1/4}$, the Wormald theorem  gives with probability  $1-\cO\left(n^{1/4} \exp(-a^3 n^{1/4})\right)$, 
$$Y_k(T)=nz_k(T/n)+\cO(n^{3/4})~~~~\text{for}~k\geq 0$$
with $(z_0,\hdots,z_K)$ the solution of the following system, 
$$
\begin{cases}
        \dot{z}_{0}(\tau)=-z_{0}(\tau)\beta + \frac{z_{1}(\tau)}{1-z_{0}(\tau)}(1-e^{-a+az_{0}(\tau)}) \hspace{1cm } & \text{for} ~k=0\\
        \dot{z}_{k}(\tau)= (z_{k-1}(\tau)-z_{k}(\tau))\beta +(z_{k+1}(\tau)-z_{k}(\tau))\frac{1-e^{-a+az_{0}(\tau)}}{1-z_{0}(\tau)} & \text{for} ~ 1\leq k \leq K-1\\
        \dot{z}_{k}(\tau)=\beta\,z_{k-1}(\tau)-z_{k}(\tau)\frac{1-e^{-a(1-z_0(\tau))}}{1-z_0(\tau)} &\text{for}~k= K\\
        \sum_{k=0}^K z_{k}(\tau)= 1
    \end{cases}$$
\end{proof}

Now we have all the tools to prove \cref{theo:lb_stochastic_general}.
 
\begin{proof}
For $\frac{1}{n}\leq \tau\leq \frac{T}{n}$, let's consider the normalized random variable  $H(\tau)=\frac{\greedy(G,\tau\,n)}{n}$, the conditional expectation of the one-step change of $H(\tau)$ is given by, 
\begin{align*}
    \frac{\mathbb{E}\left[H\left(\tau+\frac{1}{n}\right)-H(\tau)\middle| H(\tau)\right]}{1/n}&= \frac{\mathbb{E}\left[\greedy(G,\tau\,n+1)/n-\greedy(G,\tau\,n)/n \middle| \greedy(G,\tau\,n)/n\right]}{1/n}\\
    &= 1-(1-\frac{a}{n})^{n-n\,Z_{0}(\tau)}
\end{align*}
when $n\to +\infty$ we get, 
$$\dot{h}(\tau)=1-e^{-a\left(1-\,z_{0}(\tau)\right)} $$

  Applying Wormald theorem \citep{Wormald95,Wormald1995DifferentialEF}, we choose the domain $D$ defined by $-\epsilon<\tau<\frac{T}{n}+\epsilon$, $-\epsilon<z_0< 1+\epsilon$ for $\epsilon>0$. We have $\beta'=1$ for the boundedness hypothesis (see \cref{boundnesshypothesis}), $\delta=a/(e\,n)$ for the trend hypothesis (\cref{lemma:trend_hypothesis}). The Lipschitz hypothesis is satisfied with Lipschitz constant $L'=ae^{a\,\epsilon}$. Setting  $\lambda=a\,n^{-1/4}$, the Wormald theorem gives with probability $1-\cO\left(n^{1/4} \exp(-a^3 n^{1/4})\right)$, 
$$\greedy(G,T)=nh(T/n)+\cO(n^{3/4})$$
 where $h(\tau)$ is solution of the following equation,  
$$\dot{h}(\tau)=1-e^{-a(1-z_{0}(\tau))},~~~~~ 1/n\leq \tau\leq T/n$$
and $z_0(\tau)$ as defined in the following system,
\begin{align*}
    \begin{cases}
        \dot{z}_{0}(\tau)=-z_{0}(\tau)\beta + \frac{z_{1}(\tau)}{1-z_{0}(\tau)}(1-e^{-a+az_{0}(\tau)}) \hspace{1cm } & \text{for} ~k=0\\
        \dot{z}_{k}(\tau)= (z_{k-1}(\tau)-z_{k}(\tau))\beta +(z_{k+1}(\tau)-z_{k}(\tau))\frac{1-e^{-a+az_{0}(\tau)}}{1-z_{0}(\tau)} & \text{for} ~ 1\leq k \leq K-1\\
        \dot{z}_{k}(\tau)=\beta\,z_{k-1}(\tau)-z_{k}(\tau)\frac{1-e^{-a(1-z_0(\tau))}}{1-z_0(\tau)} &\text{for}~k= K\\
        \sum_{k=0}^K z_{k}(\tau)= 1
    \end{cases}
\end{align*}

  Since $\greedy(G,T)$ is bounded and thus uniformly integrable, so convergence in probability implies convergence in mean:
$$\underset{n\to +\infty}{\lim}\frac{\lE[\greedy(G,T)]}{n}=h(T/n)$$
\end{proof}

The next theorem  applies an improved version of the Wormald theorem (see \cite{Warnke2019OnWD}, \cite{Enriquez2019DepthFE}) on $\greedy(G,T)$,

\begin{theorem}
\label{theorem:improved wormald}
With probability at least $1-2e^{-a^2 n^{\frac{3}{2}}/8T}$ we have, 
$$\underset{1\leq t\leq T}{\max}~|\greedy(G,t)-nh(t/n)|\leq 3e^{L'T/n}a n^{3/4}$$ 
with $L'=ae^{a\epsilon}$ and $\epsilon>0$, here  $h(\tau)$ is solution of the following equation,  
$$\dot{h}(\tau)=1-e^{-a(1-z_{0}(\tau))}~~~~~~~ 1/n\leq \tau\leq T/n$$
and $z_0(\tau)$ is defined by the following system, 
\begin{align}
    \begin{cases}
        \dot{z}_{0}(\tau)=-z_{0}(\tau)\beta + \frac{z_{1}(\tau)}{1-z_{0}(\tau)}(1-e^{-a+az_{0}(\tau)}) \hspace{1cm } & \text{for} ~k=0\\
        \dot{z}_{k}(\tau)= (z_{k-1}(\tau)-z_{k}(\tau))\beta +(z_{k+1}(\tau)-z_{k}(\tau))\frac{1-e^{-a+az_{0}(\tau)}}{1-z_{0}(\tau)} & \text{for} ~ 1\leq k \leq K-1\\
        \dot{z}_{k}(\tau)=\beta\,z_{k-1}(\tau)-z_{k}(\tau)\frac{1-e^{-a(1-z_0(\tau))}}{1-z_0(\tau)} &\text{for}~k= K\\
        \sum_{k=0}^K z_{k}(\tau)= 1
    \end{cases}
\end{align}
\end{theorem}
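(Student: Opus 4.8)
The plan is to invoke the quantitative (uniform-in-time) refinement of Wormald's differential equation method from \cite{Warnke2019OnWD,Enriquez2019DepthFE} applied to the scalar process $t\mapsto\greedy(G,t)$, recycling the three structural hypotheses already verified in the proof of \cref{theo:lb_stochastic_general}. First I would collect these facts: the one-step boundedness $|\greedy(G,t+1)-\greedy(G,t)|\leq 1$ from \cref{boundnesshypothesis} (so the increment bound is $\beta'=1$); the trend hypothesis with additive error $a/(en)$ from \cref{trend_G} of \cref{lemma:trend_hypothesis}, with trend function $j(\tau,z_0)=1-e^{-a(1-z_0(\tau))}$; and the Lipschitz bound with constant $L'=ae^{a\epsilon}$ for $j$ on the enlarged domain $D=\{-\epsilon<\tau<T/n+\epsilon,\ -\epsilon<z_0<1+\epsilon\}$ from \cref{lemma:lipschitz_hypothesis}. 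Since the drift of $\greedy$ is governed by $Y_0(t)/n$ and not by $\greedy$ itself, I would either work on the event $\mathcal{E}$ of \cref{lemma:Y_approximation} — on which $Y_0(t)=nz_0(t/n)+\cO(n^{3/4})$ uniformly in $t\leq T$, which is what licenses replacing the data-dependent drift $1-(1-a/n)^{n-Y_0(t)}$ by the deterministic $\dot h(t/n)$ up to an $\cO(n^{-1/4})$ per-step error — or, more cleanly, apply the \emph{vector-valued} quantitative Wormald theorem to the joint process $(\greedy,Y_0,\dots,Y_K)/n$ and read off the $\greedy$-coordinate.

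Second, I would track the fluctuation $D(t)=\greedy(G,t)-nh(t/n)$ through the standard decomposition $D(t)=\sum_{s<t}\big(\Delta\greedy(s)-\mathbb{E}[\Delta\greedy(s)\mid\mathcal{F}_s]\big)+\text{(drift remainder)}$. The drift remainder increments are $\cO(n^{-1/4})$ from the trend error plus an $(L'/n)\cdot\cO(n^{3/4})$ term coming, via the Lipschitz estimate for $j$, from the $Y_0$-fluctuation; a discrete Gronwall step then produces the exponential factor $e^{L'T/n}$ (accumulation of the per-step rate $L'/n$ over the $T$ steps) multiplying $\cO(a n^{3/4})$. The martingale part has increments bounded by a universal multiple of $\beta'=1$, so the Freedman/Azuma tail underlying the quantitative theorem, with deviation scale $\lambda=an^{-1/4}$ (the same choice as in \cref{theo:lb_stochastic_general}, giving $n\lambda=an^{3/4}$) and time span $T/n$, yields a failure probability $2\exp\!\big(-\lambda^2 n/(8\beta'^2 (T/n))\big)=2\exp(-a^2 n^{3/2}/(8T))$ — exactly the bound in the statement. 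Combining, on $\mathcal{E}$ (whose complement has strictly smaller probability and is absorbed), one gets $\max_{1\leq t\leq T}|\greedy(G,t)-nh(t/n)|\leq 3e^{L'T/n}a n^{3/4}$, the constant $3$ absorbing the numerical constants of the Gronwall bookkeeping.

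The main obstacle is the coupling of $h$ to $z_0$: the ODE $\dot h=1-e^{-a(1-z_0)}$ is driven by $z_0$, which is itself only an $\cO(n^{3/4})$-approximation of $Y_0/n$, so the delicate point is to confirm that this fluctuation enters the bound on $D(t)$ only through the tame $\cO(n^{-1/4})$ per-step contribution (via the Lipschitz constant of $x\mapsto 1-e^{-a(1-x)}$, which is at most $ae^{a\epsilon}$ on $D$) and never degrades the exponent $3/4$ — handling it by the joint-process version of the theorem sidesteps this entirely. A secondary technicality, inherited from \cref{theo:lb_stochastic_general}, is certifying that the trajectory stays inside the domain $D$ where the $1/(1-z_0)$ factors in the $z_k$-dynamics, and hence all the Lipschitz estimates, remain valid, i.e.\ that $z_0(\tau)$ stays bounded away from $1$ on $[1/n,T/n]$.
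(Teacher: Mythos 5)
Your proposal is correct and follows essentially the same route as the paper: the paper's proof also verifies the boundedness ($\beta'=1$), trend ($a/(en)$, via \cref{lemma:trend_hypothesis}) and Lipschitz ($L'=ae^{a\epsilon}$, via \cref{lemma:lipschitz_hypothesis}) hypotheses and then invokes the non-asymptotic Wormald theorem of \cite{Warnke2019OnWD} with $\lambda=an^{-1/4}$ on the normalized joint dynamics, yielding exactly the stated probability $1-2e^{-a^2n^{3/2}/8T}$ and deviation $3e^{L'T/n}an^{3/4}$. The only difference is that you unpack the martingale/Gr\"onwall mechanism and the $Y_0$-coupling explicitly, whereas the paper treats the quantitative theorem as a black box applied jointly with the $(Y_k)$ system.
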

\begin{proof}
  Using the normalized random variable  $H(\tau)=\frac{\greedy(G,\tau\,n)}{n}$ with $\frac{1}{n}\leq \tau\leq \frac{T}{n}$, let's compute the conditional expectation of the one-step change of $H(\tau)$, 
\begin{align*}
    \frac{\mathbb{E}\left[H\left(\tau+\frac{1}{n}\right)-H(\tau)\middle| H(\tau)\right]}{1/n}&= \frac{\mathbb{E}\left[\greedy(G,\tau\,n+1)/n-\greedy(G,\tau\,n)/n \middle| \greedy(G,\tau\,n)/n\right]}{1/n}\\
    &= 1-(1-\frac{a}{n})^{n-n\,Z_{0}(\tau)}
\end{align*}
when $n\to +\infty$ we get, 
$$\dot{h}(\tau)=1-e^{-a\left(1-\,z_{0}(\tau)\right)} $$
Applying the non-asymptotic version of the Wormald theorem \citep{Warnke2019OnWD}, we choose the domain $D$ defined by $-\epsilon<\tau<\frac{T}{n}+\epsilon$, $-\epsilon<z_0< 1+\epsilon$ for $\epsilon>0$. We have $\beta'=1$ (\cref{boundnesshypothesis}), $\delta=a/(e\,n)$ for the trend hypothesis(\cref{lemma:trend_hypothesis}). The Lipschitz hypothesis is satisfied with Lipschitz constant $L'=ae^{a\,\epsilon}$(\cref{lemma:lipschitz_hypothesis}). Setting  $\lambda=a\,n^{-1/4}$ we have with probability at least $1-2e^{-a^2 n^{\frac{3}{2}}/8T}$, 
$$\underset{1\leq t\leq T}{\max}~|\greedy(G,t)-nh(t/n)|\leq 3e^{L'T/n}a n^{3/4}$$
\end{proof}

\subsection{Proof of \cref{corr:lb_stochastic_z_star_allk}}\label{app:proof_llb_stochastic_z_star_allk}
\lbstochasticallk*

  The proof is organized as follows:
  \begin{enumerate}
      \item Finding the stationary solution of \cref{sysdiffequaZk}.
      \item Proving that the stationary solution is asymptotically stable.
      \item Proving that $\greedy(G,T)$ converges to a function depending on the stationary solution.
  \end{enumerate}

 The next result gives a general form for the stationary solution of \cref{sysdiffequaZk}, 
\begin{lemma}
\label{reccurence_on_z_k}
    For $\frac{1}{n}\leq \tau \leq \frac{T}{n}$, let $\bar{S}_{z_0^*}=(z_0^*,\hdots,z_k^*,\hdots,z_K^*)$ be the stationary solution of the system, 
    \begin{align}
    \begin{cases}
        \dot{z}_{0}(\tau)=-z_{0}(\tau)\beta + \frac{z_{1}(\tau)}{1-z_{0}(\tau)}(1-e^{-a+az_{0}(\tau)}) \hspace{1cm } & \text{for} ~k=0\\
        \dot{z}_{k}(\tau)= (z_{k-1}(\tau)-z_{k}(\tau))\beta +(z_{k+1}(\tau)-z_{k}(\tau))\frac{1-e^{-a+az_{0}(\tau)}}{1-z_{0}(\tau)} & \text{for} ~ 1\leq k \leq K-1\\
        \dot{z}_{k}(\tau)=\beta\,z_{k-1}(\tau)-z_{k}(\tau)\frac{1-e^{-a(1-z_0(\tau))}}{1-z_0(\tau)} &\text{for}~k= K\\
        \sum_{k=0}^K z_{k}(\tau)= 1
    \end{cases}
    \label{sysdiffequaZk_bis}
\end{align}
$\bar{S}_{z_0^*}$ is unique and satisfies, 
     \begin{align}
         \label{z_k_star}
         z_k^{*}=z_0^*\left(\frac{\beta}{g(z_0^*)}\right)^k~~~\text{for}~0\leq k\leq K
     \end{align}
    where $g(z_0^*)=\frac{1-e^{-a(1-z_0^*)}}{1-z_{0}^*}$. 
\end{lemma}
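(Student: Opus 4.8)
The plan is to find every equilibrium of \eqref{sysdiffequaZk_bis} algebraically, via a birth--death (detailed balance) reduction, and then single out the one compatible with the constraint $\sum_k z_k=1$. First I would set every left-hand side of \eqref{sysdiffequaZk_bis} to zero and recast the right-hand sides in flux form. The useful identity is $1-e^{-a+az_0}=1-e^{-a(1-z_0)}=(1-z_0)g(z_0)$; setting $J_k:=\beta z_{k-1}-g(z_0)z_k$ for $1\le k\le K$ and $J_0:=J_{K+1}:=0$, the three cases of \eqref{sysdiffequaZk_bis} are precisely $\dot z_0=-J_1$, $\dot z_k=J_k-J_{k+1}$ $(1\le k\le K-1)$, and $\dot z_K=J_K$. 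Stationarity therefore forces $J_1=0$, then $J_k=J_{k+1}$ down the chain, then $J_K=0$, hence $J_k\equiv0$. Since $g>0$ on $(0,1)$ this is the detailed-balance relation $z_k=(\beta/g(z_0^\ast))\,z_{k-1}$, and an immediate induction produces the geometric profile \eqref{z_k_star}. This step is routine.

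I would then impose $\sum_{k=0}^K z_k=1$ on that profile, which reduces everything to one scalar equation for the free parameter $z_0\in(0,1)$, namely $\Phi(z_0):=z_0\sum_{k=0}^K(\beta/g(z_0))^k=1$. Before analysing it, a short lemma: $g$ is strictly increasing on $(0,1)$ from $1-e^{-a}$ to $a$. Indeed, with $s=1-z_0$ and $g=(1-e^{-as})/s$, the $s$-derivative has the sign of $e^{-as}(1+as)-1$, which is negative because $e^{x}>1+x$ for $x\neq0$. Hence $\Phi$ is continuous on $(0,1)$ with $\Phi(0^+)=0<1$ and $\Phi(1^-)=\sum_{k=0}^K(\beta/a)^k>1$ (strictly, as $\beta>0$), so the intermediate value theorem already gives a root $z_0^\ast$.

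The main obstacle is uniqueness of $z_0^\ast$: $\Phi$ is a product of the increasing factor $z_0$ and the decreasing factor $\sum_k(\beta/g(z_0))^k$, so it is not obviously monotone. My plan here is to clear denominators cleanly (multiplying by $g(z_0)^K>0$ and using $(1-z_0)g(z_0)=1-e^{-a(1-z_0)}$, which introduces no spurious roots) to rewrite $\Phi(z_0)=1$ as $\beta z_0\sum_{k=0}^{K-1}(\beta/g(z_0))^k=1-e^{-a(1-z_0)}$; the right-hand side is strictly decreasing in $z_0$. For $K=1$ this is $\beta z_0=1-e^{-a(1-z_0)}$, an increasing function equal to a decreasing one, so the root is unique. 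For general $K$ I would instead show $\Phi'(z_0)>0$ at every root: writing $r=\beta/g(z_0)$ and $r'=-r\,g'/g$ one gets $\Phi'=\sum_k r^k-(z_0g'/g)\sum_k k\,r^k$, and at a root $\sum_k r^k=1/z_0$, so the claim reduces to bounding $z_0g'/g$ against $1/z_0$ and against the mean $\sum_k k\,r^k/\sum_k r^k\le K$ of the equilibrium profile --- a one-variable sign estimate using the explicit form of $g$. Establishing this bound is the only genuinely computational part; once it is in place, $\Phi$ crosses the level $1$ exactly once, so \eqref{sysdiffequaZk_bis} has a unique equilibrium, it is $\bar S_{z_0^\ast}=(z_0^\ast,\dots,z_K^\ast)$ with $z_k^\ast=z_0^\ast(\beta/g(z_0^\ast))^k$, and $z_0^\ast$ is the unique root of $\Phi=1$ in $(0,1)$.
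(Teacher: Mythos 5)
Your derivation of the geometric profile is correct and, through the flux variables $J_k=\beta z_{k-1}-g(z_0)z_k$ with $J_1=\dots=J_K=0$ at stationarity, is a clean way of making precise what the paper dispatches with ``proved by recurrence''; likewise your existence argument (continuity of $\Phi(z_0)=z_0\sum_{k=0}^K(\beta/g(z_0))^k$, $\Phi(0^+)=0<1<\sum_{k=0}^K(\beta/a)^k=\Phi(1^-)$, using that $g$ increases from $1-e^{-a}$ to $a$) is exactly the paper's argument with $P=\Phi-1$, $P(0)=-1$ and $\lim_{z_0\to 1}P>0$. Up to that point the proposal follows essentially the same route as the paper and is sound.

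The genuine gap is uniqueness for general $K$. You rightly note that $\Phi$ is the product of an increasing and a decreasing factor and hence not obviously monotone, and you propose to show $\Phi'>0$ at every root, but the decisive estimate --- bounding $z_0g'/g$ against $1/z_0$ and against the mean of the geometric weights --- is precisely the ``genuinely computational part'' you do not carry out, so the lemma's uniqueness claim is proved by your text only in the case $K=1$ (where your rewriting $\beta z_0=1-e^{-a(1-z_0)}$, an increasing function meeting a decreasing one, does close the argument, consistently with the paper's later Lambert-$W$ computation). The missing bound is not a formality: since $z_0g'/g\to a/2$ as $z_0\to 1$, the individual terms $z_0(\beta/g(z_0))^k$ are decreasing near $1$ as soon as $k>2/a$, so term-by-term monotonicity fails and any proof must genuinely use the root relation $\sum_{k}(\beta/g)^k=1/z_0$; moreover, when $\beta$ exceeds $g$ on part of $(0,1)$ the function $\Phi$ can be very large in the interior while tending to the finite limit $\sum_k(\beta/a)^k$ at $z_0=1$, so $\Phi$ need not be globally monotone at all. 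To be fair, this is exactly the point at which the paper's own proof is thin --- it simply asserts that $P$ is ``continuous and monotonic'' without justification --- so your diagnosis of where the difficulty lies is accurate; but identifying the needed sign estimate is not the same as establishing it, and as written the uniqueness part of the lemma remains unproven in your proposal.
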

\begin{proof}
    \cref{z_k_star} is proved by recurrence. 
    For the uniqueness, according to \cref{sysdiffequaZk_bis}, $\bar{S}_{z_0^*}$ satisfies $\sum_{k=1}^{K} z_k^{*}=1$, using \cref{z_k_star} we get that $P(z_0^*)=\sum_{k=1}^{K} z_0^*\left(\frac{\beta}{g(z_0^*)}\right)^k -1$. $P(0)=-1$ and $\underset{z_0\to 1}{\lim}~P(z_0)>0 $. Moreover $P$ is continuous and monotonic, this implies that $P(z_0)=0$ has a unique solution. Thus, $\bar{S}_{z_0^*}$ is unique.

\end{proof}
\begin{remark}
\label{convergence_zk_cond}
  Given that $z_k^{*}$ follows a geometric progression, for convergence, it's essential that $\left|\frac{\beta}{g(z_0^{*})}\right|\leq 1$. Therefore, we will proceed with the remaining proofs under this assumption.
\end{remark}

The following lemma shows that $\bar{S}_{z_0^*}$ is an asymptotically stable stationary solution of \cref{sysdiffequaZk_bis}. 

\begin{theorem}
\label{theo:stationary_solution_stable_all_k}
$\bar{S}_{z_0^*}$ is a an asymptotically stable stationary solution of \cref{sysdiffequaZk_bis}. 
\end{theorem}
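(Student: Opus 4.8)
**Proof proposal for Theorem (asymptotic stability of $\bar{S}_{z_0^*}$).**

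The plan is to linearize the ODE system \cref{sysdiffequaZk_bis} around the stationary point $\bar S_{z_0^*}$ and show that the Jacobian of the reduced $K$-dimensional system (after eliminating one coordinate via the constraint $\sum_{k=0}^K z_k = 1$) has spectrum strictly in the left half-plane. First I would use the constraint to express, say, $z_1 = 1 - \sum_{k\neq 1} z_k$, reducing to a genuine $K$-dimensional autonomous system $\dot{\mathbf{w}} = F(\mathbf{w})$ with $\mathbf{w} = (z_0, z_2, z_3, \dots, z_K)$, and compute $J = DF(\bar S_{z_0^*})$. The structure is almost that of a birth–death / tridiagonal generator: the map $z \mapsto g(z_0)$ only depends on $z_0$, so the only genuinely nonlinear dependence sits in the $z_0$-column; all other couplings $z_{k-1} \to z_k \to z_{k+1}$ are linear with the fixed rates $\beta$ (up) and $\bar g := g(z_0^*)$ (down). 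So $J$ is a tridiagonal matrix plus a rank-one perturbation coming from the $\partial g/\partial z_0$ terms and the substitution of $z_1$.

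The key steps, in order: (i) record that the unperturbed part of $J$ — i.e.\ the Jacobian one gets by \emph{freezing} $g(z_0)$ at $\bar g$ — is the transition-rate matrix of a birth–death chain on $\{0,\dots,K\}$ with birth rate $\beta$ and death rate $\bar g$; such a matrix is known to be similar (via the diagonal scaling $D = \mathrm{diag}((\beta/\bar g)^{k/2})$, which is exactly the geometric profile of $\bar S_{z_0^*}$) to a symmetric tridiagonal matrix, hence has real eigenvalues, and being a conservative generator with an absorbing-free irreducible structure its eigenvalues are $0$ and $K$ strictly negative values. (ii) Account for the constraint elimination: restricting to the invariant hyperplane $\sum_k z_k = 1$ removes exactly the eigenvalue $0$, leaving $K$ eigenvalues with negative real part. (iii) Handle the genuine nonlinearity: the extra rank-one term is proportional to $g'(z_0^*)$ times the vector $(\dots, z_{k+1}^* - z_k^*, \dots)$ weighted appropriately; here I would invoke matrix perturbation theory (as the sketch announces) — either show the perturbation is small relative to the spectral gap of the birth–death part under the standing assumption $\beta/\bar g \le 1$ (\cref{convergence_zk_cond}), or, cleaner, show directly that $\mathbf{1}^\top$ times the perturbed Jacobian still vanishes on the constraint surface and that a suitable quadratic Lyapunov function $V(\mathbf{w}) = \sum_k \pi_k^{-1}(z_k - z_k^*)^2$ (with $\pi_k \propto (\beta/\bar g)^k$ the stationary profile) has $\dot V < 0$ in a neighborhood. (iv) Conclude by the Hartman–Grobman / Lyapunov stability theorem that $\bar S_{z_0^*}$ is asymptotically stable.

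The main obstacle is step (iii): the nonlinear feedback through $g(z_0)$ could in principle destabilize the otherwise manifestly stable birth–death core, so I need a quantitative handle. The cleanest route I expect to work is the Lyapunov-function argument with the reversibility weights $\pi_k$, because the linear tridiagonal part is then $\le 0$ by detailed balance and one only has to bound the single scalar term $g'(z_0^*)\,\partial_{z_0}(\cdot)$; using $g'(z_0^*) = \frac{d}{dz_0}\frac{1-e^{-a(1-z_0)}}{1-z_0}$ — which one checks is positive and controlled — together with $\beta \le \bar g$ should give strict negativity of $\dot V$ near the fixed point. If the global Lyapunov bound is too delicate, the fallback is the purely spectral/perturbative version: bound $\|J - J_{\mathrm{freeze}}\|$ by the spectral gap of $J_{\mathrm{freeze}}$ restricted to the constraint hyperplane, which is the route the authors seem to indicate by "matrix perturbation theory". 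Either way, the algebra of computing $g'(z_0^*)$ and verifying the sign conditions is the part that needs care; the rest is structural.
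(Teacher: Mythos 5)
Your structural reading is the same as the paper's: at the fixed point the Jacobian splits as a tridiagonal birth--death matrix $M$ (rates $\beta$ up, $\bar g=g(z_0^*)$ down) plus a rank-one term carrying all the $g'(z_0^*)$-dependence, and the spectrum of $M$ is $\{0\}\cup\{-\beta-\bar g+2\sqrt{\beta\bar g}\cos(j\pi/(K+1))\}_{j=1}^K$, i.e.\ real and non-positive. Where your proposal has a genuine gap is exactly at your step (iii), which is the entire difficulty. Your first route (``perturbation small relative to the spectral gap'') fails in general: the gap of $M$ on the constraint hyperplane is of order $(\sqrt{\bar g}-\sqrt{\beta})^2$ up to the $\cos(\pi/(K+1))$ correction, and under the standing assumption of \cref{convergence_zk_cond} one may have $\beta$ arbitrarily close to $\bar g$ (and $K$ large), so the gap can be arbitrarily small while $g'(z_0^*)$ and the entries $z_1^*,\,z_{k+1}^*-z_k^*$ of the rank-one vector stay bounded away from zero; no norm comparison closes the argument. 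Your second route (weighted Lyapunov function $V=\sum_k \pi_k^{-1}(z_k-z_k^*)^2$) correctly symmetrizes the frozen part by detailed balance, but the rank-one term is an entire column with entries of \emph{mixed} sign ($z_1^*>0$ while $z_{k+1}^*-z_k^*\le 0$ for $k\ge 1$), so positivity of $g'(z_0^*)$ and $\beta\le\bar g$ alone do not give $\dot V<0$; you would still need a quantitative estimate that you do not supply, and it is not clear one exists in this simple form.

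The paper resolves this step not perturbatively but exactly: writing $DF(\bar S_{z_0^*})=M+uv^\top$ with $v=e_1$, it uses $\Pi_{M+uv^\top}(\lambda)=\Pi_M(\lambda)\bigl(1+v^\top(M-\lambda I)^{-1}u\bigr)$, expands $u$ in the explicit eigenbasis of $M$, and proves via Chebyshev-polynomial identities that all expansion coefficients are non-negative, namely
$\alpha_m=-2\gamma_m(-1)^m\frac{1-\theta^2}{1+2\cos(\frac{m\pi}{K+1})\theta+\theta^2}\ge 0$ with $\theta=\sqrt{\beta/\bar g}$. This makes the secular function $q(\lambda)=1+\sum_j\frac{\omega_{j,1}\alpha_j}{\mu_j-\lambda}$ increasing between its poles, so its roots interlace with the non-positive $\mu_j$ and the full Jacobian has $K$ eigenvalues with negative real part plus the single zero coming from mass conservation (the paper keeps the full $(K{+}1)$-dimensional system rather than eliminating $z_1$ as you do). That explicit sign computation is the heart of the proof and is absent from your proposal; without it, or some substitute of comparable strength, your argument does not go through.
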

\begin{proof}
     Let $Z= \begin{pmatrix}z_0(t)\\ \vdots\\ z_K(t)\end{pmatrix}$, \cref{sysdiffequaZk} can be seen as $\dot{Z}=F(Z)$, where,
\begin{align*}
    F(z_0(t),\hdots, z_K(t))= \left(-\beta\,z_0+ z_1g(z_0),\hdots, \beta\,z_{k-1}(t)-z_{k}(t)\frac{1-e^{-a(1-z_0(t))}}{1-z_0(t)}\right)
\end{align*}
The Jacobian of $F$ at $\Bar{S}_{z_0^*}$ is then given by, 
$$
DF(\Bar{S}_{z_0^*})= \begin{pmatrix}
    -\beta+z_1^*g'(z_0^*) & g(z_0^*) &0 & \ldots& \ldots & 0\\
    \beta+(z_2^*-z_1^*)g'(z_0^*) &-\beta-g(z_0^*)&g(z_0^*)&0&\ldots&0\\
    (z_3^*-z_2^*)g'(z_0^*) &\beta&-\beta-g(z_0^*)&g(z_0^*)&0..&0\\
    \vdots &0&\beta&-\beta-g(z_0^*)&g(z_0^*)&..0
    \\
    \vdots &\vdots&\vdots&\vdots&\vdots&\vdots
    \\
    -z^*_Kg'(z_0^*) &\ldots&\ldots&0&\beta&-g(z_0^*)
\end{pmatrix}
$$
Since proving that $\Bar{S}_{z_0^*}$ is asymptotically stable is equivalent to proving that
the eigenvalues of $DF(\Bar{S}_{z_0^*})$ are non-positives are non-positives \citep{EDO_course_X}. We achieve this using the perturbation method. To do so, we shall write, 
$$DF(\Bar{S}_{z_0^*})= M +uv^\top$$
where $v^\top=(1,0,\ldots,0)$ and,
\begin{align*}
    u^\top&=g'(z_0^*)(z_1^*,z_2^*-z_1^*,\ldots,-z^*_K)\\
&=z_1^*g'(z_0^*)\left(1,\frac{\beta}{g(z_0^*)}-1,(\frac{\beta}{g(z_0^*)}-1)\frac{\beta}{g(z_0^*)},(\frac{\beta}{g(z_0^*)}-1)\Big(\frac{\beta}{g(z_0^*)}\Big)^2,\ldots\right)
\end{align*}
and $M$ is the matrix with $g(z_0^*)$ above the diagonal, $\beta$ below it and its diagonal is $(-\beta,-\beta-g(z_0^*),\ldots,-\beta-g(z_0^*),-g(z_0^*))$.

$$
M= \begin{pmatrix}
    -\beta & g(z_0^*) &0 & \ldots& \ldots & 0\\
    \beta&-\beta-g(z_0^*)&g(z_0^*)&0&\ldots&0\\
   0 &\beta&-\beta-g(z_0^*)&g(z_0^*)&0..&0\\
    \vdots &0&\beta&-\beta-g(z_0^*)&g(z_0^*)&..0
    \\
    \vdots &\vdots&\vdots&\vdots&\vdots&\vdots
    \\
    0 &\ldots&\ldots&0&\beta&-g(z_0^*)
\end{pmatrix}
$$
 Let us denote by $\Pi_M(\lambda)$ the characteristic polynomial of $M$, as a function of $\lambda$, so that
$$
\Pi_{M+uv^\top}(\lambda)=\Pi_M(\lambda)\left(1+v^\top(M-\lambda I)^{-1}u\right)
$$
which implies that eigenvalues of $M+uv^\top$ are either eigenvalues of $M$ or solutions of $$1+v^\top(M-\lambda I)^{-1}u=0.$$
Since 0 is an eigenvalue of $M+uv^\top$, we aim at proving that $1+v^\top(M-\lambda I)^{-1}u=0$ has $K$  non-positives solutions.

 We now claim that the eigenvalues of $M$ are $\mu_j:=-\beta-g(z_0^*)+2\sqrt{\beta g(z_0^*)}\cos(\frac{j\pi}{k+1})$ for $j \in [k]$ and $0$. This is a consequence of standard computations along with Theorem 2.2 of \cite{eig_top}. We also denote by $\omega_j$ the eigenvectors of $M$ associated to $\mu_j$ (and $\omega_0$ to 0) and by $P$ the matrix whose columns are $\omega_0,\omega_1,...$. As a consequence,
\begin{align*}
    q(\lambda)=1+v^\top(M-\lambda I)^{-1}u&= 1+v^\top P\text{diag}\Big(\frac{1}{\mu_j-\lambda}\Big)P^{-1}u\\
    &=1 + \Big(\frac{\omega_{0,1}}{-\lambda},\ldots,\frac{\omega_{j,1}}{\mu_j-\lambda}\Big)P^{-1}u
    \end{align*}
Since we can take any eigenvectors in $P$, we can assume that $\omega_{j,1}\geq 0$ hence it remains to prove that $P^{-1}u$ is a vector with non-negative coordinates. Notice that this vector is the vector of $u$ in the basis formed by the eigenvectors of $M$. 

 The computations of $\omega_j$ are quite standard, and they yield, denoting $\theta=\sqrt{\frac{\beta}{g(z_0^*)}}$, for $1\leq j\leq K$
\begin{align*}
 \omega_j=&\Big(\theta\sin(\frac{j\pi}{K+1}),\hdots, (-\theta)^{t+1}\sin(\frac{(t+1)j\pi}{K+1})+(-\theta)^{t}\sin(\frac{tj\pi}{K+1}),\hdots, (-\theta)^{K+1}\sin(\frac{(K+1)j\pi}{K+1})\\
&+(-\theta)^{K}\sin(\frac{tK\pi}{K+1})\Big)
\end{align*}
As a consequence, $u$ and all the eigenvectors $\omega_j$ are orthogonal to the vectors of ones, which indicates that $u=\sum_j \alpha_j \omega_j$ for some scalar $\alpha_j$. The objective is to prove that they are all positive

 The exact forms of $u$ and $\omega_j$ give, after a few lines of algebra, 
$$\sum_{j=1}^K \alpha_j\sin(m\frac{j\pi}{K+1})=(-\theta)^m, \qquad \forall m \in [K].$$
This system can be rewritten using the Chebyshev polynomials of second kind (denoted by $U_n$) as, 
$$\sum_{j=1}^K \alpha_jU_{m-1}(\cos(\frac{j\pi}{K+1}))\sin(\frac{j\pi}{K+1})=(-\theta)^m, \qquad \forall m \in [K].$$
Hence in a more compact matrix way
it can be seen as,
$$
W\begin{pmatrix}
\alpha_1\sin(\frac{\pi}{K+1}) \\
\vdots\\
\alpha_K\sin(\frac{K\pi}{K+1})
\end{pmatrix} = 
\vec{-\theta}
,
$$
where $\vec{-\theta}=\Big((-\theta )^j\big)_{j 
\in [K]}$ and $W$ is the matrix whose $j$-th column is $$(U_0(\cos(\frac{j\pi}{K+1}),\ldots,U_{K-1}(\frac{j\pi}{K+1}))^\top$$

  We introduce now the following polynomial,  
$$
P_m(X)=\gamma_m\frac{U_k(X)}{X-\cos(\frac{m\pi}{K+1})}=\sum_{j=1}^K\beta_{j,m} U_{j-1}(X)$$
where $$\gamma_m=\frac{1}{\Pi_{j
\neq m} (\cos(\frac{m\pi}{K+1})-\cos(\frac{j\pi}{K+1}))2^{K}}=\frac{1}{2^K}\frac{1}{\Pi_{j\neq m}-2\sin(\frac{m+j}{2}\frac{\pi}{K+1})\sin(\frac{m-j}{2}\frac{\pi}{K+1})}$$
so that the sign of $\gamma_m$ is $(-1)^{m-1}$, $P_m(\cos(\frac{j\pi}{K+1}))=0$ for all $j\neq m$, and $P_m(\cos(\frac{m\pi}{K+1})=1$. We get that  
$$
\alpha_m = \frac{1}{\sin(\frac{m\pi}{K+1})}\sum_{j=1}^K\beta_{j,m}(-\theta)^j.$$
Using the fact that, by definition of $P_m$,
$$U_K(X)=\sum_{j=1}^K\frac{1}{\gamma_m}
\beta_{j,m}U_{j=1}(X)(X-\cos(\frac{m\pi}{K+1}))
$$
and the property of Chebyshev polynomial,
$$
U_k(X)=2XU_{k-2}(X)-U_{k-3},
$$
we can identify the coefficients $
\beta_j$ that satisfy a linear recurrence of order 2 and are defined by
$$\beta_{j}=2\gamma_m\frac{\sin(\frac{(K-j+1)m\pi}{K+1})}{\sin(\frac{m\pi}{K+1})}.$$
It remains to compute $
\alpha_m=\sum \beta_j(-
\theta)^j$, and standard computations yield that 
$$
\alpha_m=-2\gamma_m(-1)^m\frac{(1-\theta^{2})}{1+2\cos(\frac{m\pi}{K+1})\theta+\theta^2}\geq 0$$


Thus, we have proved that $P^{-1}u$ is a vector with non-negative coordinates. Consequently,  $q(\lambda)$ is an increasing function of $\lambda$. As a result,  $M+uv^\top$ has $K$ eigenvalues of negative real part and one eigenvalue equals to zero. From this, we can conclude that $\Bar{S}_{z_0^*}$ is an asymptotically stationary solution of \cref{sysdiffequaZk_bis}. 

\end{proof}

Given the previous results, we can prove \cref{corr:lb_stochastic_z_star_allk}, 

\begin{proof}
Let $h^*(T/n)=\int_{1/n}^{T/n} (1-e^{-a(1-z_0^*)})\mathrm{d}\tau=\frac{(T-1)(1-e^{-a(1-z_0^*)})}{n}$, we have, 
\begin{align*}
   |\greedy(G,T)-nh^*(T/n)|&= |\greedy(G,T)-nh(T/n)+nh(T/n)-nh^*(T/n)|\\
    &\leq \underbrace{\underset{1\leq  t\leq T}{\max}~(|\greedy(G,t)-nh(t/n)|}_{\leq 3e^{L'T/n}a n^{3/4} ~\text{by \cref{theorem:improved wormald}}}+|nh(T/n)-nh^*(T/n)|)
\end{align*}

  Let's focus on $D=|nh(T/n)-nh^*(T/n)|$, for $1\leq T'<T$ and $\delta >0$, 
\begin{align*}
    D&= |nh(T/n)-nh(T'/n+\delta)+nh(T'/n+\delta)-nh^*(T/n)|\\
    &=|nh(T/n)-nh(T'/n+\delta)-nh^*(T/n)+ nh^*(T'/n+\delta)+nh(T'/n+\delta)-nh^*(T'/n+\delta)|\\
    &= \left|n\int_{T'/n+\delta}^{T/n}(e^{-a(1-z_0^{*})}-e^{-a(1-z_0(t))})\mathrm{d}t+ n\int_{1/n}^{T'/n+\delta}(e^{-a(1-z_0^{*})}-e^{-a(1-z_0(t))})\mathrm{d}t\right|\\
    &= \left|ne^{-a(1-z_0^{*})}\int_{T'/n+\delta}^{T/n}(1-e^{a(z_{0}(t)-z_0^{*})})\mathrm{d}t+ ne^{-a(1-z_0^{*})}\int_{1/n}^{T'/n+\delta}(1-e^{a(z_{0}(t)-z_0^{*}})\mathrm{d}t\right|\\
    &\leq ne^{-a(1-z_0^{*})}\left|\int_{T'/n+\delta}^{T/n}(-a(z_{0}(t)-z_0^{*}))\mathrm{d}t+ \int_{1/n}^{T'/n+\delta}(-a(z_{0}(t)-z_0^{*}))\mathrm{d}t\right| ~~~\text{using}~(1-e^{ax}\leq -ax)\\
    &\leq ne^{-a(1-z_0^{*})}a\int_{T'/n+\delta}^{T/n}|z_{0}(t)-z_0^{*}|\mathrm{d}t+ ne^{-a(1-z_0^{*})}a\int_{1/n}^{T'/n+\delta}|z_{0}(t)-z_0^{*}|\mathrm{d}t\\
    &\leq n e^{-a(1-z_0^{*})}a\int_{T'/n+\delta}^{T/n}\epsilon\mathrm{d}t+ ne^{-a(1-z_0^{*})}a\int_{1/n}^{T'/n+\delta}|z_{0}(t)-z_0^{*}|\mathrm{d}t ~~~~~\text{using}~~(\cref{theo:stationary_solution_stable_all_k})\\
    &\leq ne^{-a(1-z_0^{*})}a\epsilon \left(\frac{T-T'}{n}-\delta\right)+ne^{-a(1-z_0^{*})}a\int_{1/n}^{T'/n+\delta}1\mathrm{d}t~~~~~~~\text{using~$0\leq z_0\leq 1$~and~$0\leq z_0^{*}\leq 1$}\\
    &\leq ne^{-a(1-z_0^{*})}a\epsilon \left(\frac{T}{n}\right)+ ne^{-a(1-z_0^{*})}a\left(\frac{T'-1}{n}+\delta\right)(1-\epsilon)\\
    &\leq ne^{-a(1-z_0^{*})}a\epsilon \left(\frac{T}{n}\right)+ ne^{-a(1-z_0^{*})}a\underbrace{\left(\frac{T'-1}{n}+\delta\right)}_{\leq \frac{T}{n^2}}\\
&\leq 2e^{-a(1-z_0^{*})}a \left(\frac{T}{n}\right)~~~~~~\text{choosing}~(\epsilon=\frac{1}{n})
\end{align*}
Thus, 
$$|\greedy(G,T)-nh^*(T/n)|
\leq3e^{L'T/n}a n^{3/4} +2e^{-a(1-z_0^{*})}a \frac{T}{n}$$
  with $L'=ae^{a\gamma}$ where $\gamma>0$. 
  Taking $n=cT$ with $c<1$, we can  see that $|\greedy(G,T)-nh^{*}(T/n)|\leq o(T)$,

\end{proof}
\subsection{Proof of \Cref{corr:lbstochastic_k=1}}\label{app:proof_llb_stochastic_k=1}
\lbstochastickone*
  The proof is organized as follows: \begin{enumerate}
      \item  Finding the stationary solution of \cref{sysdiffequaZk} for $K=1$.
    \item Proving that the stationary solution is exponentially stable.
    \item Applying an improved version of the Wormald theorem on   $\greedy(G,T)$.
    \item Proving that $\greedy(G,T)$ converges to a function depending on the stationary solution.
  \end{enumerate}

Intuitively $K=1$ means that the maximum budget reached by each node in $U$ is equal to 1. From a technical aspect, supposing $K=1$ reduces 
\cref{sysdiffequaZk_bis} to a system of 2 equations as follows, for $t\in [\frac{1}{n},\frac{T}{n}]$
\begin{align}
\label{reduced_system_k=1}
    \begin{cases}
        \dot{z}_{0}(t)&=-\beta\,z_{0}(t)+\frac{z_{1}(t)}{1-z_{0}(t)}(1-e^{-a(1-z_{0}(t))})\\
    \dot{z}_{1}(t)&=\beta z_{0}(t)-\frac{z_{1}(t)}{1-z_{0}(t)} (1-e^{-a(1-z_{0}(t))})\\
    z_{0}(t)+z_{1}(t)&=1
    \end{cases}
\end{align}
By simplifying \cref{reduced_system_k=1}, we reduce the system to the following equation, 
\begin{equation}
\label{equ:z_0K=1}
    \dot{z}_{0}(t)=-\beta\,z_{0}(t)+1-e^{-a(1-z_{0}(t))}
\end{equation}
The following lemma computes the unique stationary solution of \cref{equ:z_0K=1},
\begin{lemma}
\label{lemma:stationary_sol_k=1}
    The stationary solution of \cref{equ:z_0K=1} is unique and is given for $\beta, a>0$  by, 
    $$z^{*}_0=\frac{1}{\beta}-\frac{1}{a}W\left(\frac{a}{\beta}e^{-a\left(1-\frac{1}{\beta}\right)}\right)$$ 
    where $W$ is the Lambert function. 
\end{lemma}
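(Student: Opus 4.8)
The plan is to find the stationary solution by setting $\dot z_0(t)=0$ in \eqref{equ:z_0K=1}, which gives the transcendental equation $\beta z_0 = 1 - e^{-a(1-z_0)}$, and then to solve this explicitly by massaging it into the canonical form $x e^x = y$ so that the Lambert $W$ function can be applied. First I would rewrite the equilibrium condition as $1 - \beta z_0 = e^{-a(1-z_0)}$. Introducing the change of variable $w = a\left(\frac{1}{\beta} - z_0\right)$ (equivalently $z_0 = \frac{1}{\beta} - \frac{w}{a}$) turns the left-hand side $1-\beta z_0$ into $\frac{\beta}{a}w$ and the exponent $-a(1-z_0)$ into $-a\left(1-\frac{1}{\beta}\right) - w$. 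Thus the equation becomes $\frac{\beta}{a} w = e^{-a(1-1/\beta)} e^{-w}$, i.e. $w e^{w} = \frac{a}{\beta} e^{-a(1-1/\beta)}$. By definition of the Lambert function this yields $w = W\!\left(\frac{a}{\beta} e^{-a(1-1/\beta)}\right)$, and substituting back $z_0^* = \frac{1}{\beta} - \frac{1}{a} W\!\left(\frac{a}{\beta} e^{-a(1-1/\beta)}\right)$, which is the claimed formula.

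For uniqueness, I would argue that the function $\varphi(z_0) = -\beta z_0 + 1 - e^{-a(1-z_0)}$ (the right-hand side of \eqref{equ:z_0K=1}, viewed as a function of $z_0$ on the relevant interval, say $[0,1]$) is strictly monotone: $\varphi'(z_0) = -\beta - a e^{-a(1-z_0)} < 0$ for all $z_0$, since $\beta, a > 0$. A strictly decreasing continuous function has at most one zero, and since $\varphi(0) = 1 - e^{-a} > 0$ while $\varphi(1) = 1 - \beta < 0$ when $\beta > 1$ (and more generally $\varphi$ is negative for $z_0$ large enough), the intermediate value theorem guarantees exactly one root in the admissible range. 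One should also note that $\frac{a}{\beta} e^{-a(1-1/\beta)} \geq 0$ lies in the domain of the principal branch $W_0$ of the Lambert function (which is $[-1/e, \infty)$), so the expression is well defined; since the argument is nonnegative, there is no branch ambiguity.

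The main obstacle here is essentially bookkeeping rather than conceptual difficulty: one has to be careful that the change of variables is invertible and that the resulting value of $w$ corresponds to the correct (principal) branch of $W$, and that the recovered $z_0^*$ indeed lies in $[0,1]$ so that it is a legitimate stationary solution of the original system \eqref{reduced_system_k=1} (recall $z_1^* = 1 - z_0^*$ must also be a valid proportion). Checking $z_0^* \in [0,1]$ amounts to verifying $0 \le \frac{1}{a}W\!\left(\frac{a}{\beta} e^{-a(1-1/\beta)}\right) \le \frac{1}{\beta}$, which follows from $W(\cdot) \ge 0$ on nonnegative arguments for the lower bound and from monotonicity of $W$ together with the inequality $W(x) \le x$ for the upper bound; these are routine but worth stating. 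Once these checks are in place, the lemma follows.
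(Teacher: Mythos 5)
Your proposal is correct and follows essentially the same route as the paper: set the right-hand side of \cref{equ:z_0K=1} to zero, rewrite the equilibrium condition via $w=a\left(\frac{1}{\beta}-z_0\right)$ into the form $we^{w}=\frac{a}{\beta}e^{-a(1-1/\beta)}$ so the Lambert function applies, and get uniqueness from strict monotonicity of $z_0\mapsto-\beta z_0+1-e^{-a(1-z_0)}$ plus a sign change on $[0,1]$. The only slip is the endpoint value: $\varphi(1)=-\beta$ (not $1-\beta$), so it is negative for every $\beta>0$ and no case distinction on $\beta>1$ is needed — exactly the paper's evaluation $G(1)=-\beta<0$; your extra checks on the branch of $W$ and on $z_0^*\in[0,1]$ go beyond what the paper records but are harmless.
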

\begin{proof}
    Let's define $G(z_0)= -\beta\,z_{0}+1-e^{-a(1-z_{0})}$, the stationary solution of \cref{equ:z_0K=1} is the solution of $G(z_0^*)=0$ (the homogeneous equation) with $a>0$ and $\beta>0$, 
\begin{align*}
       G(z_0^*)=0\iff  1-e^{-a(1-z_{0}^*)}=\beta\,z_{0}^* &\iff e^{a\left(\frac{1}{\beta}-z_{0}^*\right)} a\left(\frac{1}{\beta}-z_{0}^*\right)=\frac{a}{\beta}e^{-a\left(1-\frac{1}{\beta}\right)}\\
&\iff a\left(\frac{1}{\beta}-z_{0}^*\right)=W\left(\frac{a}{\beta}e^{-a\left(1-\frac{1}{\beta}\right)}\right) 
\end{align*}
Where $W$ is the Lambert function. So, by rearranging the terms in the last equation, the solution of $G(z_0^*)=0$ is given by, 
    $$z^{*}_0=\frac{1}{\beta}-\frac{1}{a}W\left(\frac{a}{\beta}e^{-a\left(1-\frac{1}{\beta}\right)}\right)$$ 
Let's prove the uniqueness of the stationary solution, $G(0)=1-e^{-a}> 0$ and $G(1)=-\beta< 0$ and we have $\forall~0\leq z_0\leq 1, ~~\frac{\mathrm{d}G(z_0)}{\mathrm{d}z_0}=-(\beta+ae^{-a}e^{-a(1-z_0)})<0$. Thus $G(z_0)=0$ has a unique solution. 

\end{proof}

  The following theorem proves that $z_0^*$ is exponentially stable, meaning that $\forall  t\in [\frac{1}{n},\frac{T}{n}]$, $z_0(t)$ converges to $z_0^*$ with an exponential rate. 
\begin{theorem}
\label{expo_stable_k=1}
  For any $f_0\geq 0$ and $t\in [\frac{1}{n},\frac{T}{n}]$, consider the ordinary differential equation (ODE), $$\begin{cases} \dot{z}_0(t)&=\-\beta z_{0}(t)+1-e^{-a(1-z_{0}(t))}\\
    z_0(1/n)&=f_0\end{cases}$$ 
    
    Thus, it implies that $z_0(t)$ converges to $z^{*}_0$ exponentially.
\end{theorem}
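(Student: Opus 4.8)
The statement is that the autonomous scalar ODE $\dot z_0 = G(z_0)$ with $G(z_0) = -\beta z_0 + 1 - e^{-a(1-z_0)}$ has its solution starting from any $f_0 \ge 0$ converge to $z_0^*$ at an exponential rate. The natural approach is a standard Lyapunov / linearization argument for a one-dimensional system, exploiting the monotonicity of $G$ that was already established in the proof of Lemma~\ref{lemma:stationary_sol_k=1}. First I would recall from that proof that $G$ is strictly decreasing on the relevant interval, since $G'(z_0) = -(\beta + a e^{-a(1-z_0)}) < 0$, and that $z_0^*$ is its unique zero. I would also want to fix the invariant region: check that $[0,1]$ (or a slightly enlarged interval covering $f_0$, since $f_0\ge 0$ is allowed to exceed $1$ a priori, though in our model $f_0 = b_0\ge 1$ forces $z_0(1/n)\in[0,1]$) is forward-invariant — $G(0) = 1 - e^{-a} > 0$ pushes the solution up off $0$, and $G(1) = -\beta < 0$ pushes it down off $1$ — so the solution stays bounded and is globally defined on $[1/n, T/n]$.

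Next I would set $e(t) = z_0(t) - z_0^*$ and write $\dot e = G(z_0^* + e) - G(z_0^*) = G(z_0^*+e)$. By the mean value theorem, $G(z_0^*+e) = G'(\xi_t)\, e$ for some $\xi_t$ between $z_0^*$ and $z_0(t)$, and since $G' \le -\beta$ everywhere (as $a e^{-a(1-z_0)} > 0$), we get $\tfrac{d}{dt}\big(e(t)^2\big) = 2 e(t)\dot e(t) = 2 G'(\xi_t) e(t)^2 \le -2\beta\, e(t)^2$. Grönwall's inequality then yields $|z_0(t) - z_0^*| \le |f_0 - z_0^*|\, e^{-\beta(t - 1/n)}$, which is the claimed exponential convergence with an explicit rate $\beta$. (If one prefers a cleaner constant from the linearization at $z_0^*$ itself, the rate is $|G'(z_0^*)| = \beta + a e^{-a(1-z_0^*)}$, obtained by the same computation localized near $z_0^*$; either bound suffices.)

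There is no serious obstacle here — the only things to be careful about are (i) justifying that the solution exists on the whole time interval and stays in a region where the MVE bound $G' \le -\beta$ applies, which is handled by the forward-invariance of $[0,1]$ noted above, and (ii) matching the hypothesis "$f_0 \ge 0$" with the claim, i.e.\ making sure the argument does not secretly need $f_0 \le 1$; if $f_0 > 1$ one simply notes $G(z_0) < 0$ for $z_0 > z_0^*$ so the solution decreases monotonically back into $[0,1]$ in finite time and the same estimate applies thereafter (or one just observes $G' \le -\beta$ holds for all $z_0 \in \mathbb{R}$, so the Grönwall bound is in fact valid without any invariance discussion). I would write it with the global $G' \le -\beta$ bound to keep the proof to a few lines, then remark that in the stochastic model $f_0 = z_0(1/n) \in [0,1]$ anyway.
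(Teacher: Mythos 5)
Your proof is correct and takes essentially the same route as the paper: both study the perturbation $\epsilon(t)=z_0(t)-z_0^*$ and integrate a linear differential inequality, the paper via the elementary bound $1-e^{a\epsilon}\le -a\epsilon$ (yielding the rate $\beta+ae^{-a(1-z_0^*)}$, rewritten as $\beta\bigl(1+W(e^{-a(1-1/\beta)})\bigr)$), you via the mean value theorem with the global bound $G'\le-\beta$ and Gr\"onwall (yielding the slightly weaker but sufficient rate $\beta$, while handling the sign of the perturbation more cleanly and dispensing with any invariance discussion). The only caveat is that the explicit rate $P=\beta\bigl(1+W(e^{-a(1-1/\beta)})\bigr)$ is reused in the proof of \cref{corr:lb_stochastic_k=1}, so adopting the rate $\beta$ would change the constants there, though the statement itself is unaffected.
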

\begin{proof}
    The idea here is to prove that for any perturbation that we add to $z^{*}_0$, this perturbation tends to $0$ when $t$ tends to $+\infty$ exponentially. 

      Let's consider $\epsilon(t): \mathbb{R}\to \mathbb{R}$, a pertubation of the stationary solution $z_0^*$,
    \begin{align*}
        \dot{\epsilon}(t)&= -\beta\,z^{*}_0-\beta\,\epsilon(t)+ 1-e^{-a(1-z^{*}_{0}(t)-\epsilon(t))}\\
        \dot{\epsilon}(t)&=-1+e^{-a(1-z^{*}_{0})}-\beta\,\epsilon(t)+ 1-e^{-a(1-z^{*}_{0}-\epsilon(t))}\\
        \dot{\epsilon}(t)&=e^{-a(1-z^{*}_{0})}\left(1-e^{a\epsilon(t)}\right)-\beta\,\epsilon(t)\\
        \dot{\epsilon}(t)&\leq -a\epsilon(t)\,e^{-a(1-z^{*}_{0})} -\beta\,\epsilon(t) ~~~~~~~~~~~ \text{using that}~~(1-e^{a\epsilon(t)}\leq -a\epsilon(t))\\
        \dot{\epsilon}(t)&\leq\epsilon(t)\underbrace{\left(-a\,e^{-a(1-z^{*}_{0})}-\beta\,\right)}_{\leq 0}
    \end{align*}
    Integrating the last equation , we get, 
    \begin{align}
\ln\left(|\epsilon(t)|\right)- \ln(|\epsilon(0)|)&\leq \left(-a\,e^{-a(1-z^{*}_{0})}-\beta\,\right)\,t\\
        |\epsilon(t)|&\leq |\epsilon(0)|\,\exp\left(-t \left(a\,e^{-a(1-z^{*}_{0})}+\beta\,\right)\right)\\
      |\epsilon(t)|&\leq|f_0-z_0^*|\exp\left(-t\left(ae^{-a\left(1-\frac{1}{\beta}+\frac{1}{a}W\left(\frac{a}{\beta}e^{-a(1-\frac{1}{\beta})}\right)\right)}+\beta\right)\right)\label{eq:152}\\
        |\epsilon(t)|&\leq |f_0-z_0^*|\exp\left(-t \beta\left(1+W(e^{-a(1-\frac{1}{\beta})}\right)\right)\label{eq:153}
    \end{align}

Moving from \cref{eq:152} to \cref{eq:153} is done using $\exp(-W(x))=W(x)/x)$.

Thus $\lim\limits_{t \rightarrow +\infty} \epsilon(t)=0$ exponentially with the following rate $\omega = \beta\left(1+W(e^{-a(1-\frac{1}{\beta})})\right)$. 

\end{proof}
\begin{lemma}
\label{lemma: exponential_stability_full_system_k=1}
 $S_{z_0^*}^1=(z_0^*,z_1^*)=(z_0^*,z_0^*\frac{\beta}{g(z_0^*)})$ is an exponentially stable stationary solution of \cref{reduced_system_k=1}. 
\end{lemma}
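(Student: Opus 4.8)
The plan is to exploit that the linear constraint $z_0(t)+z_1(t)=1$ is invariant under the flow of \cref{reduced_system_k=1}: summing the first two equations gives $\dot z_0(t)+\dot z_1(t)=0$, so a trajectory starting on the line $\{z_0+z_1=1\}$ stays there for all time. Hence the planar system is in fact carried by a one-dimensional dynamics, and exponential stability of $S_{z_0^*}^1$ in the plane will follow once it is established for the scalar reduction, since the second coordinate is an affine (indeed isometric up to sign) image of the first.

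First I would substitute $z_1(t)=1-z_0(t)$ into the first line of \cref{reduced_system_k=1}; the prefactor $\frac{z_1(t)}{1-z_0(t)}$ collapses to $1$ and the equation becomes exactly \cref{equ:z_0K=1}, i.e. $\dot z_0(t)=-\beta z_0(t)+1-e^{-a(1-z_0(t))}$. Then \cref{lemma:stationary_sol_k=1} supplies the unique fixed point $z_0^*$ of this scalar ODE, and \cref{expo_stable_k=1} gives exponential convergence of any solution $z_0(t)$ to $z_0^*$ at the explicit rate $\omega=\beta\bigl(1+W(e^{-a(1-1/\beta)})\bigr)$. I would then immediately conclude that $|z_1(t)-(1-z_0^*)|=|z_0(t)-z_0^*|$ decays like $e^{-\omega t}$ as well, so $(z_0^*,1-z_0^*)$ is an exponentially stable stationary point of the full two-equation system.

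The last step is a short algebraic check that $(z_0^*,1-z_0^*)$ coincides with the claimed $(z_0^*,z_0^*\beta/g(z_0^*))$. Using the fixed-point relation $\beta z_0^*=1-e^{-a(1-z_0^*)}$ and the definition $g(z_0^*)=\frac{1-e^{-a(1-z_0^*)}}{1-z_0^*}$, one gets $z_0^*\frac{\beta}{g(z_0^*)}=z_0^*\beta\cdot\frac{1-z_0^*}{1-e^{-a(1-z_0^*)}}=1-z_0^*=z_1^*$, which closes the identification.

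I do not expect a genuine obstacle here; the only point worth spelling out is that planar exponential stability does follow from the scalar statement, which is clear because the whole dynamics lives on the invariant line and $z_0\mapsto(z_0,1-z_0)$ preserves distances up to the constant factor $\sqrt 2$. An equivalent route, if one prefers to stay in the plane, is to linearize the vector field of \cref{reduced_system_k=1} at $S_{z_0^*}^1$ along the invariant direction $(1,-1)$ and observe that the single nontrivial eigenvalue is $-\bigl(\beta+ae^{-a(1-z_0^*)}\bigr)<0$, which is precisely $-\omega$ from \cref{expo_stable_k=1}.
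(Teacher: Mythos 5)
Your proposal is correct and follows essentially the same route as the paper: use the constraint $z_0+z_1=1$ to collapse \cref{reduced_system_k=1} to the scalar ODE \cref{equ:z_0K=1}, then invoke \cref{lemma:stationary_sol_k=1} and \cref{expo_stable_k=1} to transfer exponential stability back to the planar system. Your explicit check that $z_0^*\beta/g(z_0^*)=1-z_0^*$ and your remark on the invariance of the line $\{z_0+z_1=1\}$ merely spell out steps the paper leaves implicit.
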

\begin{proof}
According to \cref{lemma:stationary_sol_k=1}, $z_0^{*}$ is a stationary solution of \cref{equ:z_0K=1}, this implies that $S_{z_0^*}^1=(z_0^*,z_1^*)=(z_0^*,z_0^*\frac{\beta}{g(z_0^*)})$ is a stationary solution of \cref{reduced_system_k=1}. As previously demonstrated, $\forall t\in [\frac{1}{n},\frac{T}{n}]$, $z_0(t)$ converges to $z_0^*$ exponentially. This implies that \cref{equ:z_0K=1} possesses an exponentially stable stationary solution. Given that \cref{equ:z_0K=1} is a reduced version of \cref{reduced_system_k=1}, we can conclude that $S_{z_0^*}^1$ is an exponentially stable stationary solution for \cref{reduced_system_k=1}.
\end{proof}

With all the essential elements assembled, we are now ready to establish the proof for \cref{corr:lbstochastic_k=1}
\begin{proof}
Let $h^*(T/n)=\int_{1/n}^{T/n} (1-e^{-a(1-z_0^*)})\mathrm{d}\tau=\frac{(T-1)(1-e^{-a(1-z_0^*)})}{n}$, we have,
\begin{align*}
    |\greedy(G,T)-nh^*(T/n)|&=|\greedy(G,T)-nh(T/n)+nh(T/n)-nh^*(T/n)|\\
    &\leq \underbrace{\underset{1\leq t\leq T}{\max}~(|\greedy(G,t)-nh(t/n)|)}_{\leq 3e^{L'T/n}a n^{3/4} ~\text{by \cref{theorem:improved wormald}}}+|nh(T/n)-nh^*(T/n)|
\end{align*}
Let's focus on $A=|nh(T/n)-nh^*(T/n)|$,
\begin{align}
    A&= \left|n\int_{1/n}^{T/n} (1-e^{-a(1-z_0(\tau))}) \mathrm{d}\tau- n\int_{1/n}^{T/n} (1-e^{-a(1-z_0^*)}) \mathrm{d}\tau\right|\\
&= \left|n\int_{1/n}^{T/n} (e^{-a(1-z_0^*)}-e^{-a(1-z_0(\tau))}) \mathrm{d}\tau\right|\\
&= \left|ne^{-a(1-z_0^*)}\int_{1/n}^{T/n} (1-e^{a(z_0(\tau)-z_0^*)}) \mathrm{d}\tau\right| \\
&\leq \left|ne^{-a(1-z_0^*)}\int_{1/n}^{T/n} -a(z_0(\tau)-z_0^*) \mathrm{d}\tau\right| ~~~~~~~(\text{using $1-e^{ax}\leq -ax$})\\
&\leq n a e^{-a(1-z_0^{*})}\int_{1/n}^{T/n} |z_{0}(\tau)-z_0^{*}|\mathrm{d}\tau\label{eq:158}\\
&\leq n a e^{-a(1-z_0^{*})} |f_0-z_0^*|\int_{1/n}^{T/n} \exp\left(-\tau \beta\left(1+W(e^{-a(1-\frac{1}{\beta})}\right)\right)\mathrm{d}\tau \label{eq:159}\\
&\leq \frac{n a e^{-a(1-z_0^{*})} |f_0-z_0^*|}{\beta \left(1+W(e^{-a(1-\frac{1}{\beta})})\right)}\left(e^{-\frac{\beta}{n} \left(1+W(e^{-a(1-\frac{1}{\beta})})\right)}-e^{-\frac{T}{n}\beta \left(1+W(e^{-a(1-\frac{1}{\beta})})\right)}\right)
\end{align}
Moving from \cref{eq:158} to \cref{eq:159} is done using \cref{expo_stable_k=1}. Thus we have, 
\begin{align*}
    |\greedy(G,T)-nh^*(T/n)|\leq 3e^{L'T/n}a n^{3/4} +  \frac{n a e^{-a(1-z_0^{*})} |f_0-z_0^*|}{P}\left(e^{-\frac{1}{n}P}-e^{-\frac{T}{n}P}\right)
\end{align*}

  with $L'=ae^{a\epsilon}$ and $P=\beta \left(1+W(e^{-a(1-\frac{1}{\beta})})\right)$, $\epsilon>0$. 

  Now let's focus on $ A= 3e^{L'T/n}a n^{3/4} +  \frac{n a e^{-a(1-z_0^{*})} |f_0-z_0^*|}{P}\left(e^{-\frac{1}{n}P}-e^{-\frac{T}{n}P}\right) $ and considering that $n=\frac{T}{\alpha \log(T)}$ with $\alpha>0$, we get, 
\begin{align*}
  A&\leq  3e^{L'T/n}a n^{3/4} +  \frac{n a e^{-a(1-z_0^{*})} |f_0-z_0^*|}{P}\left(1-e^{-\frac{T}{n}P}\right) \\
 &= \frac{ae^{-a(1-z_{0}^{*})}|f_0-z_0^*|}{P} (\frac{T}{\alpha\log(T)}-\frac{T^{1-\alpha P}}{\alpha\log(T)})\\
   &+3 a \frac{T^{\frac{3}{4}+\alpha L'}}{(\alpha\log(T))^{\frac{3}{4}}}\\
   &\leq \frac{ae^{-a(1-z_{0}^{*})}|f_0-z_0^*|}{P} \frac{T}{\alpha\log(T)}+3 a \frac{T^{\frac{3}{4}+\alpha L'}}{(\alpha\log(T))^{\frac{3}{4}}}
\end{align*}
Taking $\alpha=\frac{1}{4(P+L')}$, with $z_0^*=\frac{1}{\beta}-\frac{1}{a}W\left(\frac{a}{\beta}e^{-a(1-\frac{1}{\beta})}\right)$ and using the fact that $e^{-W(x)}=\frac{W(x)}{x}$,  we get, 
\begin{align*}
     A&\leq \frac{\beta W\left(e^{-a(1-\frac{1}{\beta})}\right)}{\alpha P}|f_0-z_0^{*}|\frac{T}{\log(T)} +3a(4(L'+P))^{\frac{3}{4}}\frac{T^{\frac{3L'+4P}{4(P+L')}}}{(\log(T))^{\frac{3}{4}}}\\
    &= c_1 \frac{T}{\log(T)}+c_2\frac{T^{\omega'}}{(\log(T))^{\frac{3}{4}}}\\
    &\leq c \frac{T}{(\log(T))^{3/4}}
\end{align*}

where  $c_1 = \frac{4\beta(P+L') W\left(e^{-a(1-\frac{1}{\beta})}\right)}{P}$,  $c_2 =3a(4(L'+P))^{\frac{3}{4}}$, $c=c_1+c_2$,$\omega=\frac{3P+4L'}{4(L'+P)}$ and $\omega'=\frac{3L'+4P}{4(P+L')}$. 

\end{proof}
\subsection{Proof of \cref{prop: lower_bound_CR_general_case}}\label{app:proof_lower_bound_general_case}
\lowerboundCRgeneralcase*
\begin{proof}
According to \cref{budget_dynamic_sto}, we have that for all $u\in U$,
$$b_{u,t}=\min(K,b_{u,t-1}-x_{u,t}+ \eta_t) ~~~~~\text{with}~b_{u,0}=b_0\geq 1$$ 

Which gives, 
\begin{align*}
    \greedy(G,T)=\sum_{u\in U}\sum_{t=1}^{T}x_{u,t}=nb_0+\underbrace{\sum_{u\in U}\sum_{t=1}^T\lE[\eta_t]}_{A_1}-\underbrace{\sum_{u\in U}b_{u,T}}_{A_2}-\underbrace{\sum_{u\in U}\sum_{t=1}^T\lE[\indicator{b_{u,t}=K}\indicator{\eta_t=1}]}_{A_3}
\end{align*}
According to \cref{lemma:Y_approximation}, we have w.h.p $\forall k\geq 0, t\in [T], Y_k(t)=nz_k(t/n)+\cO(n^{3/4})$, let's then compute $A_1$, $A_2$ and $A_3$, 
\begin{align*}
    A_1&=\sum_{u\in U}\sum_{t=1}^T\lE[\eta_t]=\beta\,T\\
    A_2&=\sum_{u\in U}b_{u,T}= n \sum_{k=1}^K kz_k(T/n) + \cO\left(\frac{K(K+1)}{2}n^{3/4}\right)\\
    A_3&= \beta \sum_{t=1}^T z_K(t/n) + \cO(\beta Tn^{-1/4})
\end{align*}
Using the following upper bound on $\opt(G,T)\leq nb_0+\beta T$ and $n=\cO(T)$,  we get that, 
\begin{align}
 \CR^{\rm sto}(\greedy,\cD)\geq \frac{nb_0+\beta T-n\sum_{k=1}^K kz_k(T/n)-\beta\sum_{t=1}^T z_K(t/n)}{nb_0+\beta T}+ \cO(T^{-1/4})
\end{align}
According \cref{theo:stationary_solution_stable_all_k}, $\forall \tau\in [\frac{1}{n}, \frac{T}{n}],(z_0(\tau),\hdots,z_{K}(\tau))$ converges to $\bar{S}_{z_0^*}$ asymptotically, this implies that, 
\begin{align}
\label{lower_bound_CR_stochastique}
    \CR^{\rm sto}(\greedy,\cD)\geq \frac{nb_0 + \beta T - \beta T z_0^*\left(\frac{\beta}{g(z_0^*)}\right)^K- nz_0^*\sum_{k=1}^K k\left(\frac{\beta}{g(z_0^*)}\right)^k}{nb_0 + \beta T} + \cO(T^{-1/4})
\end{align}
From $\sum_{k=0}^{K} z_0^{*} \left(\frac{\beta}{g(z_0^{*})}\right)^{k} =1$, we have that $\left(\frac{\beta}{g(z_0^{*})}\right)^{K}=\frac{g(z_0^{*})}{\beta}-\frac{1}{z_0^{*}}\left(\frac{g(z_0^{*})}{\beta}-1\right)$, this gives, 
\begin{align*}
 \CR^{\rm sto}(\greedy,\cD)&\geq \frac{nb_0 + \beta T - \beta T z_0^*\left(\frac{\beta}{g(z_0^*)}\right)^K- nz_0^*\sum_{k=1}^K k\left(\frac{\beta}{g(z_0^*)}\right)^k}{nb_0 + \beta T} + \cO(T^{-1/4}) \\
 &\geq  \frac{nb_0 + \beta T - \beta T z_0^*\left(\frac{g(z_0^{*})}{\beta}-\frac{1}{z_0^{*}}\left(\frac{g(z_0^{*})}{\beta}-1\right)\right)- nz_0^*\sum_{k=1}^K k\left(\frac{\beta}{g(z_0^*)}\right)^k}{nb_0 + \beta T} \\
 &+ \cO(T^{-1/4})\\
 &\geq \frac{nb_0  +g(z_0^{*})T(1-z_0^{*}) - nz_0^*\sum_{k=1}^K k\left(\frac{\beta}{g(z_0^*)}\right)^k}{nb_0 + \beta T}+ \cO(T^{-1/4}) \\
 &\geq \frac{nb_0  +g(z_0^{*})T(1-z_0^{*}) - nz_0^*\sum_{k=1}^K k\left(\frac{\beta}{g(z_0^*)}\right)^k}{nb_0 + \beta T}+ \cO(T^{-1/4})\\
\end{align*}
Using $1-\left(\frac{\beta}{g(z_0^{*})}\right)^{K+1}=\frac{1}{z_0^{*}}\left(1-\frac{\beta}{g(z_0^{*})}\right)$ and $\sum_{k=1}^{K}k x^{k}= x \frac{\mathrm{d}}{\mathrm{d} x}\left(\frac{1-x^{K+1}}{1-x}\right)$ with $x=\frac{\beta}{g(z_0^{*})}$, 
\begin{align*}
     \CR^{\rm sto}(\greedy,\cD)&\geq\frac{nb_0  +g(z_0^{*})T(1-z_0^{*}) - nz_0^*\sum_{k=1}^K k\left(\frac{\beta}{g(z_0^*)}\right)^k}{nb_0 + \beta T}+ \cO(T^{-1/4})\\
     &\geq\frac{nb_0  +g(z_0^{*})T(1-z_0^{*}) - n\left(\frac{\beta}{g(z_0^{*})-\beta }-\frac{(K+1)\beta^{K+1}}{g(z_0^*)^{K+1}-\beta^{K+1}}\right)}{nb_0 + \beta T}+ \cO(T^{-1/4})
\end{align*}
\end{proof}
\subsection{Proof of \cref{theo:lb_stochastic_k=1_CR}}\label{app:proof_lbstochastickoneCR}
\lbstochastickoneCR*
\begin{proof} 
 $$\underset{T \to +\infty}{\lim} \CR^{\rm sto}(\greedy,\cD)= \frac{g(z_0^{*})(1-z_0^{*})}{\beta}$$
When $K\to \infty$, $z_0^{*}$ satisfies $\sum_{k=0}^{+\infty} z_0^{*} \left(\frac{\beta}{g(z_0^{*})}\right)^{k}=1$, this gives, 
$$\sum_{k=0}^{+\infty} z_0^{*} \left(\frac{\beta}{g(z_0^{*})}\right)^{k}= z_0^{*}\frac{1}{1-\frac{\beta}{g(z_0^{*})}}=1  ~~\Longrightarrow~~ 1-e^{-a(1-z_0^{*})}=\beta$$
 which leads to $z_0^{*}= 1+\frac{ln(1-\beta)}{a}$. 

Thus,
 $$\underset{K,n \to +\infty}{\lim} ~\underset{T \to +\infty}{\lim} \CR^{\rm sto}(\greedy,\cD)= \underset{K,n \to +\infty}{\lim}  \frac{g(z_0^{*})(1-z_0^{*})}{\beta} =1$$

\end{proof}

\end{document}